\newtheorem{remark}{Remark} 
\newtheorem{definition}{Definition}
\newtheorem{assumption}{Assumption}
\newtheorem{conjecture}{Conjecture}
\newtheorem{theorem}{Theorem}  
\newtheorem{lemma}{Lemma}  
\newtheorem{proposition}{Proposition}  
\newtheorem{corollary}{Corollary}  
\newtheorem*{restatement_lem}{\normalfont\bfseries Lemma}
\newtheorem*{restatement_cor}{\normalfont\bfseries Corollary}
\title{Asymptotic behavior for a general class of spreading models}
\author{K.M.D.\ Chan, D.T.\ Crommelin, and M.R.H.\ Mandjes}
\date{\today}
\begin{document}

\maketitle

\begin{abstract}
\noindent Growing literatures on epidemic and rumor dynamics show that infection and information coevolve. We present a unified framework for modeling the spread of infection and information: a general class of interaction-driven fluid-limit models expressed as coupled ODEs. The class includes the SIR epidemic model, the Daley-Kendall rumor model, and many extensions. For this general class, we derive theoretical results: under explicit graph-theoretic conditions, we obtain a classification of asymptotic behavior and motivate a conjecture of exponential decay for vanishing states. When these conditions are violated, the classification can fail, and decay may become non-exponential (e.g., algebraic). In deriving the main result, we establish asymptotic stability and $L^1$-integrability properties for state variables. Alongside these results, we introduce the \textit{dependency graph} that captures outflow dependencies and offers a new angle on the structure of this model class. Finally, we illustrate the results with several examples, including a heterogeneous rumor model and a rumor-dependent SIR model, showing how small changes to the dependency graph can flip asymptotic behavior and reshape epidemic trajectories.
\begin{quotation}

\bigskip

    \small \noindent {\it Keywords}: SIR model, Daley-Kendall model, compartmental model, asymptotic behavior.

\smallskip

    \noindent K.M.D.\ Chan is with Transtrend, Rotterdam, the Netherlands. He is also affiliated with Mathematical Institute, Leiden University, P.O. Box 9512, 2300 RA Leiden, The Netherlands.\url{dannychan62@gmail.com}

    \noindent D.T.\ Crommelin is with Centrum Wiskunde \& Informatica (CWI), Amsterdam, the Netherlands. He is also affiliated with Korteweg-de Vries Institute for Mathematics, University of Amsterdam, Amsterdam, The Netherlands. \url{daan.crommelin@cwi.nl}

    \noindent M.R.H.\ Mandjes is with Mathematical Institute, Leiden University, P.O. Box 9512,
2300 RA Leiden,
The Netherlands. He is also affiliated with Korteweg-de Vries Institute for Mathematics, University of Amsterdam, Amsterdam, The Netherlands; E{\sc urandom}, Eindhoven University of Technology, Eindhoven, The Netherlands; Amsterdam Business School, University of Amsterdam, Amsterdam, The Netherlands. \url{m.r.h.mandjes@math.leidenuniv.nl}

\smallskip

\noindent
{\it Acknowledgments.} 
MM's research has been funded by the NWO Gravitation project N{\sc etworks}, grant number 024.002.003.
\end{quotation}
    
\end{abstract}

\newpage 

\section{Introduction}

Contagion plays a central role in modern societies. Pathogens spread through interpersonal networks, while rumours, opinions, and misinformation cascade across social and mass media — shaping behavior and, in turn, influencing outcomes such as market dynamics and election results.
The spread of pathogens and the spread of rumours are closely intertwined: outbreaks can spark misinformation, while rumours — such as anti-vaccination narratives — can lower vaccination rates and thereby increase transmission.
The COVID-19 crisis highlighted this coupled dynamic, combining a biological pandemic with an `infodemic' of misinformation \cite{WHO2020Infodemic}. In this paper, we develop a unified modeling framework for interacting processes, such as those at play in the spread of pathogens and rumors. We analyze the asymptotic behavior of models in this unified class.

\subsubsection*{Rumours, pathogens and their interdependence}

What do we mean by {\it rumours}? In this paper, we group the spread of intangible content such as rumours, opinions, and (mis)information under the umbrella label \textit{rumour}. In the sociological literature, a rumour is an unverified --- often plausible --- piece of information that spreads widely, typically fueled by uncertainty or anxiety. Transmission occurs via word of mouth and, increasingly, social media; see the survey~\cite{Turenne2018}. Rumour dynamics can take different forms. For example, Crane~\cite{Crane2008} empirically studied rumour diffusion on platforms such as YouTube, showing that some content goes viral while other content spreads more slowly, depending, among other factors, on the content’s characteristics and the network through which it propagates. Another striking example is that false news spreads faster and further than true news~\cite{Vosoughi2018}. These dynamics have consequences across many domains of society. For example, rumours and related false news influence economic behavior --- shaping market dynamics and investment decisions \cite{Kosfeld2005,Bannerjee1993,Follmer1974} --- and, in extreme cases, can lead to serious harm: Bhavnani et al. \cite{Bhavnani2009} document instances in which rumours triggered communal violence. 

Turning to biological contagion, the study of pathogen propagation (epidemiology) is generally regarded as more established and prominent than the study of rumours — a contrast that became especially apparent during the society-disrupting COVID-19 pandemic. So, what is a pathogen? A pathogen is typically a biological agent --- most commonly a virus, bacterium, or parasite --- that infects hosts (e.g., humans, bats, or swine) and spreads via routes such as respiratory aerosols and direct contact \cite{Heymann2015,Mandell2020}. Such pathogens can vary widely in infectiousness, incubation period, and the prevalence of asymptomatic transmission \cite{Gandhi2020,Oran2020}. Their societal impact is substantial: as COVID-19 showed, successive waves and variants strained health-care systems, while interventions disrupted schooling, work, and supply chains \cite{WHO2021COVID,Nalbandian2021}. Beyond COVID-19, pathogens such as SARS (2002–2004), HIV/AIDS, dengue, and seasonal influenza have imposed comparable societal burdens \cite{Lipsitch2003,UNAIDS2021,Guzman2015,Taubenberger2006}.

These two forms of contagion are not independent but closely interlinked. Epidemiological outbreaks can trigger surges of rumours --- for example, about masking, distancing, and vaccination --- which in turn dampen or amplify transmission by shifting protective behaviors \cite{Teslya2022,Lin2020,Centola2010}. Conversely, rumours alleging vaccine harms have reduced measles vaccination coverage, leading to sizeable outbreaks --- for example, in South Wales (2012–2013) \cite{Jansen2003,Wise2013,PHWales2013}. Taken together, these examples highlight the importance of studying the processes jointly rather than in isolation.

\subsubsection*{Mathematical models of the spread of pathogens and rumours}
Mathematical modelling has been central to understanding how pathogens and rumours disseminate. Each field rests on a widely accepted, canonical baseline model: the SIR model for infectious diseases and the Daley-Kendall (DK) model for rumour propagation. In SIR, a population is partitioned into Susceptible, Infectious, and Recovered compartments; transmission arises from contact between susceptible and infectious individuals and recovery leads to immunity \cite{Kermack1927}. By contrast, while sharing a similar transmission mechanism, the DK model, tracks Ignorants (unaware), Spreaders (actively transmitting), and Stiflers (no longer transmitting); here, spreaders convert ignorants, whereas interactions among spreaders/stiflers and spreaders/spreaders extinguish further spreading \cite{DaleyKendall1965}. These baseline models have served as foundations for extensive theoretical development, with numerous extensions incorporating host heterogeneity, explicit contact-network structures, and time-varying parameters; see, for example, \cite{Diekmann1990,Bootsma2024,Barrat2008,PastorSatorrasReview2015,Nekovee2007,Zhu2020,Zhu2021}.

Given the wide range of existing model variants, a natural question arises: can these canonical models — and their extensions — be captured within a single, general mathematical framework? To the best of our knowledge, no such unifying approach has been established before. A unified perspective offers a higher-level view of both rumour and pathogen dynamics. It reveals structural insights — such as shared asymptotic behaviors — that often remain hidden when individual models are studied in isolation \cite{Teslya2022,Lin2020}.
One promising route toward unification lies in identifying what the models have in common. At their core, both SIR and DK are driven by interaction-based transitions: state changes occur through pairwise encounters. In each case, who meets whom — and in which states — determines what happens next. A susceptible individual becomes infected upon contact with someone infectious; an ignorant becomes a spreader after interacting with another spreader. Thus, the interaction process drives the transition process. Contact patterns determine which transitions are possible and how frequently they occur. This interaction-based structure forms a common backbone for SIR, DK, and many of their extensions.

This shared structure suggests a natural next step: to formulate a general, interaction-driven modelling framework that captures the essential dynamics of both SIR- and DK-type systems. In the \textit{fluid limit} --- where the population is assumed to be large enough that demographic stochasticity can be neglected~\cite{Kurtz1981} --- the resulting dynamics are deterministic and can be described by a system of coupled differential equations.
More precisely, the interaction mechanism naturally gives rise to \textit{bilinear (quadratic)} terms, leading to systems of the form:
\[
\dot y_i \;=\; \sum_{j,k}\,\alpha^{i}_{jk}\,y_j y_k,
\]
where \( y_i \equiv y_i(t) \) denotes the (normalized) fraction of individuals in state~\( i \) (e.g., susceptible or ignorant), and the coefficients \( \alpha^{i}_{jk} \in \mathbb{R} \) capture the in- and outflow rates due to pairwise encounters (the formulation also admits linear terms --- e.g., natural recovery --- under a constant-population assumption).
This mathematical structure is in the spirit of several classical frameworks, such as the \textit{Lotka--Volterra equations} for interacting species~\cite{Lotka1925,Volterra1926,Murray2002}, \textit{evolutionary dynamics}~\cite{HofbauerSigmund1998,Nowak2006,Sandholm2010}, and \textit{mass-action chemical reaction networks}~\cite{HornJackson1972,Feinberg2019}, where bimolecular interactions also yield quadratic terms.
The analogy should be considered as structural rather than literal: the models describe different entities and processes, but the shared mathematical form offers both conceptual insights (e.g., around persistence or extinction) and useful analytical tools.

\subsubsection*{Goal \& contributions}
Building on the observed empirical interdependence between pathogen and rumour contagion, and leveraging the advantages of the general interaction-driven framework outlined above, we pursue two main goals.  
(i) We introduce a unified, highly general class of contact-driven models that encompasses SIR, DK, and many of their extensions within a single formulation.  
(ii) We establish structural results for this class of models. Specifically, our main theorem provides a classification criterion: under specific (acyclicity-type) conditions, the underlying graph-structure alone determines, for each state, whether it goes extinct or persists in the long run --- that is, whether \(\lim_{t \to \infty} y_i(t)\) is positive or zero, where, as before, \(y_i(t)\) denotes the fraction of the population in state \(i\) at time \(t\).

In developing the classification rule, we proceed in stages. First, under suitable graph-theoretic conditions, we show that trajectories converge to equilibrium, in that \(\dot y_i(t)\to 0\) and \(y_i(t) \to c_i\), as \(t\to\infty\), for every state \(i\), with \(c_i\) a constant. Next, imposing additional acyclicity-type conditions on the interaction graph, we prove that any vanishing state --- i.e., one with \(\lim_{t\to\infty} y_i(t)=0\) --- is \(L^1\)-integrable, \(\int_0^\infty y_i(t)\,dt<\infty\). This integrability yields our main result: a criterion for persistence versus extinction that depends only on graph structure. The same analysis motivates a conjecture that, under these conditions, every vanishing state decays exponentially. Finally, we show that when the requisite graph-theoretic conditions are not met, the classification need not hold, and vanishing states may decay non-exponentially (e.g., algebraically).

To formalize the graph-theoretic conditions described above, we introduce a novel concept: the {\it dependency graph}. This graph captures dependencies between states by indicating which states influence others’ outflow rates. This representation forms the basis for the acyclicity-type conditions used in the classification rule.

Through a series of examples, we illustrate how the theory applies in practice. We begin by showing that if the necessary graph-theoretic conditions are not met, both exponential decay and the persistence/vanishing classification can fail. Next, in order to exemplify how models in existing literature fit within our framework, we apply it to the rumour model of Chen \& Wang (2020) \cite{Chen2020}. Our results immediately imply that the Ignorants group cannot, by design, be fully converted into spreaders, and that the dynamics converge to equilibrium: for each state \(i\), \(\dot y_i(t)\to 0\) and \(y_i(t)\to c_i\) as \(t\to\infty\).
 We then examine a heterogeneous rumour model to show how small changes in graph structure can produce large shifts in long-term behavior, offering insight into how specific characteristics of individual-level spreading behavior alter macro-level outcomes. Finally, we consider a hybrid model combining the SIR and DK frameworks, uncovering intuitively surprising relationships that do not occur in the standard SIR setting.

\subsubsection*{Organization of paper}
The remainder of the paper is organized as follows. Section~\ref{sec:model} introduces the model and the notion of a dependency graph. Section~\ref{sec:results} presents our theoretical results. Section~\ref{sec:examples} provides examples that illustrate these results in practice. Finally, Section~\ref{sec:discussion} highlights key considerations and outlines directions for future work.

\section{Model}\label{sec:model}
In this section we first provide a set of definitions that play a pivotal role in our model and analysis. We then proceed by detailing our interaction-driven dispersion model.

\subsection{A few graph-theoretic definitions}
We begin by recalling a few graph-theoretic concepts, primarily for completeness, as these notions are frequently employed in the literature. Additionally, the definitions provided will introduce some of the notation used throughout this paper.

\begin{definition}
    A \textbf{directed graph} $(V,E)$ is a set of $N$ vertices $V = {v_i: i=1, \ldots,N}$ and a set of directed edges $E \subseteq ({(v_k,v_l)}: v_k, v_l \in V)$, where ${(v_k, v_l)} \in E$ if and only if there is a directed edge from $v_k$ to $v_l$, which we denote by $v_k \to v_l$.
\end{definition}

\begin{definition}
    A \textbf{subgraph} of a directed graph $(V,E)$ is a directed graph $(U,K)$ such that $U \subseteq V$ and $K \subseteq E$. 
\end{definition}

\begin{definition} \label{def:sink_source}
In a directed graph $(V,E)$, for an edge $(v_k, v_l) \in E$, $v_k$ is called a \textbf{predecessor} of $v_l$, and $v_l$ is called a \textbf{successor} of $v_k$. A vertex is called a \textbf{sink} if it only has predecessors. A vertex is a called a \textbf{source} if it only has successors.
\end{definition}
\begin{definition}\label{def:cycle}
    In a directed graph $(V,E)$, a \textbf{path} exists from vertex $u$ to vertex $v$, if there exists a set of vertices $k_1, \ldots k_n$ such that $u \to k_1 \to k_2 \to \ldots k_n \to v$. A \textbf{cycle} is a path $u \to k_1 \to k_2 \to \ldots k_n \to v$ where $u = v$. A cycle $u \to u$ is called a \textbf{self-loop}.
\end{definition}

\begin{definition}\label{def:dag}
    A directed graph is a \textbf{Directed Acyclic Graph} (DAG) if it has no cycles. 
\end{definition}

\subsection{Interaction-driven dispersion model}

In this subsection, we introduce our interaction-driven dispersion model, where \textit{dispersion} here captures the spread caused by interactions. We will later show that this model encompasses well-known frameworks such as the SIR epidemiological model, rumor-spreading models like the DK model, and many of their extensions. Importantly, all theoretical results derived within our general framework apply directly to these specific cases. Because the model unifies both epidemiological and rumor dynamics, it serves as a powerful tool for studying the {\it interaction} between these two distinct processes (see Section \ref{subsect:sir_dk_hybrid} for an example).

\subsubsection{Rationale behind the model}

Our objective is to develop a foundational fluid-limit framework that allows for a deeper examination of various relevant phenomena, within both mathematical epidemiology and rumor dynamics. Instead of analyzing specific models in isolation, we aim to uncover structural properties of this general class of fluid-limit models under various conditions.

For tractability, we adopt two common simplifying assumptions. First, we assume that the observation time window is sufficiently short to rule out significant changes in population size, effectively imposing a conservation law. Second, we assume uniform mixing within the population.

To ensure the framework's relevance in both mathematical epidemiology and rumor dynamics, we require the following essential features:
\begin{itemize}
    \item [$\circ$]\textit{Interaction-driven state transitions:} changes in the state of individuals within the system are driven solely by their interactions.
    \item [$\circ$]\textit{Natural outflow mechanism:} the model incorporates mechanisms for individuals to exit their current state over time, such as recovery or waning susceptibility.
\end{itemize}
As the subsequent subsections demonstrate, a pure interaction-driven dispersion model satisfies these requirements.

\subsubsection{States, conservation law and initial conditions}
We throughout assume a system comprising of $N\in{\mathbb N}$ states, and we denote the fraction of agents in each state $i=1,\ldots,N$ at time $t\ge 0$ by $y_i(t)\in(0,1)$. We assume a conservation law on the population size: $\sum_{1 \leq i \leq N} y_i(t) = 1$ for any $t$. Initial conditions are assumed to be $y_i(0)\in(0,1)$. 

\subsubsection{Transition mechanism described by differential equations}\label{sec:trans_mech_de}
In this subsection we set up a system of coupled differential equations that describe our interaction-driven dispersion model. 
The probability of any agent interacting with an agent in state $i$ is equal to $y_i$ due to the assumption of uniform mixing. Let $\beta^i_{ij} \geq 0$ capture both the (homogeneous) contact rate and the (state-dependent) probability of a state $i$-agent changing state due to interaction with a state-$j$ agent (where it is stressed that the order of the indices matters). For a state $i$-agent to leave state $i$ it must interact with some other agent. Hence, the outflow of state $i$, written as $O_i(t)$, is given by the following expression:
\begin{equation}\label{eq:outflow}
    O_i(t) := y_i \sum_{1 \leq m \leq N} \beta^i_{im} y_m.
\end{equation} 
Due to the conservation law, which stipulates an unchanging total population size, each component $\beta^i_{im} y_i y_m$ representing an outflow from state $i$ must correspond to an equivalent inflow into states other than $i$. We define $\beta^k_{im} \geq 0$ (where $k \neq i$) to represent both the homogeneous contact rate and the state-dependent probability of an agent in state $i$ transitioning to state $k$ as a result of interaction with an agent in state $m$. The conservation law is therefore expressed as
\begin{equation}\label{eq:cons_law}
    \beta^i_{im} = \sum_{k \ne i} \beta^k_{im}.
\end{equation}
The inflow into state $i$ consists of all terms originating from interactions that cause agents to transition into state $i$ from any other state. This inflow is therefore defined as
\begin{equation}\label{eq:inflow}
I_i(t) :=  \sum_{\substack{1 \leq n,k \leq N, \\n \ne i}} \beta^i_{nk} y_n y_k.
\end{equation}
Hence, we obtain the following governing set of coupled differential equations for the interaction-driven dispersion model:
\begin{equation} \label{eq:general_1}
    \frac{dy_i}{dt} = I_i(t) - O_i(t).
\end{equation}
The outflow $O_i(t)$ represents agents leaving state $i$, while the inflow $I_i(t)$ corresponds to agents entering state $i$. Although the model may initially appear to be purely interaction-driven, it can, in fact, accommodate natural recovery mechanisms, as demonstrated, among others, in the example of the SIR and DK models in Section~\ref{subsubsect:sir_daley_DE}.

\subsubsection{Transition mechanism described by graphs}\label{subsubsect:graph_repr}

While the transmission mechanism was rigorously defined in Section~\ref{sec:trans_mech_de}, adopting a more visual perspective helps reveal its underlying structure. Graph representations offer a simplified yet insightful way to uncover this structure. In this section, we introduce two types of graphs that capture different layers of the transition dynamics.

The first type of graph illustrates the possible state transitions. A successor of a state indicates that an agent in the current state may transition to the successor state. Conversely, a predecessor of a state implies that an agent in the current state could have originated from the predecessor state. This is formalized as follows.

\begin{definition}\label{def:trans_graph}
The \textbf{transition graph} of an interaction-driven dispersion model is a directed graph $(V, E)$. The set of vertices $V = \{1, \ldots, N\}$ corresponds to the states of the model. The set of directed edges $E$ contains an edge $(u,v) \in V \times V$ with $u \neq v$ if and only if there exists some state $k \in V$ such that $\beta^v_{uk} > 0$. In this context, a successor $v$ of a vertex $u$ is specifically referred to as a \textbf{transition successor}, implying that an agent in state $u$ can transition to state $v$ through interaction. Conversely, a predecessor $u$ of a vertex $v$ is a \textbf{transition predecessor}, indicating that an agent currently in state $v$ could have originated from state $u$.
\end{definition}

It is important to note that, by its definition, the transition graph excludes self-loops. Cycles, however, may be present. Operationally, we build the transition graph by inspecting the inflow \(I_i\) in \eqref{eq:inflow}: draw a directed edge \(n\to i\) whenever there exists an interaction partner \(k\) with \(\beta_{nk}^{\,i}>0\).

The second type of graph reflects the specific states with which an interaction must occur for a state to transition to a transition successor. It offers crucial insight into the mechanisms that drive the transitions represented in the transition graph.

\begin{definition}\label{def:dep_graph}
The (outflow) \textbf{dependency graph} of an interaction-driven dispersion model is a directed graph $(V, E)$. The set of vertices $V = \{1, \ldots, N\}$ corresponds to the states of the model. The set of directed edges $E$ contains an edge $(u,v) \in V \times V$ if and only if $\beta^u_{uv} > 0$. An edge $(u,v)$ indicates that the outflow from state $u$ is dependent on interactions with agents in state $v$. Successors and predecessors in the dependency graph are specifically referred to as \textbf{dependency successors} and \textbf{dependency predecessors}.
\end{definition}
Analogously, the dependency graph is obtained by inspecting the outflow \(O_i\) in \eqref{eq:outflow}: include a directed edge \(i\to m\) whenever \(\beta^{\,i}_{im}>0\). (Self-loops \(i\to i\) occur when \(\beta^{\,i}_{ii}>0\).)
The combined use of these graph representations facilitates a structural analysis of the interaction-driven dispersion model that we introduced. They will play a pivotal role in Section \ref{sec:results}, where our theoretical findings are stated and proven.

\begin{remark}{\rm
    The interaction-driven dispersion model introduced here admits a compact matrix formulation, which streamlines bookkeeping for large systems and facilitates numerical integration of the model. The full matrix specification is given in Appendix~\ref{app:Bmatrix}.$\hfill\Diamond$}
\end{remark}

\subsection{Example: SIR and Daley-Kendall model}\label{subsect:example_sir_dk}

In this subsection, we present an illustrative example with a threefold aim. First, we show that the well-known SIR and DK models fall within the class of systems described by the interaction-driven dispersion framework. Second, we demonstrate how the natural recovery mechanism in the SIR model can be incorporated into this model. Third, we provide the corresponding graph representations for these specific cases.

\subsubsection{SIR and DK as special cases}\label{subsubsect:sir_daley_DE}
We treat here the (fluid limits of) the SIR model and the DK model. We show how both models can be written in the form of (\ref{eq:general_1}), hence being special cases of our interaction-driven dispersion model. Throughout, we denote by \(y_1, y_2, y_3\) the fractions of susceptible, infectious, and recovered agents in the SIR model; analogously, \(y_1^{*}, y_2^{*}, y_3^{*}\) denote the fractions of ignorants, spreaders, and stiflers in the DK model.

\begin{itemize}
    \item[$\circ$]
The system of coupled differential equations for the SIR model is, with the labeling mentioned above,
\begin{align}
    \frac{dy_1}{dt} &= -\beta y_1 y_2, \label{eq:y1} \\ 
    \frac{dy_2}{dt} &= \beta y_1 y_2 - \gamma y_2, \label{eq:y2} \\
    \frac{dy_3}{dt} &= \gamma y_2 \label{eq:y3},
\end{align}
where $\beta > 0$ is the infection rate, and $\gamma > 0$ is the recovery rate. 

We now show how this set of coupled differential equations aligns with the framework of~(\ref{eq:general_1}), specifically illustrating the integration of the natural recovery mechanism. For state 1, the differential equation, as shown in~(\ref{eq:y1}), consists solely of the outflow component $O_1$ that was defined in~(\ref{eq:outflow}) . This outflow is triggered exclusively by interactions with state-2 agents. Thus, we have $\beta^1_{12} > 0$ (specifically, $\beta^1_{12} = \beta$). The term $y_1 y_2$ appears elsewhere only in the inflow to state 2, as seen in~(\ref{eq:y2}), where its parameter is $\beta^2_{12} = \beta > 0$. The conservation law, as described by~(\ref{eq:cons_law}), is indeed upheld because $\beta^1_{12} = \beta^2_{12} = \beta$. 

The term for natural recovery, namely $\gamma y_2$ in~(\ref{eq:y2}), might initially seem inconsistent with the outflow expression defined in~(\ref{eq:outflow}). However, we can circumvent this by leveraging the conservation law, $y_1 + y_2 + y_3 = 1$, to rewrite it as
\begin{equation}\label{eq:sir_natur_recov}
    \gamma y_2 = \gamma (y_1 + y_2 + y_3) y_2.
\end{equation}
This reformulation allows us to interpret the outflow from state 2 as $y_2 (\beta^2_{21} y_1 + \beta^2_{22} y_2 + \beta^2_{23} y_3)$, where $\beta^2_{21} = \beta^2_{22} = \beta^2_{23} = \gamma$. Under the uniform-mixing assumption, one may read \(\gamma\,y_2(y_1+y_2+y_3)\) as a contact-like outflow in which any
encounter of a state-2 agent carries the same probability of transitioning to state 3. This is a modeling convenience: mechanistically, the term captures spontaneous (natural) recovery rather than an interaction-driven change.\footnote{\footnotesize The natural recovery rate, $\gamma$, may exceed the homogeneous contact rate embedded in the $\beta$ parameters. Nevertheless, by scaling the contact rate by any positive constant and inversely scaling the transition probability, the system's properties remain invariant. This allows any $\gamma > 0$ to be consistently accounted for in the interpretation of the $\beta$ parameters.}
To complete this demonstration, we conclude that $\beta^3_{21} = \beta^3_{22} = \beta^3_{23} = \gamma$, as shown by~(\ref{eq:y3}).

\item[$\circ$]
The system of coupled differential equations for the DK model is given by:
\begin{align}
    \frac{dy_1^*}{dt} &= -\theta y_1^* y_2^*, \\
    \frac{dy_2^*}{dt} &= \theta y_1^* y_2^* - \alpha y_2^* (y_2^* + y_3^*), \label{eq:y*2}\\
    \frac{dy_3^*}{dt} &= \alpha y_2^* (y_2^* + y_3^*),
\end{align}
with $y_1^*$, $y_2^*$, and $y_3^*$ having the meaning defined above, $\theta$ encapsulating the rate at which Ignorants become Spreaders upon contact with Spreaders, and $\alpha$ the rate at which Spreaders become Stiflers due to interactions with other Spreaders or Stiflers.

\end{itemize}

The key distinction between the SIR and DK models is evident in the outflow dynamics of $y_2$. The SIR model allows a state-2 agent to transition to state 3 upon contact with a state-1 agent, a mechanism explicitly excluded in the DK model; see (\ref{eq:y2}) and (\ref{eq:y*2}).

\subsubsection{Graph representation and matrix formulation of SIR and DK}
We conclude this section by demonstrating how the SIR and DK models can be represented using the graph representation of our interaction-driven dispersion model, as introduced in Section \ref{subsubsect:graph_repr}.

For the SIR model we first identify its transition graph, using Definition \ref{def:trans_graph}, focusing solely on the inflow $I_i$. State 2 is a transition successor of state 1, given that $\beta^2_{12} > 0$ (see Section \ref{subsubsect:sir_daley_DE}). This implies an edge $(1,2)$ in the transition graph. Furthermore, since $\beta^3_{2j} > 0$ for $j =1,2,3$, state 3 is a transition successor of state 2. This corresponds to an edge $(2,3)$. The transition graph of the SIR model is thus depicted by Figure \ref{fig:transition_graph_sir_dk}. Through a similar line of reasoning, we conclude that the transition graph for the DK model is identical to that of the SIR model.

For the SIR model's dependency graph, as introduced in Definition \ref{def:dep_graph}, we identify the following dependencies by analyzing the outflow $O_i$. An edge $(1,2)$ exists because $\beta^1_{12} > 0$ (see Section \ref{subsubsect:sir_daley_DE}), indicating that outflow from State 1 depends on interactions with State 2 agents. Furthermore, since $\beta^2_{21} = \beta^2_{22} = \beta^2_{23} = \gamma > 0$, state 2 serves as a dependency predecessor for all states in the system (i.e., edges $(2,1), (2,2), (2,3)$ exist), highlighting that its outflow depends on interactions with agents from any state, including itself. This illustrates the impact of the natural recovery mechanism within our interaction-driven dispersion model. The dependency graph of the SIR model is depicted in Figure \ref{fig:dependency_graph_sir}. By an identical line of reasoning, the DK model's dependency graph is shown in Figure \ref{fig:dependency_graph_dk}. 

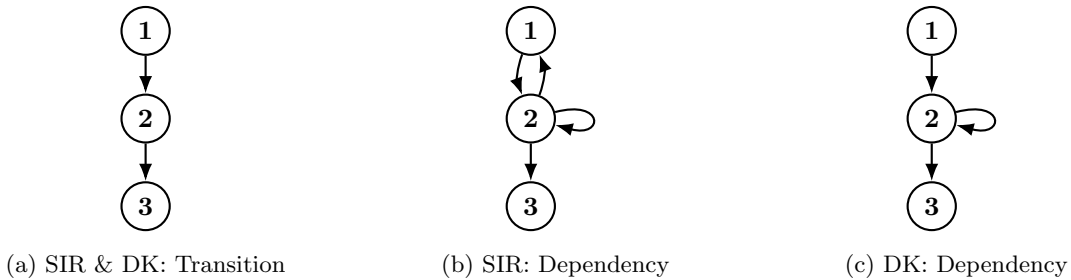
\begin{figure}[H]
    \centering
    \begin{subfigure}[b]{0.3\textwidth} 
        \centering
        \begin{tikzpicture}[
            mynode/.style={circle, draw, thick, minimum size=0.2cm, font=\bfseries}, 
            myarrow/.style={-Latex, thick, draw=black},
            bidir_arrow_1to2/.style={myarrow, bend right=20},
            bidir_arrow_2to1/.style={myarrow, bend right=20},
            self_loop_arrow/.style={myarrow, loop right, looseness=12}            
        ]
            \node[mynode] (N1) {1};
            \node[mynode, below=0.5cm of N1] (N2) {2};
            \node[mynode, below=0.5cm of N2] (N3) {3};

            \draw[myarrow] (N1) -- (N2); 
            \draw[myarrow] (N2) -- (N3); 
        \end{tikzpicture}
        \caption{SIR \& DK: Transition} 
        \label{fig:transition_graph_sir_dk}
    \end{subfigure}
    \hfill
    \begin{subfigure}[b]{0.3\textwidth}
        \centering
        \begin{tikzpicture}[
            mynode/.style={circle, draw, thick, minimum size=0.2cm, font=\bfseries}, 
            myarrow/.style={-Latex, thick, draw=black},
            bidir_arrow_1to2/.style={myarrow, bend right=20},
            bidir_arrow_2to1/.style={myarrow, bend right=20},
            self_loop_arrow/.style={myarrow, loop right, looseness=10}
        ]

            \node[mynode] (N1) {1};
            \node[mynode, below=0.5cm of N1] (N2) {2};
            \node[mynode, below=0.5cm of N2] (N3) {3};

            \draw[bidir_arrow_1to2] (N1) to (N2);
            \draw[bidir_arrow_2to1] (N2) to (N1);
            \draw[self_loop_arrow] (N2) to (N2);
            \draw[myarrow] (N2) -- (N3);

        \end{tikzpicture}
        \caption{SIR: Dependency} 
        \label{fig:dependency_graph_sir}
    \end{subfigure}
    \hfill
    \begin{subfigure}[b]{0.3\textwidth}
        \centering
        \begin{tikzpicture}[
            mynode/.style={circle, draw, thick, minimum size=0.2cm, font=\bfseries}, 
            myarrow/.style={-Latex, thick, draw=black},
            bidir_arrow_1to2/.style={myarrow, bend right=20},
            bidir_arrow_2to1/.style={myarrow, bend right=20},
            self_loop_arrow/.style={myarrow, loop right, looseness=10}
        ]

            \node[mynode] (N1) {1};
            \node[mynode, below=0.5cm of N1] (N2) {2};
            \node[mynode, below=0.5cm of N2] (N3) {3};

            \draw[myarrow] (N1) -- (N2);
            \draw[self_loop_arrow] (N2) to (N2);
            \draw[myarrow] (N2) -- (N3);

        \end{tikzpicture}
        \caption{DK: Dependency} 
        \label{fig:dependency_graph_dk}
    \end{subfigure}
    \caption{Transition and Dependency graph representation of the SIR and DK models} 
    \label{fig:graphs_sir}
\end{figure}
Figure \ref{fig:graphs_sir} reveals a critical distinction between the SIR and DK models: the presence or absence of the edge $(2,1)$ in their dependency graphs. This is surprising because, while the SIR model incorporates a natural recovery mechanism -- a process typically considered independent of specific interactions -- this graphical representation depicts it as an interaction-dependent outflow involving state-1 agents. Such a precise mapping within our unified framework highlights that these two models, despite their differing recovery conceptualizations, are fundamentally more similar than commonly perceived. We have provided the matrix formulation of the SIR and DK models in Appendix~\ref{app:Bmatrix}.

\section{Theoretical results}\label{sec:results}
This section, presenting our theoretical results, is organized into three parts. First, we establish that the system (\ref{eq:general_1}) is {\it asymptotically stable}  — that is, all state variables converge to a limit over time and at the same time its derivatives converge to zero   — under an explicit assumption on the structure of the underlying interaction-driven dispersion model. 
Second, we establish our main theorem, which --- under suitable conditions --- characterizes long-term behavior by providing a precise relationship between vanishing states (whose state variable converges to zero) and persistent states (whose state variable converges to a strictly positive limit). This result yields a complete classification of the asymptotic outcomes for all states in the system. 
In the proof, a crucial role is played by the $L^1$-integrability of the state variables — more specifically, by the condition whether or not $\int_{0}^{\infty} y_i(t)\,dt$ is finite.
Third, building on these findings, we discuss informally that all vanishing states decay to zero at an exponential rate. Moreover, we (formally) demonstrate that if the aforementioned suitable conditions are not satisfied, then this exponential decay may no longer hold.

\subsection{Asymptotic stability}
The main result of this subsection is Theorem \ref{MR}, which states the asymptotic stability of the interaction-driven dispersion model. We prove this under the condition that the underlying transition graph, as defined in Section \ref{subsubsect:graph_repr}, must be a DAG (see Definition \ref{def:dag}).

\begin{assumption}[DAG assumption]\label{ass:trans_dag}
    The underlying transition graph of the interaction-driven dispersion model is a DAG.
\end{assumption}

The interpretation of Assumption \ref{ass:trans_dag} is that the transition graph must preclude the existence of cycles. Intuitively, cycles could enable indefinite circulation of flows within the system, potentially hindering the convergence of the state variables. In fact, a DAG structure implies the existence of at least one source and one sink (see Definition \ref{def:sink_source}), ensuring a unidirectional flow of agents. This property is formalized by the evident Lemma \ref{lemma:strongly_connected}, with its proof being provided in Appendix~\ref{app:proofs} for completeness.

\begin{lemma}[DAG has a sink and a source]\label{lemma:strongly_connected}
    A directed graph $(V,E)$ that is a DAG has at least one sink and at least one source.
\end{lemma}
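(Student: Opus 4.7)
The plan is to argue by contradiction, constructing an arbitrarily long directed walk and then invoking finiteness of $V$ (which has $N$ vertices) to extract a repeated vertex, hence a cycle.

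First I would handle the sink case. Suppose for contradiction that $(V,E)$ has no sink. By Definition~\ref{def:sink_source}, this means every vertex has at least one successor. Pick any $v_0 \in V$; inductively, having chosen $v_j$, use the no-sink assumption to pick a successor $v_{j+1}$ with $v_j \to v_{j+1}$. This produces an infinite sequence $v_0, v_1, v_2, \ldots$ of vertices in the finite set $V$.

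Next I would apply the pigeonhole principle to the first $N+1$ terms of this sequence: among $v_0, v_1, \ldots, v_N$, two must coincide, say $v_p = v_q$ with $p < q$. Then the concatenation $v_p \to v_{p+1} \to \cdots \to v_q$ is, by Definition~\ref{def:cycle}, a cycle (possibly a self-loop if $q = p+1$ and $v_p = v_{p+1}$). This contradicts the DAG assumption (Definition~\ref{def:dag}).

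The source case is entirely symmetric: assume no source, so every vertex has at least one predecessor, and construct a backward sequence $v_0, v_{-1}, v_{-2}, \ldots$ with $v_{-j-1} \to v_{-j}$ for each $j \geq 0$. The same pigeonhole argument on the first $N+1$ terms produces a cycle, again contradicting acyclicity. I do not foresee any real obstacle here — the argument is essentially a standard finite-pigeonhole cycle extraction, and the only mild care needed is to note that the contradiction covers the self-loop case (cycles of length one) uniformly, since Definition~\ref{def:cycle} explicitly allows self-loops as cycles.
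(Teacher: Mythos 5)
Your proposal is correct and follows essentially the same route as the paper: assume no sink (resp.\ no source), build a forward (resp.\ backward) walk using the fact that every vertex has a successor (resp.\ predecessor), and use finiteness of $V$ to find a repeated vertex and hence a cycle, contradicting acyclicity. Your write-up is in fact slightly more careful than the paper's, which phrases the same pigeonhole step informally as ``it takes at most $N$ repetitions before we encounter a vertex that is encountered before.''
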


A prerequisite for proving the main theorem of this section is to demonstrate that the state fractions, $y_i$, remain confined within $[0,1]$. The boundedness, though seemingly obvious, requires a non-trivial formal proof. Lemma \ref{lemma:bounded_y} establishes the property for our class of interaction-driven dispersion models. The proof is provided in Appendix~\ref{app:proofs}.

\begin{lemma}[Confinement of state variables]\label{lemma:bounded_y}
    In the interaction-driven dispersion model we have $0 < y_i(t) < 1$, for $i = 1, \ldots, N$ and for any $t \geq 0$. 
\end{lemma}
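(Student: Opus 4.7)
The goal is to show that on every time horizon the trajectory stays strictly inside the open simplex $\{y\in\mathbb{R}^N : y_j>0, \sum_j y_j=1\}$. The plan splits naturally into three layers: (i) the sum is conserved, (ii) the closed simplex is positively invariant, (iii) strict positivity is obtained by a Gr\"onwall estimate, after which the strict upper bound follows for free.

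First I would verify that the conservation law is preserved by the flow. Differentiating $\sum_i y_i$ and plugging in \eqref{eq:general_1}, I rewrite $\sum_i O_i$ using the identity $\beta^{i}_{im}=\sum_{k\ne i}\beta^{k}_{im}$ from \eqref{eq:cons_law}, then re-index to identify it with $\sum_i I_i$. This gives $\sum_i \dot y_i\equiv 0$, hence $\sum_i y_i(t)=\sum_i y_i(0)=1$ for as long as the solution exists. Combined with standard local existence (the right-hand side is polynomial, so Lipschitz on bounded sets), this pins trajectories to the affine hyperplane $\{\sum_j y_j = 1\}$.

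Next I would show that the closed simplex $\Delta=\{y:y_j\ge 0,\ \sum_j y_j=1\}$ is positively invariant. The key observation is that the inflow term $I_i$ defined in \eqref{eq:inflow} is a sum of products $\beta^{i}_{nk}y_ny_k$ with non-negative coefficients, while the outflow $O_i$ in \eqref{eq:outflow} carries $y_i$ as an explicit factor. Therefore, on the face $\{y_i=0\}\cap\Delta$ one has
\[
\dot y_i\;=\;I_i\;\ge\;0,
\]
so the vector field does not point outward through any face of $\Delta$. By a standard invariance argument (e.g.\ Nagumo's tangent condition, or a direct first-exit-time contradiction), trajectories starting in $\Delta$ remain in $\Delta$ for all $t\ge 0$; in particular, $y_j(t)\in[0,1]$ for every $j$ and all $t\ge 0$. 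As a byproduct, this yields global existence.

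Finally I would upgrade non-negativity to strict positivity via Gr\"onwall. Using $0\le y_m(t)\le 1$ from the previous step, the constant $C_i:=\sum_m \beta^{i}_{im}$ is finite and satisfies $\sum_m \beta^{i}_{im}y_m(t)\le C_i$ for all $t$. From \eqref{eq:general_1} and the non-negativity of $I_i$,
\[
\dot y_i(t)\;\ge\;-\,y_i(t)\sum_m \beta^{i}_{im}y_m(t)\;\ge\;-\,C_i\,y_i(t),
\]
so $y_i(t)\ge y_i(0)\,e^{-C_i t}>0$ since $y_i(0)\in(0,1)$ by assumption. The strict upper bound is then automatic: using $\sum_j y_j(t)=1$ and $y_j(t)>0$ for every $j\ne i$, we get $y_i(t)=1-\sum_{j\ne i}y_j(t)<1$.

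The main conceptual obstacle is the apparent circularity between boundedness and positivity: Gr\"onwall's inequality needs an a priori bound on $\sum_m\beta^{i}_{im}y_m$, which is itself only available once we know $y_m\in[0,1]$. The resolution is the two-stage bootstrap above: first use the sign structure of $I_i$ to get invariance of the closed simplex (weak positivity plus upper bound $\le 1$), and only then invoke Gr\"onwall to push the weak inequality $y_i\ge 0$ to the strict inequality $y_i>0$. Everything else is mechanical, but one has to be careful to order the steps correctly and to confirm that the conservation identity \eqref{eq:cons_law} is used in exactly the right place when checking $\sum_i\dot y_i=0$.
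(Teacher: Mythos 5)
Your proof is correct, and its engine coincides with the paper's: the comparison estimate $\dot y_i \ge -y_i\sum_m \beta^i_{im}y_m \ge -C_i\,y_i$ with $C_i=\sum_m\beta^i_{im}$, valid once $0\le y_m\le 1$ is known, integrates to the strictly positive exponential lower bound $y_i(t)\ge y_i(0)e^{-C_i t}$, and the strict upper bound then follows from the conservation law exactly as you say. Where you differ is in how the a priori bound $y_m\le 1$ is secured before Gr\"onwall can be invoked. The paper runs a single first-exit-time contradiction: it sets $t_0$ to be the infimum of times at which some coordinate is $\le 0$, notes that on $[0,t_0)$ all coordinates lie in $(0,1)$ (taking $\sum_j y_j\equiv 1$ as a modeling assumption rather than re-deriving it), applies the exponential comparison on that interval, and concludes $y_k(t_0)>0$, a contradiction. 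You instead insert an intermediate layer --- forward invariance of the closed simplex via Nagumo's subtangency condition, using that $\dot y_i=I_i\ge 0$ on the face $\{y_i=0\}$ --- to obtain the weak bounds $y_j\in[0,1]$ globally, and only then apply Gr\"onwall. Both resolve the same bootstrapping issue; your version is more modular and makes explicit two things the paper leaves implicit (that \eqref{eq:cons_law} forces $\sum_i\dot y_i=0$, and global existence), at the cost of invoking an invariance theorem, whereas the paper's argument is self-contained. One small caution: your parenthetical alternative of a ``direct first-exit-time contradiction'' for step (ii) is not immediate on its own, because $\dot y_i=I_i$ may vanish identically on the face $\{y_i=0\}$, so the sign of the derivative at the exit time yields no contradiction; making that route rigorous requires precisely the exponential comparison (which is what the paper does) or the Lipschitz limiting argument hidden inside Nagumo's theorem, so you should rely on the quasipositivity/Nagumo citation rather than on the naive exit-time argument.
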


A crucial consequence of the conservation law, which informally entails that no agents are added to the population from external sources, is that the inflow to any given state must always be less than or equal to the total outflow of its predecessors. This property, while intuitively straightforward, proves to be a critical inequality for deriving our theoretical results. The proof is provided in Appendix~\ref{app:proofs}. We denote by ${\mathscr P}_i$ the set of transition predecessors of state $i$.

\begin{lemma}[Inflow-outflow inequality]\label{lem:I_leq_O} 
 The inflow to state $i$ is less than or equal to the sum of the outflows from its transition predecessors:
\[ I_i(t) \leq \sum_{k \in {\mathscr P}_i} O_k (t). \]
\end{lemma}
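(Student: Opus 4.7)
The plan is to reduce the inequality to a pointwise comparison between the transition coefficients, using only the conservation law \eqref{eq:cons_law} and nonnegativity of the $\beta$'s. First, I would rewrite $I_i(t)$ so that the outer sum runs over $\mathscr{P}_i$ rather than all $n\ne i$. By Definition~\ref{def:trans_graph}, a state $n \ne i$ belongs to $\mathscr{P}_i$ iff there exists some $k$ with $\beta^{i}_{nk}>0$; therefore, for every $n \notin \mathscr{P}_i \cup \{i\}$, all coefficients $\beta^{i}_{nk}$ vanish and these indices contribute nothing to \eqref{eq:inflow}. This yields
\[
I_i(t) \;=\; \sum_{n \in \mathscr{P}_i} y_n \sum_{k=1}^{N} \beta^{i}_{nk}\, y_k.
\]

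Next, I would apply the conservation law \eqref{eq:cons_law} to each fixed $n \in \mathscr{P}_i$. Since $n \ne i$, the index $i$ appears in the sum $\sum_{j\ne n}\beta^{j}_{nk} = \beta^{n}_{nk}$, and all other summands are $\ge 0$. Consequently, $\beta^{i}_{nk}\le \beta^{n}_{nk}$ for every $k$. Multiplying by the nonnegative factor $y_n y_k$ (nonnegativity of $y_k$ follows from Lemma~\ref{lemma:bounded_y}) and summing over $k$ gives the pointwise bound
\[
y_n \sum_{k=1}^{N} \beta^{i}_{nk}\, y_k \;\le\; y_n \sum_{k=1}^{N} \beta^{n}_{nk}\, y_k \;=\; O_n(t),
\]
where the last equality is exactly \eqref{eq:outflow}.

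Summing over $n \in \mathscr{P}_i$ combines the two displays into $I_i(t) \le \sum_{n\in\mathscr{P}_i} O_n(t)$, which is the claim. There is no real obstacle: the argument is pure bookkeeping, and the only point requiring care is the initial restriction of the inflow sum to $\mathscr{P}_i$, which is why the conservation law must be applied with $n$ (a transition predecessor) rather than with $i$ playing the role of the outflowing state.
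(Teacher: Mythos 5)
Your proof is correct and follows essentially the same route as the paper's: decompose $I_i$ into the contributions from each transition predecessor, use the conservation law \eqref{eq:cons_law} together with nonnegativity of the $\beta$'s to bound $\beta^i_{nk}\le\beta^n_{nk}$, and conclude that each predecessor's contribution is dominated by that predecessor's outflow. Your write-up is, if anything, slightly more careful about why the inflow sum restricts to $\mathscr{P}_i$.
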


The following main theorem demonstrates that the system ultimately ceases its dynamic change, reaching an asymptotically stable state, provided its underlying transition graph is acyclic.

\begin{theorem}[Asymptotic stability]\label{MR}
   Under Assumption \ref{ass:trans_dag}, the system reaches a steady state:  for each state $i \in \{1, \ldots, N\}$, the state variable $y_i(t)$ converges to a constant value $y_i^* \in [0,1]$ as $t \to \infty$. Moreover, 
    \[
        \lim_{t \to \infty} y_i(t) = y_i^*, \quad \lim_{t \to \infty} \frac{dy_i}{dt} = 0, \quad \text{and} \quad \lim_{t \to \infty} O_i(t) = \lim_{t \to \infty} I_i(t) = 0
    \]
\end{theorem}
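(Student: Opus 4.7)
My strategy is to induct over a topological ordering of the transition graph (which exists because Assumption~\ref{ass:trans_dag} makes it a DAG), proving for each state $i$ the quartet: $y_i$ converges, $O_i \in L^1(0,\infty)$, $O_i(t)\to 0$, and $I_i(t)\to 0$. The engine that turns $L^1$-integrability into pointwise decay is Barbalat's lemma, so the first preparatory step is to check uniform continuity. By Lemma~\ref{lemma:bounded_y}, each $y_i$ takes values in $[0,1]$; consequently $\dot y_i$ is bounded (it is a quadratic polynomial in bounded quantities with fixed coefficients), so every $y_i$ is Lipschitz. Then $O_i = y_i\sum_m \beta^i_{im}y_m$ and $I_i = \sum_{n\neq i,k}\beta^i_{nk}y_n y_k$ are products/sums of Lipschitz functions on $[0,1]$, hence themselves Lipschitz, hence uniformly continuous on $[0,\infty)$.

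\textbf{Base case (sources).} If $i$ is a source in the transition graph, then $\mathscr P_i=\emptyset$, so $I_i(t)\equiv 0$ and $\dot y_i=-O_i\le 0$. Monotone decreasing and bounded below by $0$, the function $y_i$ converges to some $y_i^*\in[0,1]$, and integrating gives $\int_0^\infty O_i(t)\,dt = y_i(0)-y_i^* < \infty$. Barbalat's lemma (using uniform continuity established above) then yields $O_i(t)\to 0$, and hence $\dot y_i\to 0$; trivially $I_i\to 0$.

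\textbf{Inductive step.} Suppose the conclusion holds for every transition predecessor $w\in\mathscr P_i$; in particular $O_w\in L^1(0,\infty)$. By Lemma~\ref{lem:I_leq_O},
\[
0 \le I_i(t) \le \sum_{w\in\mathscr P_i} O_w(t),
\]
so $I_i\in L^1(0,\infty)$. The key trick now is to define $z_i(t) := y_i(t) - \int_0^t I_i(s)\,ds$, so that $\dot z_i = -O_i \le 0$. Thus $z_i$ is nonincreasing, and it is bounded below because $y_i\ge 0$ and $\int_0^t I_i\,ds$ is bounded (it converges, since $I_i\in L^1$). Hence $z_i(t)\to z_i^*$ for some limit, from which $y_i(t) \to z_i^* + \int_0^\infty I_i(s)\,ds =: y_i^* \in [0,1]$. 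Moreover $\int_0^\infty O_i(t)\,dt = z_i(0)-z_i^* < \infty$, so $O_i\in L^1$. Barbalat then gives $O_i(t)\to 0$, and combined with $I_i\le\sum_w O_w \to 0$ we also get $I_i(t)\to 0$, whence $\dot y_i = I_i - O_i \to 0$. The induction closes, and running it through every vertex yields the theorem.

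\textbf{Anticipated obstacle.} The step I expect to require the most care is the inductive passage from $L^1$-inflow to convergence of $y_i$: directly arguing $y_i$ converges from the fact that $\dot y_i$ is a difference of an $L^1$ function and a nonnegative one is not automatic, and it is precisely the antiderivative substitution $z_i = y_i - \int_0^t I_i$ that makes the monotonicity argument work and simultaneously delivers $O_i\in L^1$, feeding the induction for successors. Verifying uniform continuity (needed for Barbalat) is technically routine but essential, since $L^1$ alone does not imply pointwise decay.
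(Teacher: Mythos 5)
Your proof is correct, and while it shares the paper's overall skeleton --- induction along a topological ordering of the DAG transition graph, with Lemma~\ref{lem:I_leq_O} feeding the inflow bound from the predecessors --- the two key mechanisms are genuinely different. The paper obtains convergence of $y_m$ by introducing the prefix sums $Z_m(t)=\sum_{k\le m}y_k(t)$, showing each is non-increasing (no inflow enters a topological prefix from outside), invoking monotone convergence, and recovering $y_m=Z_m-Z_{m-1}$; it then deduces $O_m\to 0$ from $I_m\to 0$ via a bespoke technical lemma (Lemma~\ref{lemma:limit_combi}), which is essentially a hand-rolled Barbalat-type argument for a sign-definite summand of a bounded function's derivative. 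You instead get per-state monotonicity by subtracting the accumulated inflow, $z_i=y_i-\int_0^t I_i$, which requires knowing $I_i\in L^1$ first --- correctly supplied by your strengthened induction hypothesis ($O_w\in L^1$ for predecessors) together with Lemma~\ref{lem:I_leq_O} --- and you convert $L^1$-integrability into pointwise decay via the standard Barbalat lemma, for which your uniform-continuity check (boundedness of $\dot y_i$ from Lemma~\ref{lemma:bounded_y}, hence Lipschitz $y_i$, hence Lipschitz $O_i$, $I_i$) is both necessary and adequately verified. Each route buys something: the paper's $Z_m$ device avoids any integrability bookkeeping and keeps the argument elementary, while your antiderivative shift proves strictly more in one pass --- it establishes $\int_0^\infty O_i\,dt<\infty$ and $\int_0^\infty I_i\,dt<\infty$ en route, i.e., it subsumes the paper's separate Lemma~\ref{lem:I_and_O_finite}, which the paper only proves later as a prerequisite for Theorem~\ref{th:integrable}.
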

\begin{proof}

    We prove the theorem by induction on the vertices of the state transition graph, $G=(V,E)$, which is a DAG by Assumption \ref{ass:trans_dag}. The vertices are labeled $1, 2, \dots, N$ according to a topological sort. This ordering, which is possible by Assumption \ref{ass:trans_dag}, guarantees that if a transition from state $j$ to state $k$ is possible (see Definition~\ref{def:trans_graph}), then their labels must satisfy $j < k$. The core of the proof relies on the application of the monotone convergence theorem.

    For the proof, we first define a set of non-increasing functions, which uses that the DAG assumption implies that there is a `directional flow' from the source to the sink vertex. 
To this end, we define
    \[
    Z_m(t): = \sum_{k \in V_m} y_{k}(t),
    \]
    with $V_m = \{1, 2, \ldots, m\}$, the set of the first $m$ states.
    Our first objective is to show that each $Z_m(t)$ is non-increasing in $t$. We do this by assessing its time derivative:
    $$
    \frac{dZ_m(t)}{dt} = \sum_{k \in V_m} \frac{dy_{k}}{dt} = \sum_{k \in V_m} \left( I_{k}(t) - O_{k}(t) \right).
    $$
    Let us analyze the inflow and outflow to and from $V_m$. 
    \begin{itemize}
    \item[$\circ$]
      \textit{In-flow to $V_m$:} An in-flow to a node $k \in V_m$ is represented by a term $\beta^k_{pj} y_p y_j$ in $I_{k}$, which implies an edge $p \to k$, $p \ne k$. Since the states are topologically ordered and $V_m$ consists of the first $m$ states in that order, any transition predecessor $p$ of $k$ must also be in $V_m$. Therefore, there is no flow into $V_m$ from outside (i.e., from the complement of $V_m$, denoted by $V_m^c$).
    \item[$\circ$]
      \textit{Out-flow from $V_m$:} The out-flow from $V_m$ to $V_m^c$ is the sum of all flows from nodes $k \in V_m$ to nodes $p \notin V_m$.
    \end{itemize}     
    The derivative of $Z_m(t)$ simplifies to the negative of the total out-flow from $V_m$ to $V_m^c$:
    \[
    \frac{dZ_m(t)}{dt} = - \sum_{k=1}^{m} \sum_{p \notin V_m} \sum_{j=1}^{N} \beta^p_{kj} y_{k}(t) y_j(t)
    \]
    Since all $\beta$ coefficients and all $y_i$ values are non-negative, the sum on the right-hand side is non-negative, which yields
    $$
    \frac{dZ_m(t)}{dt} \le 0
    $$
    This proves that for each $m=1, \ldots, N$, the function $Z_m(t)$ is non-increasing. Now that we have proven this property, we proceed by the actual induction proof of our claim. 
    
    \begin{itemize}\item[$\circ$]\textit{Base case.} 
We begin by studying state $m=1$, which is a source vertex in the transition graph. From the initial part of the proof we know that the function $Z_1(t)$ (which equals $y_1(t)$) is non-increasing, and from Lemma \ref{lemma:bounded_y} that it is bounded between $0$ and $1$. Therefore, by the monotone convergence theorem, the limit of $Z_1(t)$, as $t \to \infty$,  must exist. 
The outflow, $O_1(t)$, is by definition a non-negative function. Consequently, $\frac{dy_1}{dt} = -O_1(t)$ is a non-positive function; see \eqref{eq:outflow}. As $O_1(t)$ is composed of bounded state variables $y_k(t)$ (as they lie between $0$ and $1$) and finite parameters, $O_1(t)$ is a bounded function. Therefore, by the monotone convergence theorem, the limit of $\frac{dy_1}{dt}$ must exist; see \eqref{eq:general_1}. Combining this with the fact that we have already established that the limit of $y_1(t)$ exists, its time derivative $\frac{dy_1}{dt}$ must vanish as $t \to \infty$.

We thus conclude that
    \[  \lim_{t \to \infty} Z_1(t) = Z_1^*, \quad
        \lim_{t \to \infty} y_1(t) = y_1^*, \quad \lim_{t \to \infty} \frac{dy_1}{dt} = 0, \quad \text{and} \quad \lim_{t \to \infty} O_1(t) = \lim_{t \to \infty} I_1(t) = 0,
    \]
with $Z_1^* \in [0,1]$ and $y_1^* \in [0,1]$.    

    \item[$\circ$]\textit{Induction step.} 
For the induction step, we assume, for any state $n \in \{1, \ldots, m-1\}$ (where $m-1 < N$), that
\begin{equation}\label{eq:induce_ass}
    \lim_{t \to \infty} Z_n(t) = Z_n^*, \quad \lim_{t \to \infty} y_n(t) = y_n^*, \quad \lim_{t \to \infty} \frac{dy_n}{dt} = 0, \quad \text{and} \quad \lim_{t \to \infty} O_n(t) = \lim_{t \to \infty} I_n(t) = 0,
\end{equation}
where $Z_n^* \in [0,n]$ and $y_n^* \in [0,1]$. Our goal is to prove that these same conditions also hold for state $m$.

From the first part of the proof, we know that $Z_m(t)$ is a non-increasing function. Additionally, by Lemma \ref{lemma:bounded_y}, $Z_m(t)$ is bounded between $0$ and $m$. Thus, by the monotone convergence theorem, we conclude that a limit $Z_m^* \in [0,m]$ exists such that $\lim_{t\to \infty} Z_m(t) = Z^*_m$.

Observe $y_m(t) = Z_m(t) - Z_{m-1}(t)$. Because $Z_{m-1}(t)$ and $Z_m(t)$ are monotone and bounded, their limits (say $Z_{m-1}^*$ and $Z_m^*$, respectively) exist. Hence, their difference has a limit, too. Thus, $\lim_{t \to \infty} y_m(t) = y_m^*$,
which lies in $[0,1]$  by virtue of Lemma \ref{lemma:bounded_y}.

We are going to prove that $\lim_{t \to \infty} I_m(t) = 0$, which will then be used to prove that $O_m(t) = 0$. First, recall Lemma \ref{lem:I_leq_O}:
\[I_m(t) \leq \sum_{k \in {\mathscr P}_m} O_k (t) = \sum_{k=1}^{m-1} O_k(t),\]
where the summation limit $k < m$ reflects that any inflow into state $m$ must originate from a state `earlier' in the topological sort of the transition graph.

By the induction hypothesis (\ref{eq:induce_ass}), we hav for any $k \in \{1, \ldots, m-1\}$, that $\lim_{t \to \infty} O_k(t) = 0$. Since $I_m(t)$ is bounded above by a finite sum of terms that each tend to zero, we conclude that $\lim_{t \to \infty} I_m(t) = 0$. 

We thus have established the following: $y_m(t)$ is bounded between $0$ and $1$, its inflow $I_m(t)$ has limit $0$, and its outflow $O_m(t)$ is a non-negative function. By Lemma \ref{lemma:limit_combi}, a technical result stated and proven in Appendix~\ref{app:techn}, these conditions imply that the limit of the outflow, $\lim_{t\to\infty} O_m(t)$, must be equal to $0$ and that $\lim_{t\to\infty} \frac{dy_m}{dt} = 0$.

    \end{itemize} We have thus proven the desired.
\end{proof}

\subsection{Interdependence of vanishing and persistent asymptotic states}

This section introduces our main theoretical contribution: a graph-theoretic property that characterizes the relationship between vanishing states (i.e., those in which the state variables converge to zero) and persistent states (i.e., those in which the state variables converge to a strictly positive value). This property enables a complete classification of the asymptotic outcome of state variables across a broad class of interaction-driven dispersion models, including well-known examples such as the SIR and DK models, as well as many of their extensions.

\begin{definition} Let \(y_i(t)\) denote the state variable for state \(i\). \begin{itemize}
   \item[$\circ$] We call state \(i\) \textbf{vanishing} if \(\lim_{t\to\infty} y_i(t)=0\) and \textbf{persistent} if \(\lim_{t\to\infty} y_i(t)>0\).
   \item[$\circ$] The \textbf{asymptotic characterization} of a state specifies whether it is vanishing or persistent. The asymptotic characterisation of a model instance is the assignment of these labels to all states in the interaction-driven dispersion model.
\end{itemize}

\end{definition}

In this subsection, we again impose Assumption \ref{ass:trans_dag}, stating that the transition graph be a DAG. With this in mind, we first establish results concerning the integrability of the vanishing fractions $y_i$, which provide us with essential tools and insights. The proof of this integrability Theorem \ref{th:integrable} beautifully illustrates the conditions on the dependency graph for it to hold. This theorem also offers an interpretation regarding decay rates (Conjecture \ref{conj:exp_decay}), as a non-integrable fraction inherently implies a slower decay rate than an integrable one.
Building upon the integrability theorem, we then present the key finding of this paper: 
\begin{quotation}
 \textit{a state is persistent if and only if all of its dependency successors are vanishing.}   
\end{quotation}
This theorem is powerful and practical. It not only provides insight into the rules governing asymptotic states but also, for models meeting its conditions, allows for a complete asymptotic characterization of all states. To facilitate its application, we will provide a constructive method for attaining this characterization.

\subsubsection{Dependency graph assumption}

Before proceeding with our results, we introduce and discuss the crucial  assumption regarding the dependency graph.

\begin{assumption}[DAG$^{-}$ assumption]\label{ass:depend_dag+}
Every cycle in the underlying dependency graph contains at least one state with a persistent dependency successor.
\end{assumption}

An examination of Assumption \ref{ass:depend_dag+} reveals that it is satisfied if the dependency graph contains no cycles, entailing it is met by any DAG. However, it clearly also covers instances that are no DAGs. Therefore, we refer to Assumption \ref{ass:depend_dag+} as the DAG$^{-}$ assumption, the minus sign reflecting that we impose less demanding requirements.

\begin{remark}{\rm 
In this remark we provide an interpretation of Assumption \ref{ass:depend_dag+}, focusing on the requirement that every cycle must include at least one state with a persistent dependency successor. 

To build intuition, we informally assess the effect of cycles in dependency graphs, beginning with the special case of self-loops, but the reasoning extends to general cycles as well.
Recall that an outgoing edge in the dependency graph indicates that a state's outflow is proportional to the fraction of the state it targets. A self-loop, then, implies that a state’s outflow depends on its own fraction, causing the outflow (and hence the decay of the state) to slow down over time. This is in contrast to cases where a state has a persistent dependency successor, which ensures that a state's fraction decreases proportionally to a `minimum rate'.
Intuitively, a self-loop leads to sub-exponential decay, introducing a bottleneck that slows the overall convergence of the system. This slower decay can propagate through the graph, impeding the system’s ability to rapidly approach a stable configuration (see Example \ref{example:3-cycle}).

However, the presence of such cycles does not impede overall rapid convergence as long as a faster decay mechanism is present. This faster decay is assured by the existence of a persistent dependency successor, which is precisely the condition imposed by the DAG$^{-}$ assumption.}$\hfill\Diamond$
\end{remark}

Thus, for interaction-driven dispersion models with an underlying DAG transition graph, the DAG$^{-}$ assumption draws conservative borderlines that distinguish model behavior into two distinct realms. On one side are systems that converge exponentially and exhibit the asymptotic rules derived in this paper. On the other side lie systems that converge more slowly (hypothesized to be inversely proportional to time, as suggested by Example \ref{example:3-cycle}), where these rules no longer apply.

\begin{remark} {\rm 
   The DAG$^{-}$ assumption may initially appear circular: it is invoked to derive the asymptotic characterization of system states, yet it presupposes prior knowledge of which states are persistent. However, this assumption accommodates a significantly broader class of dependency graphs that include cycles. The key lies in Assumption~\ref{ass:trans_dag} — that the transition graph is a DAG — which, by Lemma~\ref{lemma:strongly_connected}, ensures the existence of sink states. Since sink states are necessarily persistent, they provide a concrete starting point for the analysis and resolve the apparent circularity. $\hfill\Diamond$}
\end{remark}

\subsubsection{Example: SIR and DK models as instances of DAG$^{-}$} \label{subsubsect:sir_dk_dag-}

To illustrate Assumption \ref{ass:depend_dag+}, we demonstrate how to verify whether the dependency graphs of the SIR and DK models satisfy the DAG$^{-}$ assumption. This analysis continues the example introduced in Subsection \ref{subsect:example_sir_dk}.

We begin by establishing that the transition graphs of the SIR and DK models, as depicted in Figure \ref{fig:transition_graph_sir_dk}, are clearly DAGs, thereby satisfying Assumption \ref{ass:trans_dag}.

Now, examining the dependency graph of the SIR model, we find that it contains cycles, which means the DAG$^{-}$ assumption must be carefully verified. As shown in Figure \ref{fig:dependency_graph_sir_dag+}, we detect two distinct cycles: $C_1: 2 \to 2$ (highlighted in red) and $C_2: 2 \to 1 \to 2$ (highlighted in blue). We know that the transition sink, state 3, is a persistent state. Furthermore, Figure \ref{fig:dependency_graph_sir_dag+} confirms that state 3 is a dependency successor of state 2. Therefore, both $C_1$ and $C_2$ contain a state (state 2) that has a persistent dependency successor (state 3), leading us to conclude that the SIR model satisfies the DAG$^{-}$ assumption. A similar line of reasoning confirms that the DK model also satisfies the DAG$^{-}$ assumption.

The SIR model's natural recovery mechanism is inherently compatible with the DAG$^{-}$ assumption. This is because a state that allows for natural recovery in the dependency graph has every other state as a successor. This implies two key points: 1) its graph inherently contains cycles involving the state that allows for natural recovery, and 2) the natural recovery state always has the system's persistent sink state as a dependency successor. Consequently, cycles involving states with natural recovery mechanisms naturally fulfill the DAG$^{-}$ assumption.

\begin{figure}[H]
    \centering
    \begin{subfigure}[b]{0.48\textwidth}
        \centering
        \begin{tikzpicture}[
            mynode/.style={circle, draw, thick, minimum size=0.2cm, font=\bfseries}, 
            myarrow/.style={-Latex, thick, draw=black},
            bidir_arrow_1to2/.style={myarrow, bend right=20,blue},
            bidir_arrow_2to1/.style={myarrow, bend right=20,blue},
            self_loop_arrow/.style={myarrow, loop right, looseness=10,red}
        ]

            \node[mynode] (N1) {1};
            \node[mynode, below=0.5cm of N1] (N2) {2};
            \node[mynode, below=0.5cm of N2] (N3) {3};

            \draw[bidir_arrow_1to2] (N1) to (N2);
            \draw[bidir_arrow_2to1] (N2) to (N1);
            \draw[self_loop_arrow] (N2) to (N2);
            \draw[myarrow] (N2) -- (N3);

        \end{tikzpicture}
        \caption{SIR: Dependency graph with two cycles} 
        \label{fig:dependency_graph_sir_dag+}
    \end{subfigure}
    \hfill
    \begin{subfigure}[b]{0.48\textwidth}
        \centering
        \begin{tikzpicture}[
            mynode/.style={circle, draw, thick, minimum size=0.2cm, font=\bfseries}, 
            myarrow/.style={-Latex, thick, draw=black},
            bidir_arrow_1to2/.style={myarrow, bend right=20},
            bidir_arrow_2to1/.style={myarrow, bend right=20},
            self_loop_arrow/.style={myarrow, loop right, looseness=10,green}
        ]

            \node[mynode] (N1) {1};
            \node[mynode, below=0.5cm of N1] (N2) {2};
            \node[mynode, below=0.5cm of N2] (N3) {3};

            \draw[myarrow] (N1) -- (N2);
            \draw[self_loop_arrow] (N2) to (N2);
            \draw[myarrow] (N2) -- (N3);

        \end{tikzpicture}
        \caption{DK: Dependency graph with one cycle} 
        \label{fig:dependency_graph_dk_dag+}
    \end{subfigure}
    \caption{Dependency graphs of the SIR and DK models with highlighted cycles in blue, red and green} 
    \label{fig:graphs_sir_dag+}
\end{figure}
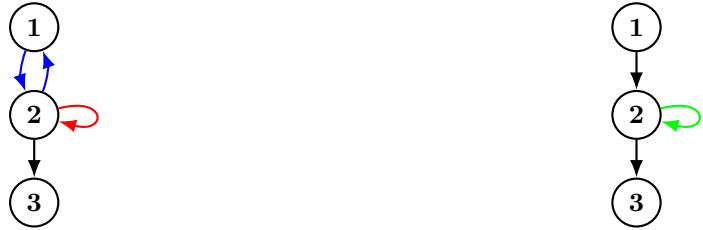

\subsubsection{Integrability results and interdependence of asymptotic states}

We proceed by showing that for interaction-driven dispersion models with DAG transition graphs, and under the additional DAG$^{-}$ assumption on the dependency graph, the fractions $y_i$ are indeed integrable. To establish this, we first present a lemma demonstrating the integrability of each state's inflow and outflow.

The integrability of inflow and outflow is intuitively expected due to the DAG assumption on the transition graph. This assumption ensures that the flow of agents, which is a finite and bounded amount, moves in only one direction. This unidirectional flow implies that both the inflow to any state and its subsequent outflow must be finite. We formalize this intuition below and provide the proof in Appendix~\ref{app:proofs}.

\begin{lemma}[Integrability of inflow and outflow]\label{lem:I_and_O_finite}
    Under Assumption \ref{ass:trans_dag}, for any state $i \in \{1, \ldots, N\}$, the integrals $\int_0^\infty I_i(t) \,dt$ and $\int_0^\infty O_i(t) \,dt$ are finite.
\end{lemma}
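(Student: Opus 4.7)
The plan is to prove Lemma~\ref{lem:I_and_O_finite} by induction along a topological ordering of the transition graph $G=(V,E)$, which is a DAG by Assumption~\ref{ass:trans_dag}. Fix such an ordering so that an edge $j\to k$ in $G$ implies $j<k$, and then show for each $i=1,\ldots,N$ in turn that both $\int_0^\infty I_i(t)\,dt$ and $\int_0^\infty O_i(t)\,dt$ are finite. The central observation I will exploit is that, integrating \eqref{eq:general_1} over $[0,t]$ gives the simple identity
\[
y_i(t)-y_i(0)\;=\;\int_0^t I_i(s)\,ds-\int_0^t O_i(s)\,ds,
\]
which, combined with the boundedness of $y_i$ from Lemma~\ref{lemma:bounded_y}, allows finiteness of $\int O_i$ to be inferred from finiteness of $\int I_i$ (and vice versa).

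For the base case, I take $i=1$, which is a source in $G$. Sources have no transition predecessors, so the sum defining $I_i$ in \eqref{eq:inflow} is empty and $I_1(t)\equiv 0$; trivially $\int_0^\infty I_1(t)\,dt=0$. The ODE then reads $\dot y_1=-O_1(t)$ with $O_1(t)\ge 0$, so $t\mapsto y_1(t)$ is non-increasing and bounded in $[0,1]$. Integrating on $[0,t]$ yields $\int_0^t O_1(s)\,ds=y_1(0)-y_1(t)\le y_1(0)\le 1$, and sending $t\to\infty$ establishes $\int_0^\infty O_1(s)\,ds<\infty$ by monotone convergence.

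For the induction step, assume $\int_0^\infty I_k(t)\,dt<\infty$ and $\int_0^\infty O_k(t)\,dt<\infty$ for every $k<i$. The topological ordering guarantees $\mathscr{P}_i\subseteq\{1,\ldots,i-1\}$, so Lemma~\ref{lem:I_leq_O} yields the pointwise bound $I_i(t)\le\sum_{k\in\mathscr{P}_i}O_k(t)$. Integrating and invoking the induction hypothesis on each of the finitely many terms gives
\[
\int_0^\infty I_i(t)\,dt\;\le\;\sum_{k\in\mathscr{P}_i}\int_0^\infty O_k(t)\,dt\;<\;\infty.
\]
To conclude integrability of $O_i$, rearrange the identity displayed above as
\[
\int_0^t O_i(s)\,ds\;=\;y_i(0)-y_i(t)+\int_0^t I_i(s)\,ds.
\]
The right-hand side is bounded uniformly in $t$ by $1+\int_0^\infty I_i(s)\,ds<\infty$, using Lemma~\ref{lemma:bounded_y}. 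Since $O_i\ge 0$, the map $t\mapsto\int_0^t O_i(s)\,ds$ is non-decreasing, and boundedness therefore implies $\int_0^\infty O_i(s)\,ds<\infty$ by monotone convergence. This closes the induction and establishes the lemma for all $i\in\{1,\ldots,N\}$.

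I do not expect a serious obstacle here: the only substantive inputs are the topological-sort structure (already used in the proof of Theorem~\ref{MR}), the inflow-outflow inequality of Lemma~\ref{lem:I_leq_O}, and the boundedness from Lemma~\ref{lemma:bounded_y}. The subtlest point is simply being careful that the induction hypothesis delivers integrability of $O_k$ for every transition predecessor $k$ of $i$, which is exactly why the topological ordering is chosen at the outset.
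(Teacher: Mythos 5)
Your proof is correct and follows essentially the same route as the paper's: induction along a topological sort, Lemma~\ref{lem:I_leq_O} to bound $\int I_i$ by the predecessors' outflow integrals, and the integrated identity $y_i(t)-y_i(0)=\int_0^t I_i-\int_0^t O_i$ together with Lemma~\ref{lemma:bounded_y} to deduce integrability of $O_i$. The only (cosmetic) difference is that the paper invokes Theorem~\ref{MR} to ensure $y_i(\infty)$ exists before passing to the limit, whereas you sidestep this by noting that $t\mapsto\int_0^t O_i(s)\,ds$ is non-decreasing and uniformly bounded --- a slightly more self-contained way to close the same argument.
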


Now that we have established some model's properties, we are ready to present the integrability theorem. Its proof is particularly insightful, demonstrating how the DAG$^{-}$ assumption fits neatly into the conditions for integrability.

\begin{theorem}[Integrability of vanishing states]\label{th:integrable}
Under Assumptions~\ref{ass:trans_dag} and~\ref{ass:depend_dag+}, for any state $i \in \{1, \dots, N\}$, if $y_i$ vanishes over time $($i.e., \(\lim_{t \to \infty} y_i(t) = 0\)$)$, then its time integral is finite $($i.e., \(\int_0^\infty y_i(t)\,dt < \infty\)$)$.
\end{theorem}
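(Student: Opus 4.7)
The plan is to partition the set $V$ of vanishing states according to their dependency structure. Define $A := \{i : \beta^{i}_{im}>0 \text{ for some persistent state } m\}$, so that $V\cap A$ collects the vanishing states possessing at least one persistent dependency successor. I will handle $V\cap A$ directly via Lemma~\ref{lem:I_and_O_finite}, and then show $V\cap A^c=\emptyset$ by combining the DAG$^-$ assumption with an integrating-factor trick applied to~\eqref{eq:general_1}.

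For $i\in V\cap A$, fix a persistent dependency successor $m$ with $\beta^{i}_{im}>0$. Since $y_m(t)\to c_m>0$, there exist $T_0\geq 0$ and $\kappa>0$ such that $\beta^{i}_{im}\,y_m(t)\geq \kappa$ for all $t\geq T_0$; hence $O_i(t)\geq \kappa\,y_i(t)$ on $[T_0,\infty)$. Because $\int_0^\infty O_i(t)\,dt<\infty$ by Lemma~\ref{lem:I_and_O_finite}, this gives $\int_{T_0}^\infty y_i(t)\,dt<\infty$, and the boundedness of $y_i$ on $[0,T_0]$ (Lemma~\ref{lemma:bounded_y}) yields $\int_0^\infty y_i(t)\,dt<\infty$.

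To handle $V\cap A^c$, I aim for a contradiction. The DAG$^-$ assumption says every cycle of the dependency graph intersects $A$, so the restriction to $A^c$ — hence to $V\cap A^c\subseteq A^c$ — is a DAG; if nonempty, it admits a sink $i$ by Lemma~\ref{lemma:strongly_connected}. Since $i\in A^c$, every dependency successor of $i$ is vanishing, and since $i$ is a sink in the $V\cap A^c$-subgraph none of these successors lies in $V\cap A^c$; thus every dependency successor of $i$ lies in $V\cap A$, and by the previous paragraph each such $m$ satisfies $\int_0^\infty y_m(t)\,dt<\infty$. Writing $\Lambda_i(t):=\sum_m\beta^{i}_{im}\,y_m(t)$, this gives $\Phi_\infty := \int_0^\infty \Lambda_i(s)\,ds<\infty$. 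With $\Phi(t):=\int_0^t\Lambda_i(s)\,ds$, the ODE $\dot y_i = I_i - y_i\Lambda_i$ rearranges to $\frac{d}{dt}\bigl(y_i(t)\,e^{\Phi(t)}\bigr) = I_i(t)\,e^{\Phi(t)} \geq 0$, so $y_i(t)e^{\Phi(t)}\geq y_i(0)>0$, hence $y_i(t)\geq y_i(0)\,e^{-\Phi_\infty}>0$ for all $t$ — contradicting $y_i\to 0$. Hence $V\cap A^c=\emptyset$ and the theorem follows from the direct case. I expect the main obstacle to be this third step: one must recognize that DAG$^-$ is precisely the property that places every would-be sink of $V\cap A^c$ above only $V\cap A$-successors (whose integrals are already controlled), and one must spot the integrating-factor identity that converts integrability of $\Lambda_i$ into a strict positive lower bound on $y_i$.
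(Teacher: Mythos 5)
Your proof is correct, and it reaches the conclusion by a genuinely different route than the paper. The paper argues by contradiction on the set of vanishing \emph{non-integrable} states: using the logarithmic derivative it shows every such state has a dependency successor that is again vanishing and non-integrable, so an infinite path (hence a cycle) would lie entirely in that set, and Assumption~\ref{ass:depend_dag+} then forces some state on that cycle to have a persistent successor, which (by exactly your first case, $O_i\ge\kappa y_i$ combined with Lemma~\ref{lem:I_and_O_finite}) makes it integrable --- a contradiction. You instead partition the vanishing states upfront by whether they possess a persistent dependency successor, dispose of those that do directly, and observe that DAG$^-$ is precisely the statement that the induced subgraph on the remainder is acyclic; a sink of that subgraph has all its dependency successors already controlled, and your integrating-factor identity $\frac{d}{dt}(y_ie^{\Phi})=I_ie^{\Phi}\ge0$ (the exponentiated form of the paper's equation~\eqref{eq:log}) yields the explicit lower bound $y_i(t)\ge y_i(0)e^{-\Phi_\infty}>0$, contradicting vanishing. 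The analytic ingredients are the same (the log-derivative computation and the integrability of $O_i$ from Lemma~\ref{lem:I_and_O_finite}), but your organization is cleaner in two respects: it replaces the ``infinite path in a finite graph'' argument with a direct appeal to the existence of a sink in a DAG (Lemma~\ref{lemma:strongly_connected}), and it produces a quantitative positive lower bound rather than only a divergence. One small point worth making explicit: a sink of the induced subgraph on $V\cap A^c$ cannot carry a self-loop, but this is automatic since a self-loop is a cycle and would force that state into $A$ by Assumption~\ref{ass:depend_dag+}, so the acyclicity you invoke already covers it.
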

\begin{proof}
We will prove this by contradiction. To this end, we assume the set of vanishing but non-integrable states $A$ is not empty; here, $A$ is defined as
\[ A := \left\{k \: {\Large\mid}\: \lim_{t\to\infty} y_k(t) = 0 \wedge \int_0^\infty y_k(t)\,dt = \infty \right\}. \]
Our first goal is to prove the following claim:
\begin{quotation}
    {\it Any successor $k$ of state $j \in A$ must be in $A$ itself.}
\end{quotation}
If this claim holds, then the existence of any $j_1 \in A$ necessitates an infinite dependency path $j_1 \to j_2 \to \ldots$, where all $j_i \in A$. However, by the DAG$^{-}$ assumption on the dependency graph, any cycle in the dependency graph contains a state $j$ with a persistent successor $k$. This means such cycles inherently do not satisfy the aforementioned property of set $A$, forbidding $A$ from being non-empty. This directly implies that any vanishing state $k$ must be integrable.

Let us now proceed to prove that if $A$ is nonempty, then every state $i \in A$ must have a successor $k \in A$ in the dependency graph. To that end, let $i$ be an arbitrary state in $A$.

    \begin{itemize}
        \item[$\circ$]
\textit{Step 1: There must be a successor $k$ of $i$ that is non-integrable.}
We focus on the logarithmic derivative of $y_i(t)$:
$$\frac{d}{dt} \ln y_i(t) = \frac{I_i(t)}{y_i(t)} - \frac{O_i(t)}{y_i(t)} = \frac{I_i(t)}{y_i(t)} - \frac{y_i(t) \sum_{j}^N \beta^i_{ij}y_j(t)}{y_i(t)} = \frac{I_i(t)}{y_i(t)} - S_i(t),$$
where 
\begin{equation}
    \label{defS}
    S_i(t) := \sum_{j=1}^N \beta^i_{ij}y_j(t)    
\end{equation}
represents the total outflow rate per unit of $y_i(t)$.
Integrating both sides from $0$ to $T$ yields
\begin{equation}\label{eq:log}
    \ln y_i(T) - \ln y_i(0) = \int_0^T \frac{I_i(t)}{y_i(t)}\,dt - \int_0^T S_i(t) \,dt.
\end{equation}
As $T \to \infty$, since $i \in A$, we know $\lim_{T\to\infty} y_i(T) = 0$, which implies $\lim_{T\to\infty} \ln y_i(T) \to -\infty$. Therefore, the left-hand side of~(\ref{eq:log}) tends to $-\infty$. Given that ${I_i(t)}/{y_i(t)} \geq 0$ for all $t$, for the left-hand side to diverge to $-\infty$, we necessarily have that
$$\int_0^\infty S_i(t) \,dt = \infty.$$
Recalling the definition \eqref{defS}, we find that $\beta^i_{ij} \ge 0$ implies that, for $\int_0^\infty S_i(t) \,dt$ to diverge, there must exist at least one successor state $k$ of $i$ in the dependency graph (i.e., $\beta^i_{ik} > 0$) such that its integral $\int_0^\infty y_k(t)\, dt$ is infinite. Let $k$ be the state with a non-integrable state variable $y_k$.

    \item[$\circ$]\textit{Step 2: The successor $k$ must also be in $A$.}
Assume, in order to produce a contradiction, that state $k$ is persistent. This means $\lim_{t \to \infty} y_k(t) = y_k^*$ for some $y_k^* > 0$, which implies $k \notin A$. Because $y_k(t)$ approaches a positive constant $y_k^*$, there must exist a constant $C > 0$ and a time $T_0 > 0$ such that for all $t > T_0$, $y_k(t) > C$.

Bearing in mind \eqref{defS}, since we know $\beta^i_{ik} > 0$ (because $k$ is a successor of $i$), we conclude that $S_i(t) \ge \beta^i_{ik} y_k(t)$. Therefore, for $t > T_0$, we have
$$S_i(t) > \beta^i_{ik} C.$$
Now, considering the outgoing integral for state $i$:
$$O_i(t) = y_i(t) S_i(t) > (\beta^i_{ik} C)\, y_i(t) \quad \text{for } t > T_0.$$
Integrating both sides over the interval $[0, \infty)$ (or from $T_0$ to $\infty$ and adjusting for the initial segment), we thus obtain
$$(\beta^i_{ik} C) \int_0^\infty y_i(t) \,dt < \int_0^\infty O_i(t) \,dt.$$
As we have assumed that the transition graph is a DAG, we can apply Lemma \ref{lem:I_and_O_finite}, which states that the integral $\int_0^\infty O_i(t) \,dt$ is finite. Consequently, the left-hand side of the inequality must also be finite, which implies $\int_0^\infty y_i(t)\, dt < \infty$. This directly contradicts our initial assumption that $i \in A$ (since $i \in A$ implies $\int_0^\infty y_i(t) \,dt = \infty$).
Therefore, our assumption that $k$ is persistent must be false. Since we already established $\int_0^\infty y_k(t)\, dt = \infty$ in Step 1, we conclude that $k \in A$.

Since the reasoning above applies to any state in $A$, we have shown that if $A$ is non-empty, one can indeed construct an infinite dependency path $j_1 \to j_2 \to \ldots$ where all $j_i \in A$, as established at the beginning of the proof. This leads to the logical contradiction previously detailed, as such a path is impossible in a finite graph that is DAG$^{-}$. Thus, $A$ must be empty, proving that any vanishing state must be integrable.
\end{itemize}
This completes the proof.
\end{proof}

With the integrability theorem now established --- a result that informally ascertains a rapid speed of convergence for the system --- we are ready to present the main theorem of this paper. This theorem rigorously defines a strict interdependent structure governing the asymptotic characterization of the system's state variables (i.e., whether they vanish or persist).

\begin{theorem} \label{thm:vanish_persist_basic}
Under Assumptions \ref{ass:trans_dag} and \ref{ass:depend_dag+}, a state $i$ is persistent $($i.e., $\lim_{t\to\infty} y_i(t) > 0$$)$ if and only if all of its successors $j$ in the dependency graph are vanishing $($i.e., for all $j$ such that $\beta^i_{ij} > 0$, we have $\lim_{t\to\infty} y_j(t) = 0$$)$.
\end{theorem}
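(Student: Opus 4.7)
The plan is to prove the two implications separately, with the forward direction $(\Rightarrow)$ relying on the outflow limit from Theorem \ref{MR}, and the reverse direction $(\Leftarrow)$ relying on the integrability result from Theorem \ref{th:integrable}. Throughout, I would abbreviate $S_i(t):=\sum_{j=1}^N \beta^i_{ij}\,y_j(t)$, so that $O_i(t)=y_i(t)\,S_i(t)$; the dependency successors of $i$ are precisely the indices $j$ with $\beta^i_{ij}>0$, and Lemma \ref{lemma:bounded_y} keeps every $y_i(t)$ strictly positive, so $\ln y_i(t)$ is always well defined.

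For the forward direction I would argue by contradiction. Suppose $i$ is persistent with $y_i(t)\to y_i^*>0$, and suppose some dependency successor $k$ of $i$ (allowing $k=i$ in the self-loop case) is also persistent, $y_k(t)\to y_k^*>0$. Choose $T$ so that $y_i(t)>y_i^*/2$ and $y_k(t)>y_k^*/2$ for all $t>T$. Then
\[ O_i(t)\;\geq\;y_i(t)\,\beta^i_{ik}\,y_k(t)\;\geq\;\tfrac14\,\beta^i_{ik}\,y_i^*\,y_k^*\;>\;0 \quad\text{for all } t>T, \]
which contradicts $\lim_{t\to\infty} O_i(t)=0$ guaranteed by Theorem \ref{MR}. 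Hence every dependency successor of $i$ must vanish.

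For the reverse direction I would assume every dependency successor of $i$ is vanishing. If $i$ has no dependency successors at all, then $S_i\equiv 0$, so $\dot y_i=I_i\geq 0$, meaning $y_i$ is nondecreasing and $y_i(t)\geq y_i(0)>0$, hence persistent. Otherwise, apply Theorem \ref{th:integrable}: each vanishing $y_j$ with $\beta^i_{ij}>0$ is $L^1$-integrable, so $S_i$ is a finite nonnegative linear combination of integrable functions and $M:=\int_0^\infty S_i(t)\,dt<\infty$. Reusing the log-derivative identity from the proof of Theorem \ref{th:integrable},
\[ \ln y_i(T)-\ln y_i(0)\;=\;\int_0^T \frac{I_i(t)}{y_i(t)}\,dt\;-\;\int_0^T S_i(t)\,dt\;\geq\;-M, \]
where the inequality uses $I_i/y_i\geq 0$. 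Thus $y_i(T)\geq y_i(0)\,e^{-M}>0$ uniformly in $T$, and by Theorem \ref{MR} the limit $y_i^*=\lim_{t\to\infty} y_i(t)$ exists and is bounded below by $y_i(0)\,e^{-M}>0$, so $i$ is persistent.

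The main technical burden sits in the reverse direction, where Theorem \ref{th:integrable} is essential: it is precisely what converts the vanishing-successor hypothesis into a uniform lower bound on $y_i$ via the $L^1$-control on $S_i$. Once that integrability is in hand, the log-derivative bound is routine; the only easily overlooked case is a state with no dependency successors at all, which is dispatched by the short separate argument above.
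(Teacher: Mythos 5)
Your proposal is correct and follows essentially the same route as the paper: the forward implication via $\lim_{t\to\infty}O_i(t)=0$ from Theorem \ref{MR}, and the backward implication via the logarithmic-derivative identity combined with the $L^1$-integrability of vanishing successors from Theorem \ref{th:integrable}. The only (harmless) difference is that you run the backward direction directly, obtaining the explicit lower bound $y_i(T)\geq y_i(0)e^{-M}$, whereas the paper argues by contradiction from $\int_0^\infty S_i(t)\,dt=\infty$; your no-successor case is in fact already subsumed by the general argument with $M=0$.
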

\begin{proof}
We separately prove the two implications.
\begin{itemize}
    \item[$\circ$]
\textit{Proof of the forward implication.}
We  prove that if state $i$ is persistent, all of its dependency successors must be vanishing. We do this by contradiction. Assume state $i$ is persistent, but that it has at least one dependency successor, $j$, that is also persistent. This gives us two conditions:
\begin{enumerate}
    \item $\lim_{t\to\infty} y_i(t) = y_i^* > 0$.
    \item There exists a state $j$ where $\beta^i_{ij} > 0$ and $\lim_{t\to\infty} y_j(t) = y_j^* > 0$.
\end{enumerate}
Now, consider the outflow of state $i$, $O_i(t) = y_i(t) \sum_{k=1}^N \beta^i_{ik} y_k(t)$. We evaluate 
\[ \lim_{t\to\infty} O_i(t) = \left(\lim_{t\to\infty} y_i(t)\right) \cdot \left(\lim_{t\to\infty} \sum_{k=1}^N \beta^i_{ik} y_k(t)\right) = y_i^* \cdot \left(\sum_{k=1}^N \beta^{i}_{ik} y_k^*\right). \]
Given that  $\beta^i_{ij} > 0$ and $y_j^* > 0$ (from condition 2), it directly follows that $\sum_{k=1}^N \beta^{i}_{ik} y_k^* > 0$. Also we have $y_i^* > 0$ (from condition 1), which leads us to the conclusion $\lim_{t \to \infty} O_i(t) > 0$.

However, this conclusion directly contradicts Theorem \ref{MR}, which states that if the transition graph is DAG, then $\lim_{t \to \infty} O_i(t) = 0$. Hence, our initial assumption that state $i$ has a persistent dependency successor must be false. This means that the state variables of all dependency successors of $i$ must indeed be vanishing.
  \item[$\circ$]
\textit{Proof of the backward implication.}
We continue by proving that if all of state $i$'s dependency successors are vanishing, then state $i$ must be persistent, again by contradiction. Assume that all of $i$'s dependency successors vanish, and also assume that $i$ itself vanishes. This gives us two conditions:
\begin{enumerate}
    \item $\lim_{t\to\infty} y_i(t) = 0$.
    \item For all $j$ such that $\beta^i_{ij} > 0$, we have $\lim_{t\to\infty} y_j(t) = 0$.
\end{enumerate}
We can analyze the logarithmic derivative of $y_i(t)$ again, arriving at the same expression as found in~(\ref{eq:log}). Since $\lim_{t \to \infty} y_i(t) = 0$ (from condition 1), the left-hand side of~(\ref{eq:log}) diverges to $-\infty$. 
Regarding the right-hand side of (\ref{eq:log}) this entails that $\int_0^\infty S_i(t) \, dt=\infty$. Hence,
\begin{equation}\label{eq:infty}
    \int_0^\infty S_i(t)\,dt = \int_0^\infty \sum_{k=1}^N \beta^i_{ik} y_k(t)\,dt = \sum_{k=1}^N \beta^i_{ik} \int_0^\infty y_k(t)\,dt = \infty.
\end{equation}
 
By our assumption (condition 2), for all $k$ such that $\beta^i_{ik} > 0$, we know that $\lim_{t \to \infty} y_k(t) = 0$. Furthermore, by the integrability theorem (Theorem \ref{th:integrable}), we have that $\int_0^\infty y_k(t)\,dt < \infty$ for all such $k$. Therefore, $\sum_{k=1}^N \beta^i_{ik} \int_0^\infty y_k(t)\,dt$ must be finite, as it is a finite sum of finite non-negative terms. This directly contradicts (\ref{eq:infty}).
We must, therefore, reject our assumption that $i$ vanishes. We conclude that $i$ must be a persistent state.

\end{itemize}
This completes the proof. 
\end{proof}

Theorem \ref{thm:vanish_persist_basic} provides a strict rule for the interdependence of asymptotic states. As an illustration, consider the subclass of interaction-driven dispersion models where the transition graph and dependency graph are identical, are DAGs and possess symmetry. For such models, Theorem \ref{thm:vanish_persist_basic} reveals that vanishing and persistent states must exhibit an alternating pattern, as exemplified in Figure~\ref{fig:interdependence_symmetric}.

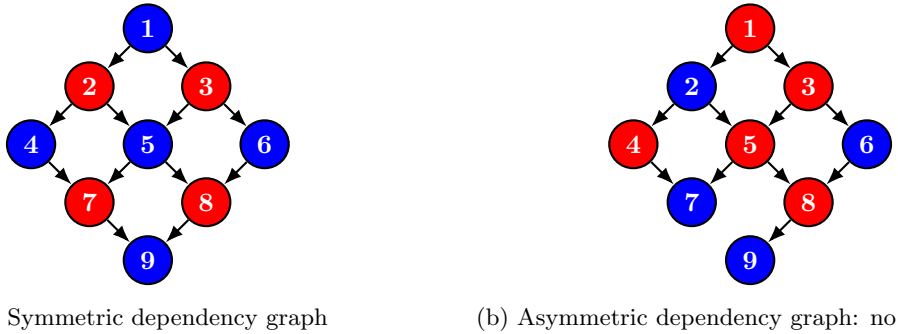
\begin{figure}[H]
    \centering
    \begin{subfigure}[b]{0.48\textwidth}
        \centering
        \begin{tikzpicture}[
            mynode/.style={circle, draw, thick, minimum size=0.2cm, font=\bfseries}, 
            mynodeblue/.style={circle, draw, thick, minimum size=0.2cm, font=\bfseries,fill=blue,text=white}, 
            mynodered/.style={circle, draw, thick, minimum size=0.2cm, font=\bfseries,fill=red,text=white},
            myarrow/.style={-Latex, thick, draw=black},
            bidir_arrow_1to2/.style={myarrow, bend right=20,blue},
            bidir_arrow_2to1/.style={myarrow, bend right=20,blue},
            self_loop_arrow/.style={myarrow, loop right, looseness=10,red}
        ]

            \node[mynodeblue] (N1) {1};
            \node[mynodered, below left=0.3cm and 0.3cm of N1] (N2) {2};
            \node[mynodered, below right=0.3cm and 0.3cm of N1] (N3) {3};
            
            \node[mynodeblue, below left =0.3cm and 0.3cm of N2] (N4) {4};
            \node[mynodeblue, below right =0.3cm and 0.3cm of N3] (N6) {6};
            \node[mynodeblue] (N5) at ($(N4)!0.5!(N6)$) {5};

            \node[mynodered, below right =0.3cm and 0.3cm of N4] (N7) {7};
            \node[mynodered, below left =0.3cm and 0.3cm of N6] (N8) {8};
            \node[mynodeblue, below left =0.3cm and 0.3cm of N8] (N9) {9};

            \draw[myarrow] (N1) to (N2);
            \draw[myarrow] (N1) to (N3);
            \draw[myarrow] (N2) to (N4);
            \draw[myarrow] (N2) to (N5);
            \draw[myarrow] (N3) to (N5);
            \draw[myarrow] (N3) to (N6);
            \draw[myarrow] (N4) to (N7);
            \draw[myarrow] (N5) to (N7);
            \draw[myarrow] (N5) to (N8);
            \draw[myarrow] (N6) to (N8);
            \draw[myarrow] (N7) to (N9);
            \draw[myarrow] (N8) to (N9);

        \end{tikzpicture}
        \caption{Symmetric dependency graph} 
        \label{fig:interdependence_symmetric}
    \end{subfigure}
    \hfill
    \begin{subfigure}[b]{0.48\textwidth}
        \centering
                \begin{tikzpicture}[
            mynode/.style={circle, draw, thick, minimum size=0.2cm, font=\bfseries}, 
            mynodeblue/.style={circle, draw, thick, minimum size=0.2cm, font=\bfseries,fill=blue,text=white}, 
            mynodered/.style={circle, draw, thick, minimum size=0.2cm, font=\bfseries,fill=red,text=white},
            myarrow/.style={-Latex, thick, draw=black},
            bidir_arrow_1to2/.style={myarrow, bend right=20,blue},
            bidir_arrow_2to1/.style={myarrow, bend right=20,blue},
            self_loop_arrow/.style={myarrow, loop right, looseness=10,red}
        ]

            \node[mynodered] (N1) {1};
            \node[mynodeblue, below left=0.3cm and 0.3cm of N1] (N2) {2};
            \node[mynodered, below right=0.3cm and 0.3cm of N1] (N3) {3};
            
            \node[mynodered, below left =0.3cm and 0.3cm of N2] (N4) {4};
            
            \node[mynodeblue, below right =0.3cm and 0.3cm of N3] (N6) {6};
            \node[mynodered] (N5) at ($(N4)!0.5!(N6)$) {5};

            \node[mynodeblue, below right =0.3cm and 0.3cm of N4] (N7) {7};
            \node[mynodered, below left =0.3cm and 0.3cm of N6] (N8) {8};
            \node[mynodeblue, below left =0.3cm and 0.3cm of N8] (N9) {9};

            \draw[myarrow] (N1) to (N2);
            \draw[myarrow] (N1) to (N3);
            \draw[myarrow] (N2) to (N4);
            \draw[myarrow] (N2) to (N5);
            \draw[myarrow] (N3) to (N5);
            \draw[myarrow] (N3) to (N6);
            \draw[myarrow] (N4) to (N7);
            \draw[myarrow] (N5) to (N7);
            \draw[myarrow] (N5) to (N8);
            \draw[myarrow] (N6) to (N8);
            \draw[myarrow] (N8) to (N9);

        \end{tikzpicture}
        \caption{Asymmetric dependency graph: no $7 \to 9$ edge} 
        \label{fig:interdependence_asymmetric}
    \end{subfigure}
    \caption{Example of a symmetric and asymmetric dependency graph (with transition graph identical to their dependency graph) and its asymptotic characterization. The asymmetric graph in (b) is construed by removing the $7 \to 9$ edge. Blue: persistent states. Red: vanishing states.} 
    \label{fig:example_theorem_interdependence}
\end{figure}

In addition, Theorem \ref{thm:vanish_persist_basic} underscores the significant impact that adding or removing an edge in the dependency graph can have on the system's asymptotic characterization. To illustrate this, we modify the system shown in Figure \ref{fig:interdependence_symmetric} by removing the edge from node 7 to node 9. The resulting structure, displayed in Figure \ref{fig:interdependence_asymmetric}, reveals a complete rearrangement of the vanishing and persistent states.

\subsubsection{Complete asymptotic characterization}

An important implication of Theorem~\ref{thm:vanish_persist_basic} is the following: for any interaction-driven dispersion model that satisfies Assumptions \ref{ass:trans_dag} and \ref{ass:depend_dag+}, if, for each cycle, a persistent successor of at least one state in that cycle can be \textit{identified}, then the asymptotic characterization of every state can be determined. To realize this characterization, we show that once the persistent successors in the DAG$^{-}$ dependency graph are identified, one can construct an interaction-driven dispersion model whose dependency graph is a DAG (rather than a DAG$^{-}$) and whose asymptotic characterization matches that of the original model. With this reduction in hand, we then present an iterative procedure that provides a complete asymptotic characterization of all states when the dependency graph is a DAG.

To this end, we state Proposition~\ref{prop:equivalent_model}, which asserts the existence of an equivalent model with the same asymptotic characterization. In this equivalent model, the dependency graph is a DAG rather than the original DAG$^{-}$; the proof of Proposition~\ref{prop:equivalent_model} gives a constructive procedure for building it.

\begin{proposition}\label{prop:equivalent_model}
    Under Assumption \ref{ass:trans_dag} and \ref{ass:depend_dag+}, if, for each cycle, a  persistent successor of at least one state in that  cycle in the dependency graph is identifiable, then there exists an interaction-driven dispersion model with 
    \begin{itemize}
        \item[1.] an identical transition graph,
        \item[2.] and a modified dependency graph that is DAG (instead of DAG$^{-}$), 
    \end{itemize} 
    whose asymptotic characterization is identical to the original model.
\end{proposition}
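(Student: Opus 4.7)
The plan is to prove the proposition constructively by breaking cycles in the dependency graph one at a time, adjusting the $\beta$ coefficients at each step so as to preserve both the transition graph and the asymptotic characterization. Pick any cycle $C$ in the current dependency graph. By Assumption~\ref{ass:depend_dag+} and the identifiability hypothesis, one can name a pivot state $v \in C$ together with a persistent dependency successor $k$ of $v$. By the forward direction of Theorem~\ref{thm:vanish_persist_basic}, $v$ itself must then be vanishing.

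The cycle-breaking step goes as follows. Assume first that $k \notin C$, and let $w$ be the dependency successor of $v$ along $C$, so that $(v,w) \in C$ and $w \neq k$. For every $\ell$ I would update $\beta^{\ell}_{vk} := \beta^{\ell}_{vk} + \beta^{\ell}_{vw}$ and then set $\beta^{\ell}_{vw} := 0$. The conservation law~\eqref{eq:cons_law} is preserved because the same totals are incremented on both sides of the defining identity. The transition graph is unchanged, since every transition from $v$ previously realized through interaction with $w$ is now realized through interaction with $k$. In the dependency graph the edge $(v,w)$ is removed while $(v,k)$ is retained, so cycle $C$ is broken. The sub-case $k \in C$ is handled analogously by removing a cycle-edge outgoing from the persistent state $k$ itself (whose dependency successors are all vanishing by Theorem~\ref{thm:vanish_persist_basic}) and rerouting through another of $k$'s dependency successors.

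Next, I would verify that the modified model still satisfies the hypotheses of our earlier results and preserves the asymptotic characterization. The transition graph is unchanged, so Assumption~\ref{ass:trans_dag} and Theorem~\ref{MR} remain in force. The modified dependency graph still satisfies Assumption~\ref{ass:depend_dag+}: any remaining cycle is a subgraph of an original cycle and thus inherits its pivot/persistent-successor pair, and no new cycles are created because edges are only removed, not added. Applying Theorem~\ref{thm:vanish_persist_basic} to the modified model, the vertex $v$ loses $w$ from its dependency-successor set but still retains the persistent state $k$, so $v$ stays vanishing; every other state keeps its full dependency-successor set, so by the \emph{iff} characterization of Theorem~\ref{thm:vanish_persist_basic} its persistence label is unchanged.

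Iterating this operation strictly decreases the number of cycle-participating edges, so after finitely many steps the dependency graph becomes a DAG with the original transition graph and the original asymptotic characterization, establishing the proposition. The main obstacle is the invariance argument in the previous paragraph: the modified dynamics genuinely differ from the original, so one might worry that trajectories evolve to different limits; the resolution is that Theorem~\ref{thm:vanish_persist_basic} determines persistence versus vanishing purely from dependency-graph structure, and our construction preserves Assumptions~\ref{ass:trans_dag} and~\ref{ass:depend_dag+} at every step, making the classification invariant throughout the iteration.
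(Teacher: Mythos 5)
Your overall strategy is the same as the paper's: break cycles one at a time by editing the outflow coefficients, and argue via Theorem~\ref{thm:vanish_persist_basic} that each edit leaves every state's persistence label intact because each state either keeps its full dependency-successor set or retains a persistent successor. Where you differ is the edit itself: the paper simply \emph{deletes} dependency edges (all outgoing edges of $s_C$ except $(s_C,s'_C)$ when $s'_C\notin C$, and all outgoing edges of $s'_C$ when $s'_C\in C$), whereas you \emph{reroute} mass, setting $\beta^{\ell}_{vk}:=\beta^{\ell}_{vk}+\beta^{\ell}_{vw}$ and $\beta^{\ell}_{vw}:=0$. Your rerouting is actually the more careful construction with respect to claim~1 of the proposition: it preserves the transition graph exactly (every transition out of $v$ formerly realized via $w$ is now realized via $k$), while the paper's outright deletion can in principle erase transition edges out of the modified state. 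Both proofs share the same looseness in the invariance step --- the ``iff'' of Theorem~\ref{thm:vanish_persist_basic} is a consistency condition rather than a unique determination of the labeling (a bare $2$-cycle admits two consistent labelings), and the uniqueness really comes from anchoring the backward induction at the transition sinks --- so I will not hold that against you.

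There is, however, a concrete gap in your sub-case $k\in C$. You propose to remove a cycle edge $(k,u)$ outgoing from the persistent state $k$ and to reroute it ``through another of $k$'s dependency successors,'' but such another successor need not exist. Take the $2$-cycle $C\colon v\to k\to v$ in which $k$ is the persistent dependency successor of $v$ and $v$ is $k$'s \emph{only} dependency successor. Then there is no alternative target for rerouting $\beta^{\ell}_{kv}$: rerouting to a self-loop $(k,k)$ would create a cycle violating Assumption~\ref{ass:depend_dag+} (all of $k$'s successors are vanishing, so the self-loop has no state with a persistent successor), and breaking the cycle at the other edge $(v,k)$ is worse still, since $(v,k)$ is precisely the edge to $v$'s persistent successor --- rerouting it to any other (necessarily vanishing) successor of $v$ would flip $v$ from vanishing to persistent. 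Your construction therefore has no valid move in this configuration. The paper's Case~1 resolves it by abandoning rerouting for the persistent state: delete \emph{all} outgoing dependency edges of $s'_C=k$, turning $k$ into a dependency sink; this breaks the cycle, keeps $k$ persistent (trivially, as a sink), and leaves every other state's successor set untouched. You would need to adopt this (or an equivalent) fallback for the $k\in C$ sub-case to complete your argument.
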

\begin{proof}
    Our objective is to demonstrate that, under the given conditions, the asymptotic characterization of every state can remain unchanged while the dependency graph is transformed from a DAG$^{-}$ graph to a DAG graph.
    \begin{itemize}
  \item[$\circ$]
\textit{Edge removal process.}
Let $G = (V, E)$ denote the original dependency graph. We provide an iterative method to construct a modified dependency graph $G' = (V, E')$. Choose a cycle $C$ in $G$, we identify a state $s_C \in C$ with a persistent dependency successor $s'_C$, which is possible by the explicit assumption that every state $s_C$ is identifiable. The specific modification of the edge set $E$ depends on the relationship between $s'_C$ and the cycle $C$:

\begin{itemize}
    \item \textit{Case 1: $s'_C \in C$}. In this case, we remove all outgoing dependency edges from state $s'_C$. That is, for every state $X$ such that $(s'_C, X) \in E$, we remove $(s'_C, X)$ from $E$.
    \item \textit{Case 2: $s'_C \notin C$}. In this case, we remove all outgoing dependency edges from state $s_C$, \textit{except} for the edge $(s_C, s'_C)$. That is, for every state $X$ such that $(s_C, X) \in E$ and $X \neq s'_C$, we remove $(s_C, X)$ from $E$.
\end{itemize}
By repeating the above process for each cycle we arrive at a modified edge set $E'$.

  \item[$\circ$]
\textit{Justification of edge removal process.} Let us now justify the impact of the edge removals in each case:

\begin{itemize}
    \item \textit{Justification for Case 1 ($s'_C \in C$):}
    Since $s'_C$ is persistent, by Theorem \ref{thm:vanish_persist_basic}, all of its dependency successors $X$ (if any) must be asymptotically vanishing. Removing all outgoing dependency edges from $s'_C$ does not alter $s'_C$'s persistent status; in fact, it makes $s'_C$ into a sink within the modified graph. Furthermore, the asymptotic characterization of any state $X$ that was a successor of $s'_C$ remains unaffected because, again by Theorem \ref{thm:vanish_persist_basic}, its asymptotic characterization is solely dependent on its own successors. 

    \item \textit{Justification for Case 2 ($s'_C \notin C$):}
    State $s_C$ has $s'_C$ as a persistent dependency successor. By Theorem \ref{thm:vanish_persist_basic}, state $s_C$ must be asymptotically vanishing as long as it retains its dependency on $s'_C$. Hence, removing all other outgoing dependency edges from state $s_C$ (i.e., to successors other than $s'_C$) would not alter the asymptotic characterization of $s_C$. Furthermore, the asymptotic characterization of any state $X$ that was one of these removed successors of $s_C$ remains unaffected because, again by Theorem \ref{thm:vanish_persist_basic}, its asymptotic characterization is solely dependent on its own successors. 
    
\end{itemize}

By the above justification, the proposed edge removal process thus preserves the asymptotic characterization of all states. Note that the conditions for Theorem \ref{thm:vanish_persist_basic} are still valid after each edge removal, which thus allow for this process to continue for each cycle. Crucially, by removing these edges, every cycle in the dependency graph is `broken', which provides us with a modified dependency graph $(V,E')$ that is DAG.
\end{itemize}
This completes the proof. \end{proof}

Now we have modified a DAG$^{-}$ dependency graph into a DAG graph, asymptotically characterizing each state becomes simple. The key insight is that, according to Lemma \ref{lemma:strongly_connected}, every DAG contains a sink. This persistent sink then becomes our starting point for an iterative process of applying Theorem \ref{thm:vanish_persist_basic}  leading to the complete asymptotic characterization of states. We formalize this in the following corollary, of which the proof is provided in Appendix~\ref{app:proofs}.

\begin{corollary}\label{cor:complete_char}
Under Assumptions \ref{ass:trans_dag} and \ref{ass:depend_dag+}, if the persistent successor within each cycle of the dependency graph is identifiable, then the asymptotic characterization of every state can be determined. 

Furthermore, this asymptotic characterization is independent of both the initial state values $($which are in $(0,1)$$)$ and the specific numerical values of the $\beta$ coefficients, depending only on whether each $\beta$-coefficient is zero or non-zero.
\end{corollary}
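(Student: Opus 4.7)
The plan is to use Proposition \ref{prop:equivalent_model} as a preprocessing step to reduce to a genuinely acyclic dependency graph, and then carry out an iterative labeling driven by Theorem \ref{thm:vanish_persist_basic} in reverse topological order. Since, by hypothesis, a persistent successor is identifiable for at least one state in every cycle of the original dependency graph, Proposition \ref{prop:equivalent_model} produces an equivalent interaction-driven dispersion model with the same transition graph and, crucially, a modified dependency graph $(V,E')$ that is a DAG, and whose asymptotic characterization agrees state-by-state with the original. Hence it suffices to determine the asymptotic characterization of this reduced model, which continues to satisfy Assumption \ref{ass:trans_dag} and (trivially) Assumption \ref{ass:depend_dag+}.

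Next I would run an iterative labeling on $(V,E')$. By Lemma \ref{lemma:strongly_connected}, the DAG $(V,E')$ contains at least one sink, and, more generally, admits a topological ordering in which every edge $(u,v)\in E'$ points from a smaller to a larger index. Processing states in the reverse of this order, the base case consists of the sinks: a sink has no dependency successors, so the condition ``every dependency successor is vanishing'' holds vacuously, and Theorem \ref{thm:vanish_persist_basic} declares every sink persistent. For the inductive step, when a non-sink state $i$ is considered, all of its dependency successors have already been labeled, and Theorem \ref{thm:vanish_persist_basic} prescribes that $i$ is persistent if and only if every one of them is vanishing. Iterating exhausts $V$ and yields a label for every state, establishing the first part of the corollary.

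For the independence claim, observe that the only inputs consumed by the procedure are (i) the edge set of the transition graph, which is used to verify Assumption \ref{ass:trans_dag}, and (ii) the edge set of the dependency graph, which drives both the reduction in Proposition \ref{prop:equivalent_model} and the iterative application of Theorem \ref{thm:vanish_persist_basic}. Both edge sets, by Definitions \ref{def:trans_graph} and \ref{def:dep_graph}, are determined solely by the predicates $\beta^{\,k}_{ij}>0$, i.e.\ by the sign pattern of the coefficients, and not by their numerical magnitudes. The procedure never references the initial data $y_i(0)\in(0,1)$ either. Consequently the resulting labels are a purely combinatorial invariant of the zero/non-zero pattern of $\beta$, as claimed.

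The substantive work has already been done in Proposition \ref{prop:equivalent_model} and Theorem \ref{thm:vanish_persist_basic}; the only point that requires mild care is the inductive step, where one must be sure that the reverse-topological traversal always feeds Theorem \ref{thm:vanish_persist_basic} a fully labeled successor set. This is immediate from the definition of a topological sort on the DAG $(V,E')$ and so is not a genuine obstacle.
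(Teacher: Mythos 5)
Your proposal is correct and follows essentially the same route as the paper's own proof: reduce to a DAG dependency graph via Proposition~\ref{prop:equivalent_model}, then label states in reverse topological order starting from the sinks by iterating Theorem~\ref{thm:vanish_persist_basic}, with the independence claim following because the whole procedure consumes only the zero/non-zero pattern of the $\beta$-coefficients. No gaps.
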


\subsubsection{Example: asymptotic characterization of SIR and DK models}\label{subsubsect:three_step_sir_dk}
In Section \ref{subsubsect:sir_dk_dag-}, we determined that while both the SIR and DK models exhibit a DAG transition graph, their dependency graphs are DAG$^{-}$ (not DAGs). The cycles within these dependency graphs are illustrated in Figures \ref{fig:dependency_graph_sir_dag+} and \ref{fig:dependency_graph_dk_dag+}. We now present a three-step process to ascertain the asymptotic characterization of both models.
\begin{itemize}
    \item[$\circ$] \textit{Step 1}: Verify that \textit{Assumptions \ref{ass:trans_dag} and \ref{ass:depend_dag+}} are met.
    \item[$\circ$] \textit{Step 2}: If the dependency graph contains any cycles, apply the `edge removal process' as detailed in the proof of Proposition \ref{prop:equivalent_model}.
    \item[$\circ$] \textit{Step 3}: On the (modified) DAG dependency graph, begin at the sink state(s) and iteratively determine the asymptotic characterization of its predecessors using Theorem \ref{thm:vanish_persist_basic}.
\end{itemize}
For both the SIR and DK models, {\it Step 1} has already been covered by Section \ref{subsubsect:sir_dk_dag-}. Regarding {\it Step 2}, we revisit Figure \ref{fig:dependency_graph_sir_dag+}. We observe that the persistent state 3 has predecessor state 2, which is part of the `blue cycle' $C$. In the notation of the edge removal process, this means $s_C = 2$ and $s'_C = 3$. Since $s'_C \notin C$, we are in Case 2 of the edge removal process. Consequently, all outgoing dependency edges from state $s_C = 2$ are removed, with the exception of the edge to $s'_C = 3$. This action specifically removes edge $(2,1)$, thereby breaking the blue cycle. A similar reasoning applies to the `red cycle', leading to the removal of its self-loop. For the DK model, the `green cycle' shown in Figure \ref{fig:dependency_graph_dk_dag+} is handled analogously. The resulting modified dependency graph for both models is thus identical to their respective transition graphs, as depicted in Figure~\ref{fig:transition_graph_sir_dk}. Finally, in {\it Step 3}, we determine that sink state 3 is a persistent state. Applying Theorem~\ref{thm:vanish_persist_basic}, we conclude that state 2 is vanishing. Subsequent application of the same theorem then leads to the conclusion that state 1 is persistent.

While the asymptotic characterization of the SIR and DK models is a well-established result that may not require the full depth of the theory developed here, this framework proves valuable for analyzing more complex structures, as demonstrated in the next example.

\subsubsection{Example: asymptotic characterization of a more complex model}\label{subsubsect:example_chen}

In this section, we apply the three-step procedure outlined in Section~\ref{subsubsect:three_step_sir_dk} to a more intricate model. The model is adapted from the insightful study by Chen and Wang~\cite{Chen2020}, which investigates the effects of rumor credibility, correlation, and crowd classification on the dynamics of rumor spreading. For illustrative purposes, we focus solely on the transition and dependency graphs, omitting the original objectives and substantive findings of the paper.

Consider the model as discussed in~\cite{Chen2020}\footnote{For clarity and brevity of exposition, we have chosen to omit the natural recovery mechanism. States \(1\)–\(5\) correspond to Steady Ignorant, Radical Ignorant, Exposed, Spreader, and Stifler, respectively, with labels as defined in~\cite{Chen2020}.}. Its transition and dependency graphs are shown in Figure \ref{fig:example_chen_wang_2020}.

\begin{figure}[H]
    \centering
    \begin{subfigure}[b]{0.48\textwidth}
        \centering
        \begin{tikzpicture}[
            mynode/.style={circle, draw, thick, minimum size=0.2cm, font=\bfseries}, 
            mynodeblue/.style={circle, draw, thick, minimum size=0.2cm, font=\bfseries,fill=blue,text=white}, 
            mynodered/.style={circle, draw, thick, minimum size=0.2cm, font=\bfseries,fill=red,text=white},
            myarrow/.style={-Latex, thick, draw=black},
            bidir_arrow_1to2/.style={myarrow, bend right=20,blue},
            bidir_arrow_2to1/.style={myarrow, bend right=20,blue},
            self_loop_arrow/.style={myarrow, loop right, looseness=10,red}
        ]

            \node[mynode] (N1) {1};
            \node[mynode, right=0.5cm and 0.5cm of N1] (N2) {2};
            \node[mynode, below =0.5cm and 0.5cm of N1] (N3) {3};
            
            \node[mynode, right =0.5cm and 0.5cm of N3] (N4) {4};          
            \node[mynode, below=0.8cm and 0.8cm of N3] (N5) at ($(N3)!0.5!(N4)$) {5};

            \draw[myarrow] (N1) to (N3);
            \draw[myarrow] (N1) to (N4);
            \draw[myarrow] (N2) to (N4);
            \draw[myarrow] (N3) to (N5);
            \draw[myarrow] (N4) to (N5);

        \end{tikzpicture}
        \caption{Transition graph} 
        \label{fig:transition_chen_wang_2020}
    \end{subfigure}
    \hfill
    \begin{subfigure}[b]{0.48\textwidth}
        \centering
                    \begin{tikzpicture}[
            mynode/.style={circle, draw, thick, minimum size=0.2cm, font=\bfseries}, 
            mynodeblue/.style={circle, draw, thick, minimum size=0.2cm, font=\bfseries,fill=blue,text=white}, 
            mynodered/.style={circle, draw, thick, minimum size=0.2cm, font=\bfseries,fill=red,text=white},
            myarrow/.style={-Latex, thick, draw=black},
            bidir_arrow_1to2/.style={myarrow, bend right=20,blue},
            bidir_arrow_2to1/.style={myarrow, bend right=20,blue},
            self_loop_arrow/.style={myarrow, loop right, looseness=10,red}
        ]

            \node[mynode] (N1) {1};
            \node[mynode, right=0.5cm and 0.5cm of N1] (N2) {2};
            \node[mynode, below =0.5cm and 0.5cm of N1] (N3) {3};
            
            \node[mynode, right =0.5cm and 0.5cm of N3] (N4) {4};          
            \node[mynode, below=0.8cm and 0.8cm of N3] (N5) at ($(N3)!0.5!(N4)$) {5};

            \draw[myarrow] (N1) to (N4);
            \draw[self_loop_arrow] (N4) to (N4);
            \draw[myarrow] (N2) to (N4);
            \draw[myarrow] (N3) to (N5);
            \draw[myarrow] (N4) to (N5);
            \draw[bidir_arrow_1to2] (N3) to (N4);
            \draw[bidir_arrow_2to1] (N4) to (N3);

        \end{tikzpicture}
        \caption{Dependency graph} 
        \label{fig:dependency_chen_wang_2020}
    \end{subfigure}
    \caption{Transition and dependency graph of the model in \cite{Chen2020}, with highlighted cycles in blue and red.} 
    \label{fig:example_chen_wang_2020}
\end{figure}
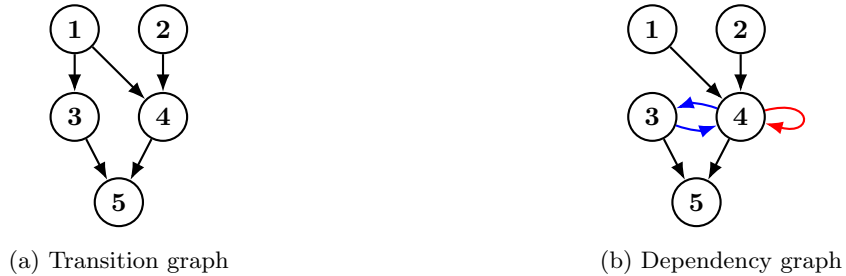

We follow the three-step procedure from Section~\ref{subsubsect:sir_dk_dag-}. For {\it Step~1}, the transition graph in Figure~\ref{fig:transition_chen_wang_2020} is acyclic, so Assumption~\ref{ass:trans_dag} holds. For {\it Step~2}, the dependency graph in Figure~\ref{fig:dependency_chen_wang_2020} contains two cycles (highlighted in red and blue)—the mutual cycle \(3\leftrightarrow4\) and the self-loop \(4\to4\). Both states \(3\) and \(4\) have the sink \(5\) as a persistent dependency successor, hence Assumption~\ref{ass:depend_dag+} is satisfied. Note that \(5\) serves as \(s'_C\) in the `{edge removal process}' from the proof of Proposition~\ref{prop:equivalent_model}. Because \(s'_C=5\) does not lie on either cycle (which involve only \(3\) and \(4\)), we are in Case~2 of that process; accordingly, we remove the edges \(3\to4\), \(4\to3\), and \(4\to4\), yielding a DAG dependency graph as shown in Figure \ref{fig:modified_depend_chen_wang_2020}. Finally, in {\it Step~3} we determine the asymptotic characterization of each state by applying Theorem \ref{thm:vanish_persist_basic} iteratively, starting from the sink state $5$, leading to the asymptotic characterization in Figure \ref{fig:asympt_char_chen_wang_2020}. 
Our conclusions align with the numerical simulations reported in~\cite{Chen2020}, and show that the (Ignorant) states 1 and 2 cannot, by design, be fully converted into (Spreader) states 3 or 4. Furthermore, Theorem~\ref{MR} implies that their proposed model is asymptotically stable --- a property not addressed in~\cite{Chen2020}.

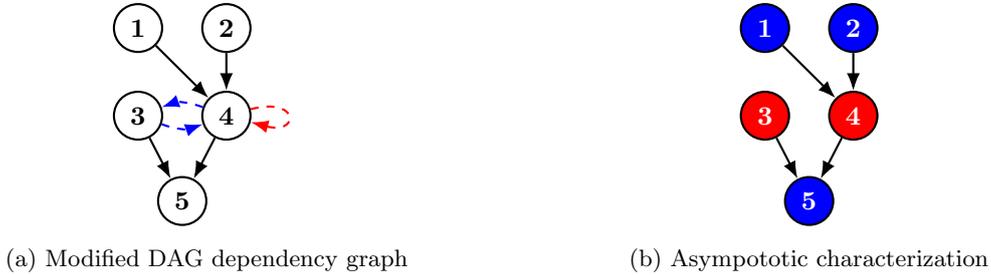
\begin{figure}[H]
    \centering
    \begin{subfigure}[b]{0.48\textwidth}
        \centering
         \begin{tikzpicture}[
            mynode/.style={circle, draw, thick, minimum size=0.2cm, font=\bfseries}, 
            mynodeblue/.style={circle, draw, thick, minimum size=0.2cm, font=\bfseries,fill=blue,text=white}, 
            mynodered/.style={circle, draw, thick, minimum size=0.2cm, font=\bfseries,fill=red,text=white},
            myarrow/.style={-Latex, thick, draw=black},
            bidir_arrow_1to2/.style={myarrow, bend right=20,blue,dashed},
            bidir_arrow_2to1/.style={myarrow, bend right=20,blue,dashed},
            self_loop_arrow/.style={myarrow, loop right, looseness=10,red,dashed}
        ]

            \node[mynode] (N1) {1};
            \node[mynode, right=0.5cm and 0.5cm of N1] (N2) {2};
            \node[mynode, below =0.5cm and 0.5cm of N1] (N3) {3};
            
            \node[mynode, right =0.5cm and 0.5cm of N3] (N4) {4};          
            \node[mynode, below=0.8cm and 0.8cm of N3] (N5) at ($(N3)!0.5!(N4)$) {5};

            \draw[myarrow] (N1) to (N4);
            \draw[self_loop_arrow] (N4) to (N4);
            \draw[myarrow] (N2) to (N4);
            \draw[myarrow] (N3) to (N5);
            \draw[myarrow] (N4) to (N5);
            \draw[bidir_arrow_1to2] (N3) to (N4);
            \draw[bidir_arrow_2to1] (N4) to (N3);

        \end{tikzpicture}
        \caption{Modified DAG dependency graph} 
        \label{fig:modified_depend_chen_wang_2020}
    \end{subfigure}
    \hfill
    \begin{subfigure}[b]{0.48\textwidth}
        \centering
            \begin{tikzpicture}[
            mynode/.style={circle, draw, thick, minimum size=0.2cm, font=\bfseries}, 
            mynodeblue/.style={circle, draw, thick, minimum size=0.2cm, font=\bfseries,fill=blue,text=white}, 
            mynodered/.style={circle, draw, thick, minimum size=0.2cm, font=\bfseries,fill=red,text=white},
            myarrow/.style={-Latex, thick, draw=black},
            bidir_arrow_1to2/.style={myarrow, bend right=20,blue},
            bidir_arrow_2to1/.style={myarrow, bend right=20,blue},
            self_loop_arrow/.style={myarrow, loop right, looseness=10,red}
        ]

            \node[mynodeblue] (N1) {1};
            \node[mynodeblue, right=0.5cm and 0.5cm of N1] (N2) {2};
            \node[mynodered, below =0.5cm and 0.5cm of N1] (N3) {3};
            
            \node[mynodered, right =0.5cm and 0.5cm of N3] (N4) {4};          
            \node[mynodeblue, below=0.8cm and 0.8cm of N3] (N5) at ($(N3)!0.5!(N4)$) {5};

            \draw[myarrow] (N1) to (N4);
            \draw[myarrow] (N2) to (N4);
            \draw[myarrow] (N3) to (N5);
            \draw[myarrow] (N4) to (N5);

        \end{tikzpicture}
        \caption{Asympototic characterization} 
        \label{fig:asympt_char_chen_wang_2020}
    \end{subfigure}
    \caption{Modified dependency graph and asymptotic characterization of the model in Chen \& Wang~\cite{Chen2020}. The dashed edges are removed in order to arrive at the modified dependency graph. The complete characterization then follows from Corollary \ref{cor:complete_char} by applying Theorem \ref{thm:vanish_persist_basic} iteratively starting with the sink state $5$. Blue: asymptotically persistent states. Red: asymptotically vanishing states.} 
    \label{fig:steps_chen_wang_2020}
\end{figure}

\subsection{Vanishing states: exponential versus non-exponential decay}

In this sub section, we examine the decay rates of vanishing states. Our central conjecture is that, under Assumptions~\ref{ass:trans_dag} and~\ref{ass:depend_dag+}, every vanishing state~$i$ decays exponentially; that is, there exist constants $C > 0$ and $r > 0$ such that $y_i(t) \leq C e^{-rt}$. These assumptions, we believe, impose sufficient structure to rule out slower decay modes. However, at present we are unable to prove this conjecture, nor have we succeeded in constructing a counterexample.
To highlight the role of Assumption~\ref{ass:depend_dag+}, we present two examples demonstrating that if this assumption is dropped — while Assumption~\ref{ass:trans_dag} is retained — vanishing states may indeed decay at a subexponential rate.

We begin by emphasizing what the integrability result in Theorem~\ref{th:integrable} does guarantee. Specifically, it ensures that vanishing states under Assumptions \ref{ass:trans_dag} and \ref{ass:depend_dag+} cannot decay at the rate $1/t$. The reason is straightforward: a decay of order $1/t$ is not integrable, and thus would contradict Theorem~\ref{th:integrable}. What the theorem does allow, however, is any decay that is slower than exponential yet still integrable --- for instance, a rate of order $1/t^2$. In this sense, Theorem~\ref{th:integrable} provides a clear lower bound: the decay of vanishing states must be faster than $1/t$.

We now turn to an informal discussion of the heuristic motivation behind our conjecture. By construction of the model, each component of the inflow or outflow can be at most quadratic in the state variables; in particular, higher-order terms (such as cubic or beyond) cannot arise. This quadratic structure is therefore the only plausible source of non-exponential decay, and it naturally suggests a decay rate on the order of $1/t$. In Examples~\ref{ex:self_cycle} and~\ref{example:3-cycle}, we show that such $1/t$-type decay does indeed occur once the assumptions are relaxed. On the other hand, the integrability theorem rules out this behavior under Assumptions \ref{ass:trans_dag} and \ref{ass:depend_dag+}. It is precisely this tension that motivates the conjecture stated below.

\begin{conjecture}\label{conj:exp_decay}
    Under Assumption \ref{ass:trans_dag} and \ref{ass:depend_dag+}, every vanishing state decays exponentially to $0$.
\end{conjecture}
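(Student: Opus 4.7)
The plan is to combine a Gr\"onwall-type bound with the structural fact implied by Theorem~\ref{thm:vanish_persist_basic}. Under Assumptions~\ref{ass:trans_dag} and~\ref{ass:depend_dag+}, a vanishing state $i$ must possess at least one \emph{persistent} dependency successor $k^{\star}$; otherwise the backward implication of Theorem~\ref{thm:vanish_persist_basic} would force $i$ itself to be persistent. Since $y_{k^{\star}}(t)\to y_{k^{\star}}^{\star}>0$ and $\beta^{i}_{ik^{\star}}>0$, there exist $T>0$ and $r_{i}>0$ with $S_{i}(t)\ge r_{i}$ for $t\ge T$, where $S_{i}$ is as in~\eqref{defS}. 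Consequently $O_{i}(t)\ge r_{i}\,y_{i}(t)$, so
\[
\dot{y}_{i}(t)\;\le\;I_{i}(t)-r_{i}\,y_{i}(t),\qquad t\ge T.
\]
By Gr\"onwall, an exponential bound on $y_{i}$ follows at once from an exponential bound on the inflow $I_{i}$, so the task reduces to controlling $I_{i}$.

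My first attempt would be induction on the topological order of the transition graph (well-defined by Assumption~\ref{ass:trans_dag}). For a vanishing source vertex $I_{i}\equiv 0$, and the inequality above yields $y_{i}(t)\le Ce^{-r_{i}(t-T)}$ outright. For the inductive step, write $I_{i}(t)=\sum \beta^{i}_{nk}\,y_{n}(t)\,y_{k}(t)$, summed over transition predecessors $n$ and interaction partners $k$. Because $I_{i}(t)\to 0$ and every summand has a nonnegative limit, each active term must satisfy $y_{n}^{\star}y_{k}^{\star}=0$. When $n$ is vanishing, the induction hypothesis already provides $y_{n}(t)\le Ce^{-\rho t}$. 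The awkward case is when $n$ is persistent and $k$ is vanishing, because the interaction partner $k$ need not lie before $i$ in the transition order; it may be a later state or even $i$ itself.

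To close this gap I would linearize the subsystem of vanishing states around the equilibrium. Writing $u_{i}(t)=y_{i}(t)$ for $i$ in the set $V$ of vanishing states, the leading-order dynamics take the form
\[
\dot{u}_{i}\;=\;-S_{i}^{\star}\,u_{i}\,+\,\sum_{j\in V}a_{ij}\,u_{j}\,+\,\mathcal{O}(\|u\|^{2}),\qquad i\in V,
\]
with coefficients $a_{ij}\ge 0$ collecting the cross-couplings from the inflow (each $a_{ij}$ is a sum of $\beta$-weights in which the other factor of the quadratic term is a persistent state evaluated at its limit). The matrix $A$ with entries $a_{ij}-S_{i}^{\star}\delta_{ij}$ is Metzler with strictly negative diagonal, and the whole problem reduces to showing that $A$ is Hurwitz. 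The plan is to construct a positive vector $v$ with $Av\prec 0$ by inducting along the dependency graph, using the DAG$^{-}$ structure to peel off vanishing states whose only dependency successors are persistent drains (and then proceeding upward), after first taming the cycles in the dependency graph via the edge-removal procedure from Proposition~\ref{prop:equivalent_model}.

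The main obstacle is precisely this Hurwitz step. The off-diagonal couplings $a_{ij}$ are not aligned with the transition order, and ruling out a zero or positive eigenvalue of $A$ requires both Assumption~\ref{ass:trans_dag} (to anchor the induction at persistent sinks) and Assumption~\ref{ass:depend_dag+} (to prevent slow self-reinforcing cycles of vanishing states, which are exactly the mechanism producing the $1/t$-type decay observed in Examples~\ref{ex:self_cycle} and~\ref{example:3-cycle}). If the Hurwitz property goes through, a standard local-to-global perturbation argument combined with the global convergence from Theorem~\ref{MR} would lift the linear exponential bound to the full nonlinear trajectory and establish the conjecture. The failure modes of the construction of $v$ are, in my view, where a hidden counterexample — if one exists — would have to live.
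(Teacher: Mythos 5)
First, a point of calibration: the paper does not prove this statement. It is stated as Conjecture~\ref{conj:exp_decay} precisely because the authors could neither prove it nor find a counterexample; the only rigorous evidence they offer is Theorem~\ref{th:integrable} (which rules out non-integrable decay such as $1/t$ but not, say, $1/t^2$) together with the examples in Sections~\ref{ex:self_cycle} and~\ref{example:3-cycle} showing that dropping Assumption~\ref{ass:depend_dag+} genuinely produces algebraic decay. So there is no ``paper's own proof'' to compare against, and any complete argument you supply would be a new contribution rather than a reconstruction.

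Your opening reductions are sound and worth keeping: by the backward implication of Theorem~\ref{thm:vanish_persist_basic}, every vanishing state $i$ has a persistent dependency successor (a state with no dependency successors is vacuously persistent), so $S_i(t)\ge r_i>0$ eventually and $\dot y_i\le I_i - r_i y_i$; the topological induction then disposes of vanishing sources and of inflow terms whose transition predecessor $n$ is itself vanishing. You also correctly locate where this induction breaks: a term $\beta^i_{nk}y_n y_k$ with $n$ persistent forces $k$ to be vanishing (since $I_i\to 0$ and each summand has a nonnegative limit), but $k$ is an interaction partner, not a transition predecessor, so it need not precede $i$ in any ordering. The genuine gap is that everything after that point is a plan, not a proof. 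The Hurwitz claim for the Metzler matrix $A$ is asserted, not established: the row sums of $A$ are not obviously negative (the diagonal entry $-S_i^\star$ involves only the $\beta^i_{i\cdot}$ weights against persistent limits, while the off-diagonal mass $\sum_j a_{ij}$ collects unrelated inflow weights $\beta^i_{nj}y_n^\star$), so neither diagonal dominance nor an easy Perron--Frobenius argument closes it, and constructing the positive vector $v$ with $Av\prec 0$ is exactly the ``particular edge case'' obstacle the authors report. Two further technical debts would also need to be paid even if $A$ were shown Hurwitz: the coefficients of your linearized system are time-varying (persistent states only converge to their limits asymptotically, and nothing quantitative is known about \emph{their} convergence rate, so the perturbation of $A$ is $o(1)$ but not exponentially small), and the $\mathcal{O}(\|u\|^2)$ remainder must be controlled using only the qualitative convergence from Theorem~\ref{MR} and the integrability from Theorem~\ref{th:integrable}. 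In short: a promising and correctly diagnosed strategy, but not a proof, and it stalls at the same structural obstruction that keeps the statement a conjecture in the paper.
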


To demonstrate the necessity of Assumption~\ref{ass:depend_dag+} and the occurrence of non-exponential decay, we present two illustrative examples: one where the slower decay arises from a self-cycle, and another where it results from a 3-cycle.

\subsubsection{Example: inversely proportional time decay as result of self-cycle}\label{ex:self_cycle}

Consider two states with $y_1(0)+y_2(0)=1$ and $y_i(0) > 0$ for $i = 1,2$. Let the only nonzero rates be $\beta^{1}_{11}=\beta^{2}_{11}=\alpha>0$. Then the system is
\[
\frac{dy_1}{dt}=-\alpha y_1^2,\qquad \frac{dy_2}{dt}=\alpha y_1^2.
\]
The dependency graph has a self-cycle at $1$, and the transition graph has the single edge $1\to 2$ as depicted in Figure \ref{fig:graphs_selfloop_example}.

\begin{figure}[H]
    \centering
    \begin{subfigure}[b]{0.48\textwidth}
        \centering
        \begin{tikzpicture}[
            mynode/.style={circle, draw, thick, minimum size=0.35cm},
            myarrow/.style={-Latex, thick, draw=black},
            highlight/.style={myarrow, very thick}
        ]
            \node[mynode] (N1) {1};
            \node[mynode, below=0.9cm of N1] (N2) {2};

            \draw[highlight] (N1) -- (N2);
        \end{tikzpicture}
        \caption{Transition graph: single edge $1\to 2$}
        \label{fig:transition_graph_selfloop_example}
    \end{subfigure}\hfill
    \begin{subfigure}[b]{0.48\textwidth}
        \centering
        \begin{tikzpicture}[
            mynode/.style={circle, draw, thick, minimum size=0.35cm},
            myarrow/.style={-Latex, thick, draw=black},
            selfloop/.style={myarrow, loop right, looseness=10, very thick}
        ]
            \node[mynode] (N1) {1};
            \node[mynode, below=0.9cm of N1] (N2) {2};

            \draw[selfloop] (N1) to (N1);
        \end{tikzpicture}
        \caption{Dependency graph: self-cycle at $1$, none at $2$}
        \label{fig:dependency_graph_selfloop_example}
    \end{subfigure}

    \caption{Two-state example of inversely proportional time decay. }
    \label{fig:graphs_selfloop_example}
\end{figure}
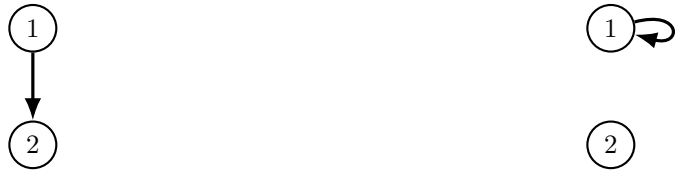
Solving gives
\[
y_1(t)=\frac{y_1(0)}{1+\alpha y_1(0)\,t},\qquad y_2(t)=1-y_1(t).
\]
Thus the vanishing state decays as
\[
y_1(t)\sim \frac{1}{\alpha t},\qquad t\to\infty,
\]
i.e., inversely proportional to time --- rather than exponentially. This confirms that, under Assumption~\ref{ass:trans_dag} but in the absence of Assumption~\ref{ass:depend_dag+}, the presence of a self-cycle can indeed lead to subexponential decay.

\subsubsection{Example: inversely proportional time decay as result a 3-cycle}\label{example:3-cycle}

Consider a four-state model with $\sum_{i=1}^4 y_i(0) = 1$ and $y_i(0) > 0$ for $i = 1, \ldots, 4$. Let the only nonzero rates be $\beta_{11}^1 = \beta_{11}^4 = \alpha_1$, $\beta_{22}^2 = \beta_{22}^4 =  \alpha_2$ and $\beta_{33}^3 = \beta_{33}^4 =  \alpha_3$. The system is then governed by the following set of differential equations:
\begin{equation}
\frac{dy_1}{dt}=-\alpha_1 y_1 y_2,\quad
\frac{dy_2}{dt}=-\alpha_2 y_2 y_3,\quad
\frac{dy_3}{dt}=-\alpha_3 y_3 y_1,\quad
\frac{dy_4}{dt}= \alpha_1 y_1 y_2 + \alpha_2 y_2 y_3 + \alpha_3 y_3 y_1.
\label{eq:three_cycle}
\end{equation}
States $1$, $2$ and $3$ form a 3-cycle in the dependency graph as shown in Figure \ref{fig:graphs_3cycle_example} where both the transition and dependency graphs have been depicted for this system.

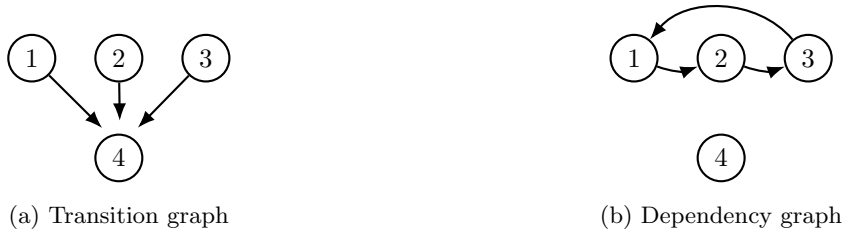
\begin{figure}[H]
    \centering
    \begin{subfigure}[b]{0.48\textwidth}
        \centering
        \begin{tikzpicture}[
            mynode/.style={circle, draw, thick, minimum size=0.35cm},
            myarrow/.style={-Latex, thick, draw=black},
        ]
            \node[mynode] (N1) {1};
            \node[mynode, right=0.5cm of N1] (N2) {2};
            \node[mynode, right=0.5cm of N2] (N3) {3};
            \node[mynode, below=1cm of N2] at (N4) at ($(N1)!0.5!(N3)$) {4};
            
            \draw[myarrow] (N1) to (N4);
            \draw[myarrow] (N2) to (N4);
            \draw[myarrow] (N3) to (N4);
            
        \end{tikzpicture}
        \caption{Transition graph}
        \label{fig:transition_graph_3cycle_example}
    \end{subfigure}\hfill
    \begin{subfigure}[b]{0.48\textwidth}
        \centering
         \begin{tikzpicture}[
            mynode/.style={circle, draw, thick, minimum size=0.35cm},
            myarrow/.style={-Latex, thick, draw=black},
            bidir_arrow_1to2/.style={myarrow, bend right=20},
            bidir_arrow_2to1/.style={myarrow, bend right=50},
        ]
            \node[mynode] (N1) {1};
            \node[mynode, right=0.5cm of N1] (N2) {2};
            \node[mynode, right=0.5cm of N2] (N3) {3};
            \node[mynode, below=1cm of N2] at (N4) at ($(N1)!0.5!(N3)$) {4};
            
            \draw[bidir_arrow_1to2] (N1) to (N2);
            \draw[bidir_arrow_1to2] (N2) to (N3);
            \draw[bidir_arrow_2to1] (N3) to (N1);
            
        \end{tikzpicture}
        \caption{Dependency graph}
        \label{fig:dependency_graph_3cycle_example}
    \end{subfigure}

    \caption{Four-state example of inversely proportional time decay. }
    \label{fig:graphs_3cycle_example}
\end{figure}
Informally, for this system, states $1$, $2$ and $3$ are vanishing and state $4$ is persistent. The decay rate of at least one of the vanishing states is not faster than inversely proportional to time.
Appendix~\ref{app:other} provides formal proofs of the above two statements. These results underscore a key insight: under Assumption~\ref{ass:trans_dag}, but without relying on Assumption~\ref{ass:depend_dag+}, it is not merely self-cycles, but general cyclic structures that can fundamentally induce subexponential decay.

\section{Examples}\label{sec:examples}
In this section, we present examples from epidemiology and rumor propagation, using the theory developed in previous sections to gain insight into the dynamics and limiting behavior of the underlying models.

\subsection{Heterogeneous rumor model}\label{subsubsect:hetero_rumour}
In contrast to epidemiology, where heterogeneous models have been studied more systematically \cite{Diekmann1990,Bootsma2024}, heterogeneous rumor models have received comparatively less attention. Notable contributions include \cite{Nekovee2007, Zhu2020, Zhu2021}. In this subsection, we analyze a heterogeneous rumor model inspired by its epidemic analogue \cite{Chan2013DemographicalChange, Diekmann2013}, focusing on (i) model interpretation, (ii) the asymptotic behavior of states, and (iii) the sensitivity of vanishing-state decay rates to small perturbations in the graph structure.

We consider a population partitioned into two groups: \textit{talk-a-bit} (states 1, 2 and 3) and \textit{talk-a-lot} (states 4, 5 and 6). As the labels suggest, members of the \textit{talk-a-lot} group have a much higher per-contact probability of discussing (and thereby propagating) the rumor than those in the \textit{talk-a-bit} group.

For brevity, we present only the transition and dependency graphs in this section; the full matrix formulation is provided in Appendix~\ref{app:Bmatrix}. As indicated by the transition graph in Figure~\ref{fig:transition_hetero_rumour}, states~\(1\) and~\(4\) represent Ignorants,~\(2\) and~\(5\) are Spreaders, and~\(3\) and~\(6\) are Stiflers.

\begin{figure}[H]
    \centering
    \begin{subfigure}[b]{0.32\textwidth} 
        \centering
        \begin{tikzpicture}[
            mynode/.style={circle, draw, thick, minimum size=0.35cm},
            myarrow/.style={-Latex, thick, draw=black},
            bidir_arrow_1to2/.style={myarrow, bend right=20},
            bidir_arrow_2to1/.style={myarrow, bend right=20},
            self_loop_arrow/.style={myarrow, loop right, looseness=12}            
        ]
            \node[mynode] (N1) {1};
            \node[mynode, below=0.5cm of N1] (N2) {2};
            \node[mynode, below=0.5cm of N2] (N3) {3};

            \node[mynode, right = 0.7cm of N1] (N4) {4};
            \node[mynode, below=0.5cm of N4] (N5) {5};
            \node[mynode, below=0.5cm of N5] (N6) {6};

            \draw[myarrow] (N1) -- (N2); 
            \draw[myarrow] (N2) -- (N3); 
            \draw[myarrow] (N4) -- (N5); 
            \draw[myarrow] (N5) -- (N6); 
        \end{tikzpicture}
        \caption{Transition graph} 
        \label{fig:transition_hetero_rumour}
    \end{subfigure}
    \hfill
    \begin{subfigure}[b]{0.32\textwidth}
        \centering
        \begin{tikzpicture}[
            mynode/.style={circle, draw, thick, minimum size=0.35cm},
            myarrow/.style={-Latex, thick, draw=black},
            bidir_arrow_1to2/.style={myarrow, bend right=20},
            bidir_arrow_2to1/.style={myarrow, bend right=20},
            self_loop_arrow/.style={myarrow, loop right, looseness=12},
            self_loop_arrow2/.style={myarrow, loop left, looseness=12}
        ]
            \node[mynode] (N1) {1};
            \node[mynode, below=0.5cm of N1] (N2) {2};
            \node[mynode, below=0.5cm of N2] (N3) {3};

            \node[mynode, right = 0.7cm of N1] (N4) {4};
            \node[mynode, below=0.5cm of N4] (N5) {5};
            \node[mynode, below=0.5cm of N5] (N6) {6};

            \draw[myarrow] (N1) to (N2);
            \draw[myarrow] (N1) to (N5);
            \draw[myarrow] (N4) to (N2);
            \draw[myarrow] (N4) to (N5);
            \draw[bidir_arrow_1to2] (N2) to (N5);
            \draw[bidir_arrow_2to1] (N5) to (N2);
            \draw[self_loop_arrow] (N5) to (N5);
            \draw[self_loop_arrow2] (N2) to (N2);
            \draw[myarrow] (N2) to (N3);
            \draw[myarrow] (N5) to (N6);
            \draw[myarrow] (N5) to (N3);
            \draw[myarrow] (N2) to (N6);

        \end{tikzpicture}
        \caption{Original dependency graph} 
        \label{fig:dependency_hetero_dk_orig}
    \end{subfigure}
    \hfill
    \begin{subfigure}[b]{0.32\textwidth}
        \centering
              \begin{tikzpicture}[
            mynode/.style={circle, draw, thick, minimum size=0.35cm},
            myarrow/.style={-Latex, thick, draw=black},
            myarrow2/.style={-Latex, thick, draw=red,dashed},
            bidir_arrow_1to2/.style={myarrow, bend right=20},
            bidir_arrow_2to1/.style={myarrow, bend right=20},
            self_loop_arrow/.style={myarrow, loop right, looseness=12},
            self_loop_arrow2/.style={myarrow, loop left, looseness=12}
        ]
            \node[mynode] (N1) {1};
            \node[mynode, below=0.5cm of N1] (N2) {2};
            \node[mynode, below=0.5cm of N2] (N3) {3};

            \node[mynode, right = 0.7cm of N1] (N4) {4};
            \node[mynode, below=0.5cm of N4] (N5) {5};
            \node[mynode, below=0.5cm of N5] (N6) {6};

            \draw[myarrow] (N1) to (N2);
            \draw[myarrow] (N1) to (N5);
            \draw[myarrow] (N4) to (N2);
            \draw[myarrow] (N4) to (N5);
            \draw[bidir_arrow_1to2] (N2) to (N5);
            \draw[bidir_arrow_2to1] (N5) to (N2);
            \draw[self_loop_arrow] (N5) to (N5);
            \draw[self_loop_arrow2] (N2) to (N2);
            \draw[myarrow] (N2) to (N3);
            \draw[myarrow] (N2) to (N6);
            \draw[myarrow2] (N5) to (N6);
            \draw[myarrow2] (N5) to (N3);

        \end{tikzpicture}
        \caption{Alternative dependency graph} 
        \label{fig:dependency_hetero_dk_alt}
    \end{subfigure}
    \caption{Transition and dependency graph of the heterogeneous rumor model, including the alternative dependency graph where the removed edges are marked in dashed red.} 
    \label{fig:graphs_sir_hetero}
\end{figure}
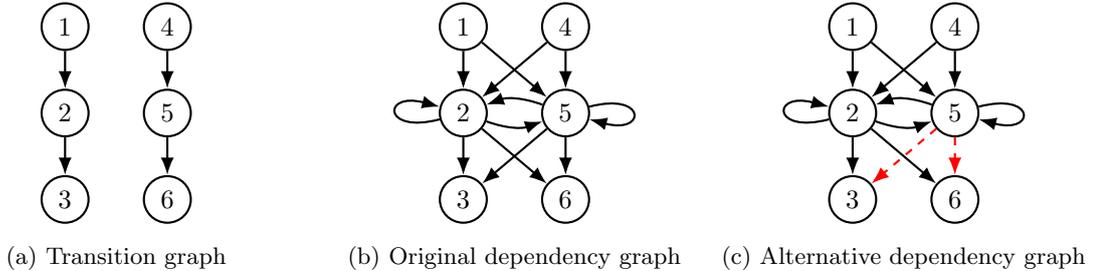

The dependency graph in Figure~\ref{fig:dependency_hetero_dk_orig} is obtained by taking two copies of the DK dependency graph (Figure~\ref{fig:dependency_graph_dk}) — one for each subgroup — and then augmenting them with intergroup dependency edges. 

Let us now go through the three-step process in Section \ref{subsubsect:three_step_sir_dk}. First, we confirm that the transition graph in Figure \ref{fig:transition_hetero_rumour} is a DAG, hence satisfying Assumption \ref{ass:trans_dag}. The cycles in the dependency graph of Figure \ref{fig:dependency_hetero_dk_orig} all have persistent dependency successors state 3 and 6, hence Assumption \ref{ass:depend_dag+} is satisfied as well. By following the edge removal process —  illustrated in Section~\ref{subsubsect:example_chen} — and iteratively applying Theorem \ref{thm:vanish_persist_basic}, we arrive at the asymptotic characterization of the state variables in Figure \ref{fig:asympt_orig_hetero_rumour}.

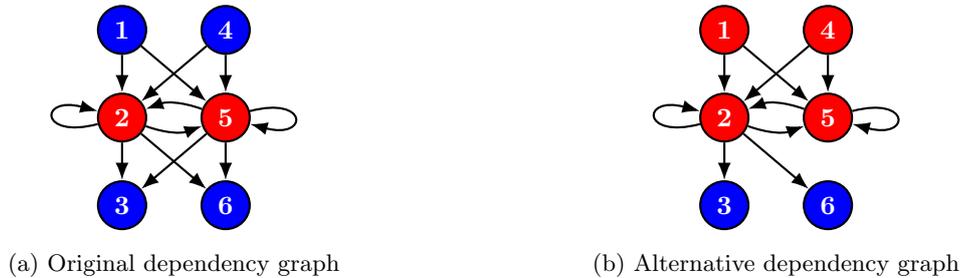
\begin{figure}[H]
    \centering    
    \begin{subfigure}[b]{0.48\textwidth}
        \centering
        \begin{tikzpicture}[
            mynode/.style={circle, draw, thick, minimum size=0.35cm},
            mynodeblue/.style={circle, draw, thick, minimum size=0.2cm, font=\bfseries,fill=blue,text=white}, 
            mynodered/.style={circle, draw, thick, minimum size=0.2cm, font=\bfseries,fill=red,text=white},
            myarrow/.style={-Latex, thick, draw=black},
            bidir_arrow_1to2/.style={myarrow, bend right=20},
            bidir_arrow_2to1/.style={myarrow, bend right=20},
            self_loop_arrow/.style={myarrow, loop right, looseness=12},
            self_loop_arrow2/.style={myarrow, loop left, looseness=12}
        ]
            \node[mynodeblue] (N1) {1};
            \node[mynodered, below=0.5cm of N1] (N2) {2};
            \node[mynodeblue, below=0.5cm of N2] (N3) {3};

            \node[mynodeblue, right = 0.7cm of N1] (N4) {4};
            \node[mynodered, below=0.5cm of N4] (N5) {5};
            \node[mynodeblue, below=0.5cm of N5] (N6) {6};

            \draw[myarrow] (N1) to (N2);
            \draw[myarrow] (N1) to (N5);
            \draw[myarrow] (N4) to (N2);
            \draw[myarrow] (N4) to (N5);
            \draw[bidir_arrow_1to2] (N2) to (N5);
            \draw[bidir_arrow_2to1] (N5) to (N2);
            \draw[self_loop_arrow] (N5) to (N5);
            \draw[self_loop_arrow2] (N2) to (N2);
            \draw[myarrow] (N2) to (N3);
            \draw[myarrow] (N5) to (N6);
            \draw[myarrow] (N5) to (N3);
            \draw[myarrow] (N2) to (N6);

        \end{tikzpicture}
        \caption{Original dependency graph} 
        \label{fig:asympt_orig_hetero_rumour}
    \end{subfigure}
    \hfill
    \begin{subfigure}[b]{0.48\textwidth}
        \centering
              \begin{tikzpicture}[
            mynode/.style={circle, draw, thick, minimum size=0.35cm},
            myarrow/.style={-Latex, thick, draw=black},
            myarrow2/.style={-Latex, thick, draw=red,dashed},
            bidir_arrow_1to2/.style={myarrow, bend right=20},
            bidir_arrow_2to1/.style={myarrow, bend right=20},
            self_loop_arrow/.style={myarrow, loop right, looseness=12},
            mynodeblue/.style={circle, draw, thick, minimum size=0.2cm, font=\bfseries,fill=blue,text=white}, 
            mynodered/.style={circle, draw, thick, minimum size=0.2cm, font=\bfseries,fill=red,text=white},
            self_loop_arrow2/.style={myarrow, loop left, looseness=12}
        ]
            \node[mynodered] (N1) {1};
            \node[mynodered, below=0.5cm of N1] (N2) {2};
            \node[mynodeblue, below=0.5cm of N2] (N3) {3};

            \node[mynodered, right = 0.7cm of N1] (N4) {4};
            \node[mynodered, below=0.5cm of N4] (N5) {5};
            \node[mynodeblue, below=0.5cm of N5] (N6) {6};

            \draw[myarrow] (N1) to (N2);
            \draw[myarrow] (N1) to (N5);
            \draw[myarrow] (N4) to (N2);
            \draw[myarrow] (N4) to (N5);
            \draw[bidir_arrow_1to2] (N2) to (N5);
            \draw[bidir_arrow_2to1] (N5) to (N2);
            \draw[self_loop_arrow] (N5) to (N5);
            \draw[self_loop_arrow2] (N2) to (N2);
            \draw[myarrow] (N2) to (N3);
            \draw[myarrow] (N2) to (N6);

        \end{tikzpicture}
        \caption{Alternative dependency graph} 
        \label{fig:asympt_alt_hetero_rumour}
    \end{subfigure}
    \caption{Dependency graph and asymptotic characterization of states in the original and alternative heterogeneous rumor model. Blue: asymptotic persistent. Red: asymptotic vanishing.} 
    \label{fig:graphs_sir_hetero_alt}
\end{figure}
Motivated by evidence that peer discussion reduces rumor belief \cite{Wang2018GroupRumor}, and that attitude change is more likely under strong argumentation \cite{PettyCacioppo1986ELM}, we adopt the following stylized mechanism: individuals in the \textit{talk-a-lot} group become Stiflers primarily after an intense, two-sided exchange with another active Spreader. Within the model, we encode this by making Spreader--Spreader interaction the sole pathway to stifling in that subgroup.
This assumptions alters the dependency graph, as shown in Figure \ref{fig:dependency_hetero_dk_alt}. Removing the dependency edges $5 \to 3$ and $5 \to 6$ violates Assumption \ref{ass:depend_dag+}, because the self-loop at state 5 does no longer has a persistent dependency successor. A numerical simulation using standard methods then indicates an alternative asymptotic characterization as shown in Figure~\ref{fig:asympt_alt_hetero_rumour}. 

\begin{figure}[H]
    \centering
    \begin{subfigure}[b]{0.45\textwidth}
        \centering
        \includegraphics[width=\textwidth]{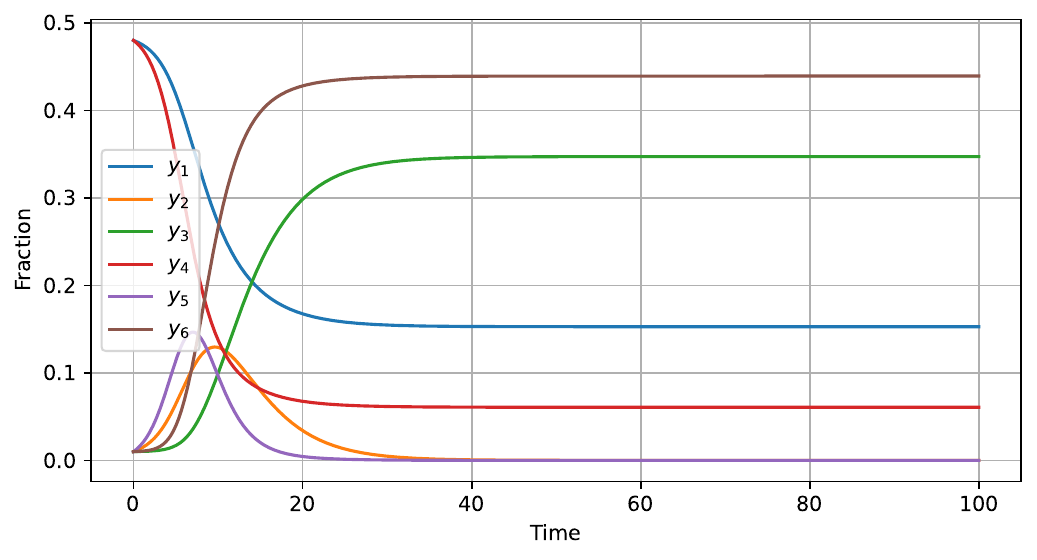}
        \caption{Orignal}
        \label{fig:hetero_dk_norm_all}
    \end{subfigure}
    \hfill
    \begin{subfigure}[b]{0.45\textwidth}
        \centering
        \includegraphics[width=\textwidth]{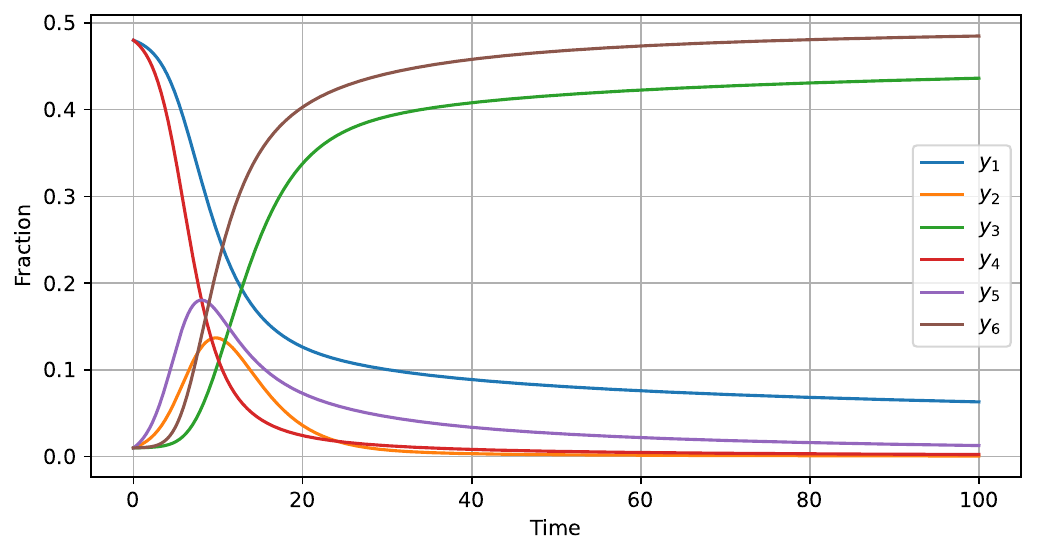}
        \caption{Alternative}
        \label{fig:hetero_dk_alt_all}
    \end{subfigure}
    \caption{Time development of the fractions $y_i$, $i=1, \ldots, 6$ in the heterogeneous rumor model and the alternative version of it. Starting conditions are $y_1(0)=0.48$, $y_2(0)=0.01$, $y_3(0)=0.01$, $y_4(0)=0.48$, $y_5(0)=0.01$, and $y_6(0)=0.01$. The numerical simulation utilized $10000$ steps with a time step of $\Delta t = 0.01$.}
    \label{fig:hetero_dk_all}
\end{figure}
A comparison of the time evolution of the state variables in the original heterogeneous rumor model (Figure~\ref{fig:hetero_dk_norm_all}) and the alternative version (Figure~\ref{fig:hetero_dk_alt_all}) lends support to our conjecture — formulated in Conjecture~\ref{conj:exp_decay} — that subexponential decay may occur when Assumption~\ref{ass:depend_dag+} is not satisfied.

\subsection{SIR-DK model}\label{subsect:sir_dk_hybrid}

Our interaction-driven dispersion framework enables multiple models to be coupled. In this subsection we combine the SIR model with the Daley–Kendall model with two aims: (i) to demonstrate how two canonical models can be integrated; and (ii) to  (numerically) explore various surprising phenomena arising out of information dissemination during an epidemic. Related work includes \cite{Teslya2022}.

Consider a nine-state model in which each state encodes both an SIR compartment and a DK status; the labeling is given in Figure~\ref{fig:state-mapping} (see Appendix~\ref{app:Bmatrix} for the complete model specification).

\begin{figure}[ht]
  \centering
  \begin{tabular}{c|ccc}
    & Ignorant & Spreader & Stifler \\ \hline
    Susceptible & 1 & 2 & 3 \\
    Infectious  & 4 & 5 & 6 \\
    Recovered   & 7 & 8 & 9 \\
  \end{tabular}
  \caption{Interpretation of states in SIR-DK model. Rows: SIR compartments. Columns: DK roles).}
  \label{fig:state-mapping}
\end{figure}

The transition graph in Figure~\ref{fig:transition_SIR-DK} allows moves along the epidemiological (SIR) axis, the rumor (DK) axis, or both simultaneously --- for example, \(1\!\to\!4\) (infection only), \(1\!\to\!2\) (information only), and \(1\!\to\!5\) (both).
\begin{figure}[H]
    \centering
    \begin{subfigure}[t]{0.32\textwidth} 
        \centering
        \begin{tikzpicture}[
            mynode/.style={circle, draw, thick, minimum size=0.35cm},
            myarrow/.style={-{Latex[open]}, thick, draw=black},
            bidir_arrow_1to2/.style={myarrow, bend right=20},
            bidir_arrow_2to1/.style={myarrow, bend right=20},
            self_loop_arrow/.style={myarrow, loop right, looseness=12}            
        ]
            \node[mynode] (N1) {1};
            \node[mynode, right=1.1cm of N1] (N2) {2};
            \node[mynode, right=1.1cm of N2] (N3) {3};

            \node[mynode, below = 0.9cm of N1] (N4) {4};
            \node[mynode, right=1.1cm of N4] (N5) {5};
            \node[mynode, right=1.1cm of N5] (N6) {6};

            \node[mynode, below = 0.9cm of N4] (N7) {7};
            \node[mynode, right=1.1cm of N7] (N8) {8};
            \node[mynode, right=1.1cm of N8] (N9) {9};

            \draw[myarrow] (N1) -- (N2); 
            \draw[myarrow] (N2) -- (N3); 
            \draw[myarrow] (N1) -- (N4); 
            \draw[myarrow] (N1) -- (N5); 
            \draw[myarrow] (N2) -- (N5); 
            \draw[myarrow] (N2) -- (N6); 
            \draw[myarrow] (N3) -- (N6); 
            \draw[myarrow] (N2) -- (N3); 
            \draw[myarrow] (N4) -- (N5); 
            \draw[myarrow] (N5) -- (N6); 
            \draw[myarrow] (N4) -- (N7); 
            \draw[myarrow] (N4) -- (N8); 
            \draw[myarrow] (N5) -- (N8); 
            \draw[myarrow] (N5) -- (N9); 
            \draw[myarrow] (N6) -- (N9); 
            \draw[myarrow] (N7) -- (N8); 
            \draw[myarrow] (N8) -- (N9); 
        \end{tikzpicture}
        \caption{Transition graph} 
        \label{fig:transition_SIR-DK}
    \end{subfigure}
    \hfill
    \begin{subfigure}[t]{0.32\textwidth}
        \centering
        \begin{tikzpicture}[
            mynode/.style={circle, draw, thick, minimum size=0.35cm},
            myarrow/.style={-{Latex[open]}, thick, draw=black},
            qmyarrow/.style={-{Latex[open]}, thick, draw=blue},
            rmyarrow/.style={-{Latex[open]}, thick, draw=red},
            bidir_arrow_1to2/.style={myarrow, bend left=60,draw=red},
            bidir_arrow_2to1/.style={myarrow, bend right=20,draw=red},
            self_loop_arrow/.style={myarrow, loop above, looseness=10,draw=red},
            self_loop_arrow2/.style={myarrow, loop right, looseness=10,draw=red},
            self_loop_arrow3/.style={myarrow, loop below, looseness=10,draw=red},
        ]
            \node[mynode] (N1) {1};
            \node[mynode, right=1.1cm of N1] (N2) {2};
            \node[mynode, right=1.1cm of N2] (N3) {3};

            \node[mynode, below = 0.9cm of N1] (N4) {4};
            \node[mynode, right=1.1cm of N4] (N5) {5};
            \node[mynode, right=1.1cm of N5] (N6) {6};

            \node[mynode, below = 0.9cm of N4] (N7) {7};
            \node[mynode, right=1.1cm of N7] (N8) {8};
            \node[mynode, right=1.1cm of N8] (N9) {9};

            \draw[qmyarrow] (N1) -- (N2); 
            \draw[qmyarrow] (N1) -- (N5); 
            \draw[qmyarrow] (N1) -- (N8); 
            \draw[qmyarrow] (N4) -- (N2); 
            \draw[qmyarrow] (N4) -- (N5); 
            \draw[qmyarrow] (N4) -- (N8); 
            \draw[qmyarrow] (N7) -- (N2); 
            \draw[qmyarrow] (N7) -- (N5); 
            \draw[qmyarrow] (N7) -- (N8); 
            
            \draw[bidir_arrow_2to1] (N2) to (N5); 
            \draw[rmyarrow] (N2) -- (N3); 
            \draw[rmyarrow] (N2) -- (N6); 
            \draw[rmyarrow] (N2) -- (N9); 
            \draw[bidir_arrow_1to2] (N2) to (N8); 
            \draw[self_loop_arrow] (N2) to (N2);

            \draw[bidir_arrow_2to1] (N5) to (N2); 
            \draw[bidir_arrow_2to1] (N5) to (N8); 
            \draw[rmyarrow] (N5) -- (N3); 
            \draw[rmyarrow] (N5) -- (N6); 
            \draw[rmyarrow] (N5) -- (N9); 
            \draw[self_loop_arrow2] (N5) to (N5);
            
            \draw[bidir_arrow_1to2] (N8) to (N2); 
            \draw[bidir_arrow_2to1] (N8) to (N5); 
            \draw[rmyarrow] (N8) -- (N3); 
            \draw[rmyarrow] (N8) -- (N6); 
            \draw[rmyarrow] (N8) -- (N9); 
            \draw[self_loop_arrow3] (N8) to (N8);          
        \end{tikzpicture}
        \caption{Dependency graph of DK-side} 
        \label{fig:dependency_SIR-DK-1}
    \end{subfigure}
    \hfill
    \begin{subfigure}[t]{0.32\textwidth}
        \centering
          \begin{tikzpicture}[
            mynode/.style={circle, draw, thick, minimum size=0.35cm},
            mynode2/.style={circle, draw, thick, minimum size=0.35cm, font=\tiny},
            myarrow/.style={-{Latex[open]}, thick, draw=black},
            p1myarrow/.style={-{Latex[open]}, thick, draw=green},
            p2myarrow/.style={-{Latex[open]}, thick, draw=purple},
            p3myarrow/.style={-{Latex[open]}, thick, draw=cyan},
            alphamyarrow/.style={-{Latex[open]}, thick, draw=orange},
            bidir_arrow_1to2/.style={myarrow, bend left=60},
            bidir_arrow_2to1/.style={myarrow, bend right=20},
            self_loop_arrow/.style={myarrow, loop above, looseness=10},
            self_loop_arrow2/.style={myarrow, loop right, looseness=10},
            self_loop_arrow3/.style={myarrow, loop below, looseness=10},
        ]
            \node[mynode] (N1) {1};
            \node[mynode, right=1.1cm of N1] (N2) {2};
            \node[mynode, right=1.1cm of N2] (N3) {3};

            \node[mynode, below = 0.9cm of N1] (N4) {4};
            \node[mynode, right=1.1cm of N4] (N5) {5};
            \node[mynode, right=1.1cm of N5] (N6) {6};

            \node[mynode, below = 0.9cm of N4] (N7) {7};
            \node[mynode, right=1.1cm of N7] (N8) {8};
            \node[mynode, right=1.1cm of N8] (N9) {9};

            \node[mynode2, right=0.2cm of N7] (N10) {All};

            \draw[p1myarrow] (N1) -- (N4); 
            \draw[p1myarrow] (N1) -- (N5); 
            \draw[p1myarrow] (N1) -- (N6); 
            \draw[p2myarrow] (N2) -- (N4); 
            \draw[p2myarrow] (N2) -- (N5); 
            \draw[p2myarrow] (N2) -- (N6); 
            \draw[p3myarrow] (N3) -- (N4); 
            \draw[p3myarrow] (N3) -- (N5); 
            \draw[p3myarrow] (N3) -- (N6); 

            \draw[alphamyarrow] (N4) -- (N10); 
            \draw[alphamyarrow] (N5) -- (N10); 
            \draw[alphamyarrow] (N6) -- (N10); 
        \end{tikzpicture}
        \caption{Dependency graph of SIR-side} 
        \label{fig:dependency_SIR-DK-2}
    \end{subfigure}
    \caption{Transition and dependency graph of the SIR-DK model. The dependency graph is separated in dependency edges that describe the rumor process (Figure \ref{fig:dependency_SIR-DK-1}) and the epidemiological process (Figure \ref{fig:dependency_SIR-DK-2}). The set of edges of the dependency graph is the union of all the edges in Figures \ref{fig:dependency_SIR-DK-1} and \ref{fig:dependency_SIR-DK-2}. In the dependency graphs, each edge-color denote dependence on a different parameter in the model. The model thus contains 6 parameters in total.}. 
    \label{fig:graphs_sir-dk}
\end{figure}
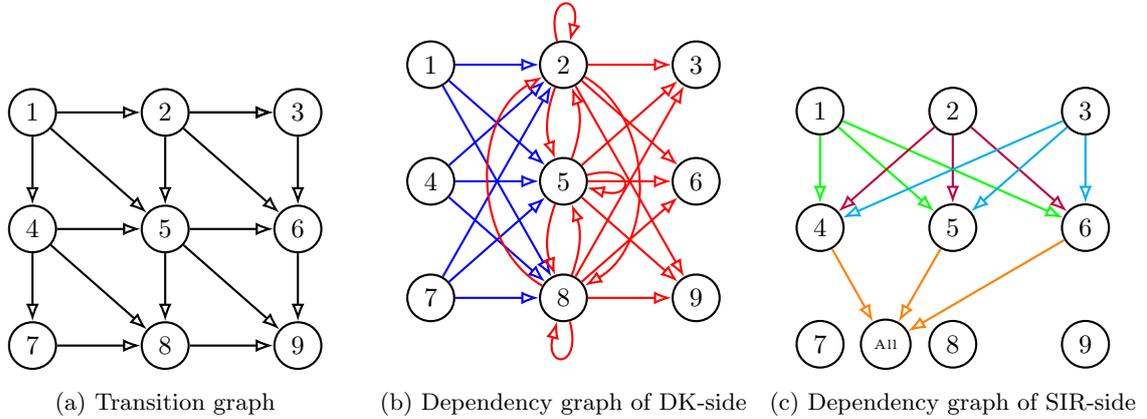
For our illustration, we keep the DK component in its classical two-parameter fluid-limit form. We introduce \(q,r\in[0,1]\): \(q\) is the per-contact rate that an ignorant individual becomes a spreader after meeting a spreader, and \(r\) is the per-contact rate that a spreader becomes a stifler after meeting either a spreader or a stifler. The corresponding dependency graph is shown in Figure~\ref{fig:dependency_SIR-DK-1}, where blue denotes \(q\)-dependent edges and red denotes \(r\)-dependent edges.

Our goal in this short example is to highlight surprising effects induced by the rumor process. To this end on the SIR-side, we let the per-contact infection rate for a susceptible depend on the individual’s DK role: we introduce parameters \(p_i\in[0,1]\) for \(i=1,2,3\), corresponding to the \textit{Ignorant}, \textit{Spreader}, and \textit{Stifler} roles, respectively (see Figure~\ref{fig:dependency_SIR-DK-2} for the SIR-side dependency graph). Recovery follows the classical SIR specification, with per-contact recovery rate \(\alpha\in[0,1]\). 
For simultaneous transition, we adopt the following rule given a contact, where $x$ denotes the transition rate in the DK-process and $y$ the transition rate in the epidemiological process: the transition occurs in both processes at rate $xy$; only in the SIR process at rate $(1 - x) y$; only in the DK process at rate $x (1 - y)$; and no transition occurs in either process at rate $(1 - x)(1 - y)$.

For brevity, we display only the modified dependency graph (a DAG after edge removal; see Proposition \ref{prop:equivalent_model}) together with the resulting asymptotic state classification in Figure \ref{fig:mod_graph_sir-dk}. The intermediate steps follow the template of the preceding sections, and we therefore omit them; the reader may verify Assumptions~\ref{ass:trans_dag} and~\ref{ass:depend_dag+} directly from the graphs.

\begin{figure}[H]
    \centering
        \begin{tikzpicture}[
            mynode/.style={circle, draw, thick, minimum size=0.35cm},
            mynodeblue/.style={circle, draw, thick, minimum size=0.2cm, font=\bfseries,fill=blue,text=white}, 
            mynodered/.style={circle, draw, thick, minimum size=0.2cm, font=\bfseries,fill=red,text=white},
            myarrow/.style={-{Latex[open]}, thick, draw=black},
            bidir_arrow_1to2/.style={myarrow, bend right=20},
            bidir_arrow_2to1/.style={myarrow, bend right=20},
            self_loop_arrow/.style={myarrow, loop right, looseness=12}            
        ]
            \node[mynodeblue] (N1) {1};
            \node[mynodered, right=1.1cm of N1] (N2) {2};
            \node[mynodeblue, right=1.1cm of N2] (N3) {3};

            \node[mynodered, below = 0.9cm of N1] (N4) {4};
            \node[mynodered, right=1.1cm of N4] (N5) {5};
            \node[mynodered, right=1.1cm of N5] (N6) {6};

            \node[mynodeblue, below = 0.9cm of N4] (N7) {7};
            \node[mynodered, right=1.1cm of N7] (N8) {8};
            \node[mynodeblue, right=1.1cm of N8] (N9) {9};

            \draw[myarrow] (N2) -- (N9); 
            \draw[myarrow] (N5) -- (N9); 
            \draw[myarrow] (N4) -- (N9); 
            \draw[myarrow] (N6) -- (N9); 
            \draw[myarrow] (N8) -- (N9); 

            \draw[myarrow] (N1) -- (N2); 
            \draw[myarrow] (N1) -- (N5); 
            \draw[myarrow] (N1) -- (N8); 
            \draw[myarrow] (N1) -- (N4); 
            \draw[myarrow] (N1) -- (N6); 

            \draw[myarrow] (N3) -- (N5); 
            \draw[myarrow] (N3) -- (N4); 
            \draw[myarrow] (N3) -- (N6); 

            \draw[myarrow] (N7) -- (N5); 
            \draw[myarrow] (N7) -- (N2); 
            \draw[myarrow] (N7) -- (N8); 
                        
        \end{tikzpicture}
    \caption{Modified dependency graph of SIR-DK model and the asymptotic characterization.}. 
    \label{fig:mod_graph_sir-dk}
\end{figure}
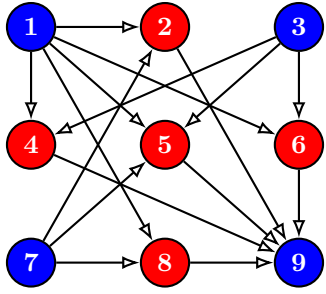

We now numerically investigate surprising behavior that arises when a rumor process is coupled with SIR dynamics. In our first analysis, we assume that individuals in the spreader role have a lower per-contact \textit{infection rate} — due, for example, to increased protective measures — compared to those in the ignorant or stifler roles. This creates dynamics where information dissemination temporarily suppresses disease spread \cite{Funk2009AwarenessPNAS,Wang2015CoupledReview}. Under this assumption, the epidemic may exhibit multi-peak behavior: a secondary wave can emerge, with the notable feature that subsequent peaks may exceed the magnitude of the initial outbreak. This phenomenon aligns with previous findings in coupled behavior–disease models \cite{Epstein2021TripleContagion,Iwasa2023CoupledJTB,Ochab2022Multiwave} as well as empirical analyses of multi-wave epidemics \cite{Cacciapaglia2021Multiwave}.  See Figure~\ref{fig:hybrid_rumour_multi_peak}.

\begin{figure}[H]
    \centering
    \begin{subfigure}[b]{0.45\textwidth}
        \centering
        \includegraphics[width=\textwidth]{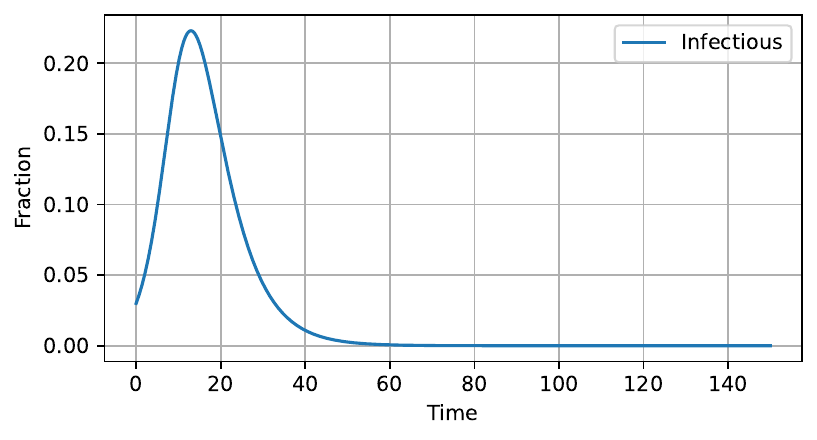}
        \caption{No information dissemination }
        \label{fig:hybrid_no_rumour_inf}
    \end{subfigure}
    \hfill
    \begin{subfigure}[b]{0.45\textwidth}
        \centering
        \includegraphics[width=\textwidth]{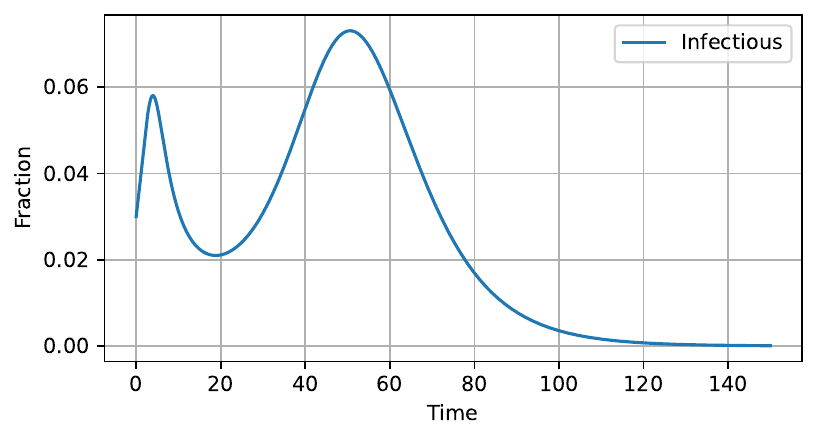}
        \caption{With information dissemination}
        \label{fig:hybrid_high_rumour_inf}
    \end{subfigure}
    \caption{Time development of Infectious  states (i.e., $y_4 + y_5 + y_6$); numerical solution obtained by applying standard numerical integration. Starting conditions are $y_1(0)=0.92$ and $y_2(0) = \ldots = y_9(0) = 0.01$. Chosen parameters are $r = 0.05$, $p_1 = 0.5$, $p_2 = 0.005$, $p_3 = 0.5$, $\alpha = 0.2$. For the scenario `No information dissemination' we used $q = 0$ and for the scenario `With information dissemination' we used $q=1$.}
    \label{fig:hybrid_rumour_multi_peak}
\end{figure}
For our second analysis we studied the effect of the \textit{information} spreading rate $q$ on the total infected fraction. Surprisingly at first glance, our analysis reveals that the fraction of the total infected population may not vary monotonically with the information spreading rate, as demonstrated in Figure \ref{fig:hybrid_phen2}. This phenomenon reflects the cost associated with experiencing two distinct outbreak waves: for low information spreading rate --- circa $q$ less than 0.15 --- the system only shows one wave. Each outbreak with `epidemic momentum' leads to overshoot beyond the herd immunity threshold even after it has been reached \cite{Nguyen2024OvershootNonlinear}. This finding challenges the conventional thinking that maximizing information transparency and dissemination is always optimal for outbreak control \cite{WHO_RCCE_GuidancePortal}.

\begin{figure}[H]
    \centering
    \begin{subfigure}[b]{0.45\textwidth}
        \centering
        \includegraphics[width=\textwidth]{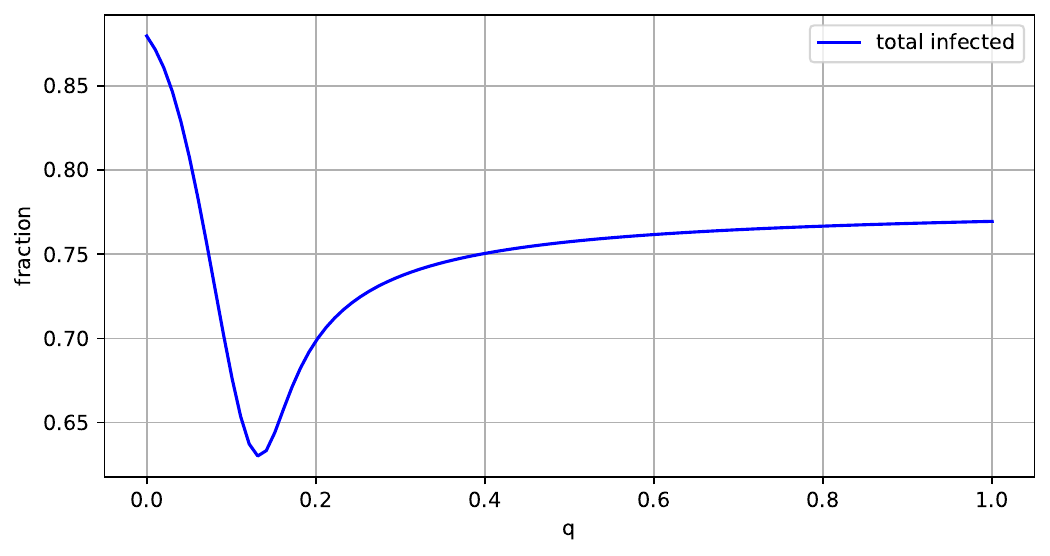}
        \caption{Impact of rumor spreading rate}
        \label{fig:hybrid_phen2}
    \end{subfigure}
    \hfill
    \begin{subfigure}[b]{0.45\textwidth}
        \centering
        \includegraphics[width=\textwidth]{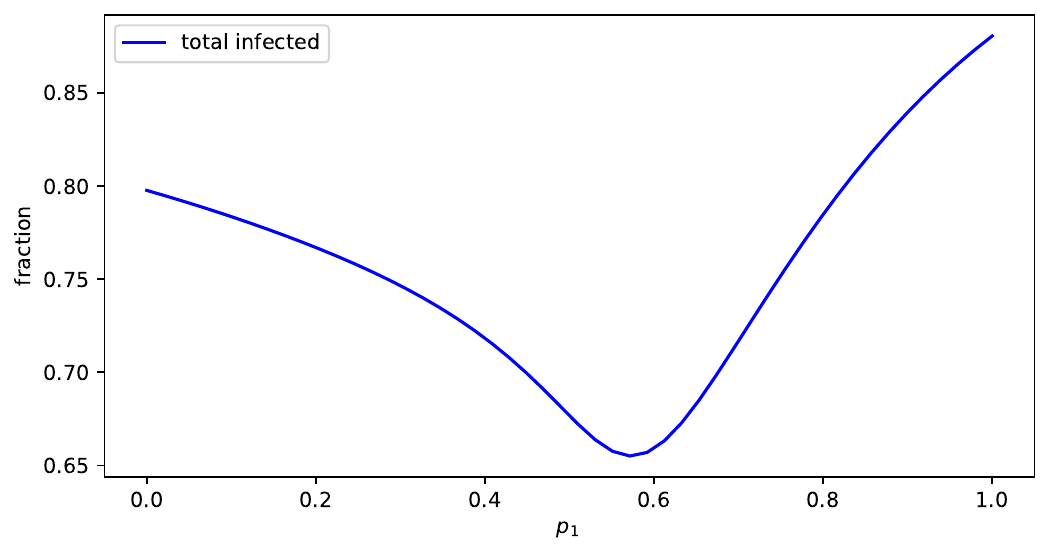}
        \caption{Impact of ignorant infection rate}
        \label{fig:hybrid_phen3}
    \end{subfigure}
    \caption{Development of total infected fraction (sum of the approximated asymptotic values of $y_7, y_8,y_9$) as function of $q$ and $p_1$, the information spreading rate. For each $q$ a numerical solution was found with the Euler method (15,000 iteration steps of timesteps of 0.02). Starting conditions are $y_1(0)=0.92$ and $y_2(0) = \ldots = y_9(0) = 0.01$. Chosen parameters are $r = 0.025$, $p_2 = 0.0025$, $p_3 = 0.25$, $\alpha = 0.1$. In Figure \ref{fig:hybrid_phen2} we used $p_1=0.25$ and in Figure \ref{fig:hybrid_phen3} we used $q=0.5$.
    \label{fig:hybrid_phen23}}
\end{figure}
In a similar vein, we examined how the ignorant \textit{infection} rate \(p_1\) affects the total infection fraction. Contrary to intuition, the relationship is non-monotone: varying \(p_1\) can either increase or decrease the total epidemic size, see Figure~\ref{fig:hybrid_phen3}. These numerical results underscore that coupling two canonical models --- under Assumptions \ref{ass:trans_dag} and \ref{ass:depend_dag+} ---  can generate substantially richer behavior than either model alone. Our framework --- via the graph representations --- provides practical tools to analyze them and extract qualitative insights.

\section{Discussion and concluding remarks}\label{sec:discussion}

In this paper we have introduced a fluid-limit, interaction-driven framework that unifies the SIR and Daley-kendall models, along with many of their extensions, thereby contributing to the literature on epidemic and rumor modeling. Furthermore, for this broad class, we have established theoretical results: under DAG and DAG$^-$ conditions on the transition and dependency graphs, respectively, we derived a classification of asymptotic behavior and formulated a conjecture on exponential decay for vanishing states. Notably, when these structural conditions are violated, the classification can break down, and decay may become non-exponential (e.g., algebraic). On the technical side, we proved asymptotic stability under a DAG transition graph, and $L^1$-integrability of vanishing states under an additional DAG$^-$ condition on the dependency graph. Taken together, these results provide a rigorous foundation for analyzing epidemic and rumor contagion — both as stand-alone processes and in coupled settings — and move beyond model-specific analyses toward general structural insight.

\bigskip

We conclude this paper by an informal discussion of our assumptions and findings, and then  outline directions for future work.

We begin by discussing two standard assumptions in this class of models: homogeneous (uniform) mixing and deterministic (fluid-limit) dynamics. Under homogeneous mixing, every individual is equally likely to contact any other, a simplification of real contact processes. Although a broad class of contact patterns such as  homophily \cite{Mark2003HomophilyDistancing} can be approximated via inflow/outflow parameters and further structure introduced by adding compartments/states, this assumption still excludes richer structures --- such as community structure, and temporal contact patterns --- that can qualitatively change dynamics and produce nonstandard phenomena \cite{Nekovee2007}. 
Our approach is also deterministic: we work in a large-population (fluid-limit) regime in which stochastic fluctuations are negligible. When a compartment becomes small, however, there remains a nonzero probability of stochastic extinction --- initial spreaders may recover before transmitting to anyone who sustains the process --- even when parameters would permit an outbreak. This intrinsically probabilistic phenomenon is not captured by our deterministic formulation. Although these assumptions are standard in theoretical work, we emphasize them here and note that relaxing them, toward finite-population stochastic models or explicit network/temporal contact structure, could enhance realism and test the robustness of our results.

An important modeling choice concerns the initial conditions. We assume that each state variable starts  within the {\it open} interval $(0,1)$, i.e., $0 < y_i(0) < 1$ for all $i$, reflecting the idea that the pathogen or rumor considered has already reached a sufficiently broad set of agent types. This assumption avoids technical boundary issues and ensures that the system remains in the interior of the state space, simplifying the analysis. That said, it would be useful in future work to examine whether the main conclusions continue to hold when this assumption is relaxed — for instance, by allowing some state variables to start at $0$ or $1$, thereby modeling introduction or extinction events more directly.

A central open question is captured by Conjecture~\ref{conj:exp_decay}, which asserts that, under Assumptions~\ref{ass:trans_dag} and~\ref{ass:depend_dag+} (i.e., DAG and DAG$^-$ conditions on the transition and dependency graphs), all vanishing states decay exponentially. The available theoretical evidence and numerical illustrations strongly support this behavior. Yet, every proof attempt we have pursued encounters the same obstacle in a particular edge case — suggesting that a deeper structural insight is still needed to complete the argument. We highlight this as a promising open problem, and would welcome any rigorous resolution.

Finally, we emphasize the scope and generality of the proposed interaction-driven framework, as well as the structural results it yields. Our framework subsumes a broad family of spreading models — including classical examples such as the SIR and Daley-Kendall (DK) models and many of their extensions — and allows for general transition and dependency graphs. In particular, when these graphs coincide and are DAGs, the resulting dynamics give rise to a clear alternating asymptotic pattern of vanishing and persistent states, as illustrated in the symmetric example of Figure~\ref{fig:example_theorem_interdependence}. Importantly, although we assume constant coefficients $\beta$ in the presentation, the theoretical results rely only on the \emph{support} of $\beta$ — that is, whether an entry is identically zero or strictly positive — not on the precise magnitudes. As a result, the conclusions remain valid for time-dependent coefficients $\beta(t)$, provided the zero/nonzero structure remains fixed over time. This significantly broadens the applicability of our results, allowing for models with time-varying spreading or infection rates (e.g., \cite{Lin2020}). In sum, our analysis offers structural insights that go beyond model-specific case studies, highlighting how graph-theoretic properties shape long-term behavior across a wide class of spreading processes.

{\small
\bibliographystyle{ieeetr}
\bibliography{rumour}}

\appendix 

\section{Proofs}\label{app:proofs}

\begin{restatement_lem}[Lemma \ref{lemma:strongly_connected} in main text]
    A directed graph $(V,E)$ that is a DAG contains at least one sink and one source.
\end{restatement_lem}

\begin{proof}
    We prove the following equivalent statement: a directed graph $(V,E)$ with no sinks or sources contains at least one cycle, hence not being a DAG. We focus here on the case with no sinks, but a similar reasoning can be applied on the case with no sources. 
    
    Consider vertex $v$ of a directed graph $(V,E)$ with no sinks. Since vertex $v$ is not a sink it must have at least one outgoing edge. Choose an outgoing edge ${(v,u)}$. Now, since $u$ is also not a sink, $u$ must also have an outgoing edge. If this outgoing edge is ${(u,v)}$, then the subgraph consisting of vertices $v$ and $u$ and edges ${(v,u)}$, ${(u,v)}$ is a cycle. Now, since the directed graph consists of $N$ vertices, it takes at most $N$ repetitions of the above reasoning before we encounter a vertex with a successor that is encountered before, thus closing a cycle. It follows that the directed graph $(V,E)$ with no sinks contains at least one cycle.

\end{proof}
\begin{restatement_lem}[Lemma \ref{lemma:bounded_y} in main text] 
    Assume $0 < y_j(0) < 1$, $j = 1, \ldots, N$ with $\sum_j y_j(0) = 1$. Let $\bm{y}(t)= (y_1(t), \ldots, y_N(t))^{\top}$ satisfy Equation \eqref{eq:model_notnatural}. Then $0 < y_j(t) < 1, j = 1, \ldots, N$ for any $t \geq 0$. 
\end{restatement_lem}
\begin{proof}
We prove this by contradiction. 
    To this end, assume  the existence of a $t_1$ and $i \in \{1,2, \ldots, N\}$ such that $y_i(t_1) \leq 0$. Then the set $A:= \cup_{j=1}^N A_j$, with $A_j := \{t \geq 0: y_j(t) \leq 0 \}$, is non-empty. Now, with $t_0 := \inf\{t\ge 0: t\in A\}$, there exists a $k \in \{1, \ldots, N\}$ such that $y_k(t_0) = 0$ by continuity of $y_k$. Note that $t_0$ must be larger than $0$ since $y_k(0) > 0$. 

    By definition of $t_0$ we have $y_j(\tau) > 0 $ for $\tau \in [0,t_0)$ for all $j \in \{1,  \ldots, N\}$. Thus, the second term of the right-hand side of Equation (\ref{eq:general_1}) must be larger than $0$:
    \begin{equation*}
        \sum_{\substack{1 \leq n,k \leq N, \\n \ne j}} \beta^j_{nk} \,y_n(\tau)\, y_k(\tau) \geq 0, \text{\, for any  } \tau \in [0,t_0).
    \end{equation*}
    Since $\sum_n y_n(\tau) = 1$, we have by the same reasoning that $y_j(\tau) < 1$ for $\tau \in [0,t_0)$. 
    
    We first treat the case of $\beta^j:=\sum_{1 \leq m \leq N} \beta^j_{jm} > 0$. Thus, regarding the first term of the right-hand side of Equation (\ref{eq:general_1}), we find the lower bound

    \begin{equation*}
        - y_j \sum_{1 \leq m \leq N} \beta^j_{jm} y_m > - y_j \sum_{1 \leq m \leq N} \beta^j_{jm}.
    \end{equation*}
Upon combining the above, we thus have  that
    \begin{align*}
        \frac{dy_k}{dt} &= - y_k \sum_{1 \leq m \leq N} \beta^k_{km} y_m + \sum_{\substack{1 \leq n,z \leq N, \\n \ne z}} \beta^k_{nz} y_n y_z \geq - y_k \sum_{1 \leq m \leq N} \beta^k_{km} y_m + 0  \,>\, - y_k \beta^k.\label{eq:unequal_yk}
    \end{align*}
    Thus, on $[0, t_0)$, we have $y_k > g$, with $g(t)$ the solution of the differential equation $g'(t)=-g(t)\,\beta^k$ and initial value $g(0) = y_k(0) > 0$. 
    
    If we can prove that $g$ is a strictly positive function, then indeed  $y_k > g > 0$ on the interval under consideration. 
    It is elementary to show that $g$ is strictly positive. 
    Solving the differential equation for $g$ with $\beta^k > 0$, we obtain
    \begin{equation*}
        g(t) = \exp{\left(-\beta^k\left(t + \frac{-\log y_k(0)}{\beta^k}\right)\right)}=y_k(0)\exp\left(-\beta^k t\right);
    \end{equation*}
    note that $0 < y_k(0) < 1$, hence $\log y_k(0)$ is well-defined. In the case that $\beta^k = 0$ we evidently have that $g(t) = y_k(0)>0$. 

    We have arrived at a contradiction. By continuity we have $y_k(t_0) > g(t_0) > 0$, but we had concluded before that $y_k(t_0) = 0$ by the definition of $t_0 = \inf(A)$. Thus, there cannot exist an $i \in \{1,\ldots, N\}$ and a $t_1$ such that $y_i(t_1) \leq 0$. 
\end{proof}
\begin{restatement_lem}[Lemma \ref{lem:I_leq_O} in main text]
Let ${\mathscr P}_i$ denote the set of transition predecessors of state $i$. Then, the inflow to state $i$ is less than or equal to the sum of the outflows from its transition predecessors:
\[ I_i(t) \leq \sum_{k \in {\mathscr P}_i} O_k (t). \]
\end{restatement_lem}
\begin{proof}
    From the assumption $\sum_j y_k(t) = 1$ for all $t \geq 0$ we know that any outflow from a state leads to an inflow in different state(s). Stated otherwise, for any $k, j$ we have:
    $$\beta^k_{kj} = \sum_{i \ne k} \beta^i_{kj}.$$
    We denote for a state $i$ the total inflow originating from a transition predecessor $k$ by $$I_{i\leftarrow k}(t) := y_k(t) \sum_{i=1}^N \beta^i_{kj} y_j(t).$$

    This inflow $I_{i \leftarrow k}$ is one of the components that sum up to the entire outflow of state $k$: $O_k(t) = y_k(t) \sum_{j=1}^N \beta^k_{kj}y_j$. We therefore have the critical inequality:
    \begin{equation*}
        I_{i \leftarrow k} (t) \leq O_k(t).
    \end{equation*}
    We conclude by noting that $I_i$ is the sum of the inflow originating from its transition predecessors:
    \[         I_{i}(t) = \sum_{k \in {\mathscr P}_i} I_{p_i \leftarrow k}(t) \le \sum_{k \in {\mathscr P}_i} O_k(t).\]
    This completes the proof. 
\end{proof}

\begin{restatement_lem}[Lemma \ref{lem:I_and_O_finite} in main text]
    Let the transition graph be a DAG. For any state $i \in \{1, \ldots, N\}$, the integrals $\int_0^\infty I_i(t)\, dt$ and $\int_0^\infty O_i(t) \,dt$ are finite.
\end{restatement_lem}
\begin{proof}
    We prove this by induction on the transition graph. By the DAG-assumption of the transition graph, we can order and label the nodes $p_1, \ldots, p_N$ such that state $p_n$ can only have inflow from the states $p_j$ with $j < n$. We are going to prove the statement: for $k \leq m$ the integrals $\int_0^\infty I_{p_k}(t)\, dt$ and $\int_0^\infty O_{p_k}(t)\, dt$ are finite. 
    \begin{itemize}
     \item[$\circ$]
    \textit{Base case.} For $m=1$ we trivially have ${\smash \int_0^\infty I_{p_1}(t)\, dt < \infty}$ since $p_1$ is a source node in the transition graph, hence $I_{p_1}(t) = 0$. Integrating the differential equation (\ref{eq:general_1}) for state $p_1$ over the interval $[0, T)$ and taking $T \to \infty$  yields
$$y_{p_1}(\infty) - y_{p_1}(0) = \int_0^\infty I_{p_1}(t)\,dt - \int_0^\infty O_{p_1}(t)\,dt.$$ The left hand side is clearly finite since $y_{p_1}$ lies between $0$ and $1$ (and $y_{p_1}(\infty)$ is well-defined by Theorem \ref{MR}). Combining this with the result $\int_0^\infty I_{p_1}(t)\, dt < \infty$ we conclude that $\int_0^\infty O_{p_1}(t)\,dt$ must be finite as well. 
    \item[$\circ$]
    \textit{Induction step.} Assume that the statement holds for $m-1$. Our objective is to show that $\int_0^\infty I_{p_m}(t) \,dt < \infty$. By Lemma \ref{lem:I_leq_O}, and taking the integral on both sides over the interval $[0,T)$, and then taking $T \to \infty$, we obtain
    \begin{equation}\label{eq:int_I_1}
        \int_0^\infty I_{p_m}(t) \,dt  \le \sum_{k \in {\mathscr P}(p_m)} \int_0^\infty O_k(t) \,dt.
    \end{equation}

    Since the statement holds for $m-1$, we know that for any $k \in {\mathscr P}(p_m)$, the outgoing integral $\int_0^\infty O_k(t)\, dt$ is finite. This directly implies that the incoming integral for state $p_m$, i.e., $\int_0^\infty I_{p_m}(t)\,dt$, must also be finite by Equation (\ref{eq:int_I_1}). Using an identical line of reasoning as applied in the base case, we can then conclude that the outgoing integral $\int_0^\infty O_{p_m}(t)\,dt$ is finite, thereby completing the induction step for a general $m$.
       
    \end{itemize}
    We have thus proven the claim. 
\end{proof}
\begin{restatement_cor}[Corollary \ref{cor:complete_char} in main text]
Under Assumptions \ref{ass:trans_dag} and \ref{ass:depend_dag+}, if the persistent successor within each cycle of the dependency graph is identifiable, then the asymptotic characterization of every state can be determined. 

Furthermore, this asymptotic characterization is independent of both the initial state values and the specific numerical values of the $\beta$ coefficients, depending only on whether each coefficient is zero or non-zero.
\end{restatement_cor}
\begin{proof}
By Proposition \ref{prop:equivalent_model}, we can find a modified dependency graph that is a DAG and whose asymptotic characterization of every state remains unchanged. Let us consider this dependency graph. We can establish a topological ordering of the states, denoted as $p_1, p_2, \ldots, p_N$, such that if there is a dependency edge from state $p_i$ to state $p_j$, then $i < j$. This means any state $p_k$ only has dependency predecessors that appear earlier in the ordering ($p_i$ where $i < k$) and only has dependency successors that appear later in the ordering ($p_j$ where $j > k$).

We will determine the asymptotic behavior of each state by processing them in reverse topological order, starting from $p_N$ and proceeding down to $p_1$.

\begin{enumerate}
    \item[$\circ$] \textit{Base case.} The state $p_N$ is a sink in the dependency graph by virtue of the topological ordering; that is, there are no states $p_j$ such that $p_N \to p_j$ is a dependency edge. Thus, $p_N$ is definitively persistent.

    \item[$\circ$] \textit{Inductive step.} Assume that for all states $p_j$ where $k < j \le N$, their asymptotic behavior has already been precisely characterized. Now, consider state $p_k$. Any dependency successor of $p_k$ must be a state $p_j$ where $j > k$ due to our chosen topological ordering. Since the asymptotic behavior of all such $p_j$ states is known by our inductive hypothesis, we can directly apply Theorem \ref{thm:vanish_persist_basic} to state $p_k$ to determine the asymptotic characterization of state $p_k$.
\end{enumerate}
By iteratively applying this argument from $p_N$ down to $p_1$, we achieve a complete asymptotic characterization for every state in the system.

A crucial aspect of Theorem \ref{thm:vanish_persist_basic} is that the asymptotic characterization it provides is inherently independent of two factors: the specific initial values of the states (as long as they satisfy $0< y_i(0)<1$) and the precise numerical values of the $\beta$ coefficients. For the coefficients, only their binary state (zero or non-zero) is relevant.
\end{proof}
\section{Technical lemma}\label{app:techn}

\begin{lemma}\label{lemma:limit_combi}
    Consider the functions $y(t)$, $f(t)$ and $g(t)$, defined on the domain $\mathbb{R}_{\geq0}$ and satisfying the differential equation \[\frac{dy}{dt} = f(t) + g(t).\] Assume $0 < y(t) < 1$, $\lim_{t \to \infty} g(t) = 0$, and that $f(t)$ is either a non-negative function or non-positive function. Then \[\lim_{t \to \infty} f(t) = 0\quad\mbox{and}\quad \lim_{t \to \infty}\frac{dy}{dt} =0.\]
\end{lemma}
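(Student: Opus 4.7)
I would proceed by assuming without loss of generality that $f(t) \leq 0$; the case $f \geq 0$ reduces to this by replacing $y$ with $1 - y$ and flipping the signs of $f$ and $g$. The plan is to establish $f(t) \to 0$ in two stages — first in a sliding-window integral sense, then upgrading to pointwise convergence. Combined with $g(t) \to 0$, this yields $\dot y = f + g \to 0$ as well.

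For the first stage, I would fix a window length $h > 0$ and integrate the ODE over $[s, s+h]$ to obtain
\[
\int_s^{s+h} f(u)\,du \;=\; \bigl(y(s+h) - y(s)\bigr) \,-\, \int_s^{s+h} g(u)\,du.
\]
The second term on the right vanishes as $s \to \infty$ since $g \to 0$ uniformly on a moving window of fixed length. For the first term, I would invoke that $y(t)$ has a finite limit as $t \to \infty$: in the calling context of Theorem~\ref{MR} this is precisely what the monotone-convergence step on $Z_{m-1}$ and $Z_m$ has already secured before the lemma is invoked. Hence $y(s+h) - y(s) \to 0$, so $\int_s^{s+h} f(u)\,du \to 0$ for every fixed $h$.

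For the second stage, suppose for contradiction that there are $\varepsilon > 0$ and $t_n \to \infty$ with $f(t_n) \leq -\varepsilon$. I would then appeal to uniform continuity of $f$, which holds in every application of the lemma because $f = -O_i$ is Lipschitz: the state variables are $C^1$ with uniformly bounded derivatives (the ODE right-hand sides are polynomials in the bounded $y_j$'s), and so are the products that compose $O_i$. Uniform continuity supplies $\delta > 0$ with $f(t) \leq -\varepsilon/2$ on $[t_n - \delta, t_n + \delta]$, hence $\int_{t_n - \delta}^{t_n + \delta} f(u)\,du \leq -\varepsilon \delta$, contradicting the conclusion of the first stage.

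The main obstacle is this second stage: bare continuity of $f$ does not suffice to pass from vanishing sliding averages to pointwise vanishing. A narrow-bump construction (continuous bumps of height $1$ and width $1/n^2$ placed at $t = n$) yields a continuous $f \leq 0$ with $\int_0^\infty |f| < \infty$ yet $f \not\to 0$ pointwise, showing that the hypotheses stated for the lemma are strictly too weak on their own. The gap closes under the Lipschitz regularity automatically available in the paper's ODE setting; in the write-up I would either invoke this regularity explicitly in the proof or promote it to an explicit hypothesis of the lemma.
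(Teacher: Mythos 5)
Your route is genuinely different from the paper's, and it is in fact the more careful of the two. The paper argues directly: it supposes $\liminf_{t\to\infty} f(t)>0$, deduces $dy/dt\ge \epsilon/2$ on a half-line once $g$ is small, integrates over a window of length $2/\epsilon$, and contradicts $0<y<1$. The weak point is that $\liminf f>0$ is \emph{not} the negation of $f\to 0$ for a non-negative $f$; the parenthetical ``at least along a subsequence'' in the paper's proof concedes exactly the case its integration step cannot handle, namely $\limsup f>0$ with $\liminf f=0$. Your narrow-bump example shows this residual case is not vacuous under the stated hypotheses alone, so the extra machinery you introduce --- vanishing sliding-window integrals upgraded to pointwise decay via uniform continuity of $f$ --- is not pedantry but is actually needed to close the argument. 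What the paper's approach buys is brevity and the fact that it uses only $0<y(t)<1$; what yours buys is a proof that actually covers the full negation, at the price of two hypotheses not present in the lemma's statement: equicontinuity of $f$ and convergence of $y(t)$ (your Stage~1 needs $y(s+h)-y(s)\to 0$, which does not follow from boundedness alone --- one can arrange a Lipschitz $f\le 0$ and a slowly decaying $g>0$ for which $y$ oscillates forever in a fixed band, so convergence of $y$ is as essential as the regularity of $f$). Both ingredients are indeed available at the point where Theorem~\ref{MR} invokes the lemma: $\lim_{t\to\infty}y_m(t)$ has already been secured by the monotone-convergence step, and $f=-O_m$ is Lipschitz because the $y_j$ have uniformly bounded derivatives. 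Your suggestion to promote these to explicit hypotheses is the right fix; as currently stated, Lemma~\ref{lemma:limit_combi} is too weak to be true, and the paper's proof of it does not dispose of the only case in which it could fail.
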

\begin{proof}
    We provide a proof for $f$ a non-negative function, a proof for $f$ a non-positive function follows a similar reasoning. 
    Hence we consider a function $f$ such that $f(t) \geq 0$ for any $t\in{\mathbb R}_{\ge 0}$. As our goal is to prove the claim by contradiction, we assume that $\liminf_{t \to \infty} f(t) > 0$.
    Then there exists an $\epsilon > 0$ and $n > 0$ such that if $t  \geq n$ then $f(t) > \epsilon$ (at least along a subsequence). Now, since $\lim_{t \to \infty} g(t) = 0$, there exists a $k \geq n > 0 $ such that if $t > k$ then  $g(t) \geq -\tfrac{1}{2}\epsilon$. Thus, for $t > k$,
    \begin{equation}
    	\frac{dy}{dt} = f(t) + g(t) \geq \epsilon - \tfrac{1}{2}\epsilon  = \tfrac{1}{2}\epsilon.
    \end{equation}
    Integrating both sides over the interval $[T,T + \Delta)$ with $T > k$ gives
    \begin{equation}
    	y(T + \Delta) – y(T) \geq \tfrac{1}{2}\epsilon \Delta .
    \end{equation}
    Choosing $\Delta = 2 / \epsilon$, we conclude that $y(T + \Delta) \,–\, y(t) \geq 1$, or equivalently $y(T+\Delta) \geq y(T) + 1$. This contradicts the assumption of $0 < y(t) < 1$ for all $t\ge 0$. Hence, recalling that $f(t) \geq 0$, we conclude $\lim_{t \to \infty} f(t) = 0$. Noting that $\frac{dy}{dt}$ is now the sum of two functions which converge to $0$ as $t\to\infty$, so that we have that $\lim_{t \to \infty} \frac{dy}{dt} = 0$ as well.
\end{proof}

\section{Other proofs}\label{app:other}
\subsection{Proof of the statements in Section \ref{example:3-cycle}}
In this appendix, we prove the statements in Section \ref{example:3-cycle}, formalized as Propositions \ref{prop:three_cycle_lim0} and \ref{prop:three_cycle_algebraic} below. The following lemma, though elementary, is included for completeness, as it plays a role in the proof of Proposition \ref{prop:three_cycle_algebraic}.

\begin{lemma}\label{lem:quadr_ineq}
    For non-negative $x_1, x_2, x_3$ we have the inequality \[x_1 x_2 + x_1 x_3 + x_2 x_3 \; \leq \; \frac{(x_1 + x_2 + x_3)^2}{3}.\]
\end{lemma}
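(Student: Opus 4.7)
The plan is to reduce the claim to the well-known inequality $x_1^2+x_2^2+x_3^2\ge x_1 x_2+x_1 x_3+x_2 x_3$ and then prove that via a sum-of-squares identity.

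First, I would expand the right-hand side:
\[
(x_1+x_2+x_3)^2 = x_1^2+x_2^2+x_3^2 + 2(x_1 x_2+x_1 x_3+x_2 x_3).
\]
Substituting this into the claimed inequality and multiplying through by $3$, the statement becomes equivalent to
\[
3(x_1 x_2+x_1 x_3+x_2 x_3) \le x_1^2+x_2^2+x_3^2 + 2(x_1 x_2+x_1 x_3+x_2 x_3),
\]
i.e., $x_1 x_2+x_1 x_3+x_2 x_3 \le x_1^2+x_2^2+x_3^2$.

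The second step is then to establish this reduced inequality via the identity
\[
(x_1-x_2)^2 + (x_1-x_3)^2 + (x_2-x_3)^2 = 2(x_1^2+x_2^2+x_3^2) - 2(x_1 x_2+x_1 x_3+x_2 x_3).
\]
Since the left-hand side is a sum of squares, it is non-negative, yielding the required bound. Notice that non-negativity of the $x_i$ is not even needed for this argument; the inequality holds on all of $\mathbb{R}^3$.

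There is no real obstacle here: the proof is two lines of algebraic manipulation, and the only step worth being careful about is bookkeeping when expanding $(x_1+x_2+x_3)^2$ and matching coefficients. Equality occurs precisely when $x_1=x_2=x_3$, which follows immediately from the sum-of-squares representation.
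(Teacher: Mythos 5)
Your proof is correct. It is essentially the paper's argument viewed from the other end: the paper invokes Cauchy--Schwarz to get $(x_1+x_2+x_3)^2 \le 3(x_1^2+x_2^2+x_3^2)$ and then combines this with the expansion of the square, whereas you expand first, reduce the claim to $x_1x_2+x_1x_3+x_2x_3 \le x_1^2+x_2^2+x_3^2$, and verify that directly via the sum-of-squares identity $(x_1-x_2)^2+(x_1-x_3)^2+(x_2-x_3)^2 = 2(x_1^2+x_2^2+x_3^2)-2(x_1x_2+x_1x_3+x_2x_3)$. The two routes rest on the same underlying inequality; yours has the small advantage of being fully self-contained (no appeal to Cauchy--Schwarz) and of making transparent that the hypothesis of non-negativity is superfluous and that equality holds exactly when $x_1=x_2=x_3$ --- observations that also apply to, but are not made explicit in, the paper's version.
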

\begin{proof}
    By Cauchy-Schwarz we have $(x_1 + x_2 + x_3)^2 \leq 3\,(x_1^2 + x_2^2 + x_3^2)$, hence 
    \begin{equation}\label{eq:cauchy}
        \frac{(x_1 + x_2 + x_3)^2}{3} \;\leq \; (x_1^2 + x_2^2 + x_3^2).
    \end{equation}
    Since $(x_1 + x_2 + x_3)^2 = x_1^2 + x_2^2 + x_3^2 + 2\,(x_1 x_2 + x_1 x_3 + x_2 x_3)$ we obtain
    \begin{align*}
        2\,(x_1 x_2 + x_1 x_3 + x_2 x_3) & = (x_1 + x_2 + x_3)^2 - (x_1^2 + x_2^2 + x_3^2) \leq \frac{2}{3} (x_1 + x_2 + x_3)^2,
    \end{align*}
    where the inequality follows from (\ref{eq:cauchy}). Dividing by $2$ yields the claimed inequality.
\end{proof}

\begin{proposition}\label{prop:three_cycle_lim0}
    Let $y_1,y_2,y_3,y_4$ be solutions of \,$\eqref{eq:three_cycle}$  with $\alpha_i>0$ and $y_i(0)>0$. Then
    \[\lim_{t\to\infty} y_i(t)=0, \quad i=1,2,3.\]
\end{proposition}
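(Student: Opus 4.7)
\medskip

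\noindent\textbf{Proof proposal.} The plan is to argue by contradiction, after first exploiting the fact that each of $y_1,y_2,y_3$ is monotone. From the system~\eqref{eq:three_cycle}, for $i=1,2,3$ we have $dy_i/dt\le 0$, and by Lemma~\ref{lemma:bounded_y} each $y_i$ is bounded in $(0,1)$. Hence the monotone convergence theorem yields limits $y_i^{*}:=\lim_{t\to\infty}y_i(t)\in[0,1)$ for $i=1,2,3$. The goal is to rule out the possibility that any $y_i^{*}>0$.

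First I would show that at most one of $y_1^{*},y_2^{*},y_3^{*}$ can be positive. Suppose, for example, that $y_1^{*}>0$ and $y_2^{*}>0$. Then for $t$ sufficiently large, $y_1(t)y_2(t)\ge \tfrac12 y_1^{*}y_2^{*}>0$, which forces $dy_1/dt\le -\tfrac{\alpha_1}{2}y_1^{*}y_2^{*}$; integrating would drive $y_1$ to $-\infty$, contradicting $y_1\ge 0$. The analogous argument for each pair $\{(i,j)\}$ with edge $i\to j$ in the dependency graph (i.e.\ $1\!\to\!2$, $2\!\to\!3$, $3\!\to\!1$) shows that no two consecutive limits in the $3$-cycle can be positive simultaneously. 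Since the cycle has length three, this means at most one of the three limits is positive.

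Next I would rule out the one-positive case by leveraging the cyclic structure. By cyclic symmetry, it suffices to consider $y_1^{*}>0$ and $y_2^{*}=y_3^{*}=0$. The equations~\eqref{eq:three_cycle} integrate explicitly to
\[
y_1(t)=y_1(0)\exp\!\Big(\!-\alpha_1\!\int_0^t y_2(s)\,ds\Big),\quad
y_2(t)=y_2(0)\exp\!\Big(\!-\alpha_2\!\int_0^t y_3(s)\,ds\Big),\quad
y_3(t)=y_3(0)\exp\!\Big(\!-\alpha_3\!\int_0^t y_1(s)\,ds\Big).
\]
Because $y_1(t)\to y_1^{*}>0$, for large $t$ we have $y_1(t)\ge y_1^{*}/2$, so $\int_0^t y_1(s)\,ds$ grows at least linearly. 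The third identity then gives $y_3(t)\le y_3(0)\exp(-\alpha_3 y_1^{*} t/2)$ for large $t$, so $\int_0^\infty y_3(s)\,ds<\infty$. Substituting this finiteness into the second identity yields $\lim_{t\to\infty}y_2(t)=y_2(0)\exp(-\alpha_2\int_0^\infty y_3(s)\,ds)>0$, which contradicts $y_2^{*}=0$. The symmetric cases $y_2^{*}>0$ and $y_3^{*}>0$ are excluded by cycling the indices. Combined with the previous paragraph, this forces $y_1^{*}=y_2^{*}=y_3^{*}=0$.

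The main subtlety is the second step: the monotone decrease of the $y_i$ alone does not immediately exclude the one-positive case, because the dynamics could plausibly drive the non-positive coordinates to zero fast enough to let the positive coordinate plateau. The cyclic closure of the dependency graph is exactly what breaks this possibility, and the explicit integral representation makes the argument transparent. I expect no further obstacle, since the cyclic symmetry of~\eqref{eq:three_cycle} reduces all three one-positive scenarios to a single case.
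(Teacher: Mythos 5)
Your proposal is correct and its core engine is the same as the paper's: assuming $y_1^{*}>0$, the log-derivative/integral representation forces exponential decay of $y_3$, hence $\int_0^\infty y_3<\infty$, hence a strictly positive limit for $y_2$, and cyclic symmetry finishes the job. The only (harmless) organizational differences are that you obtain existence of the limits from monotonicity of $y_1,y_2,y_3$ rather than from Theorem~\ref{MR}, and you insert a preliminary pairwise-exclusion step so as to contradict $y_2^{*}=0$, whereas the paper runs the chain one link further ($L_2>0\Rightarrow y_1\to 0$) and contradicts $L_1>0$ directly, making that step unnecessary.
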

\begin{proof}
    First, note that the transition graph in Figure~\ref{fig:transition_graph_3cycle_example} is a DAG; hence, by Theorem~\ref{MR}, the limits $L_i:=\lim_{t\to\infty}y_i(t)$ of the state variables exist. By Lemma~\ref{lemma:bounded_y} we have $L_i\in[0,1]$.
    
    Suppose, for the sake of contradiction, that $L_1>0$. Then there exists $T>0$ such that $y_1(t) \geq {L_1}/{2}$ for $t\geq T$.    
    Since $y_3(t)>0$ (by Lemma~\ref{lemma:bounded_y}), we may divide by $y_3(t)$ and write
    \[\frac{d}{dt}\log y_3(t)\;=\;\frac{dy_3}{dt}\frac{1}{y_3(t)}\;=\;-\alpha_3 y_1(t) \leq -\alpha_3\frac{L_1}{2}\]
    for $t\ge T$. 
    Integrating from $T$ to $t$ yields
    \begin{equation*}
        y_3(t) \leq y_3(T)\,\exp\Big(-\frac{\alpha_3 L_1}{2}(t-T)\Big).
    \end{equation*}
    In particular, by integrating again from $T$ to $t$ and sending $t \to \infty$,
    \begin{equation}\label{eq:finint}
    \int_T^\infty y_3(s)\,ds \; \leq \;\frac{2\,y_3(T)}{\alpha_3 L_1} \;<\; \infty.
    \end{equation}
    Next, using $\dfrac{d}{dt}\log y_2=-\alpha_2 y_3$ and integrating from $0$ to $t$, we obtain 
    \begin{equation*}
        y_2(t) = y_2(0)\exp\Big(-\alpha_2\int_0^t y_3(s)ds\Big).
    \end{equation*}
    Letting $t\to\infty$, using that \eqref{eq:finint} implies that $\int_0^\infty y_3(s)\, ds<\infty$, and recalling that $y_2(0) > 0$, we obtain the positive limit
    \begin{equation*}
        L_2 = \lim_{t\to\infty}y_2(t) = y_2(0)\exp\Big(-\alpha_2 \int_0^\infty y_3(s)\,ds\Big) > 0.
    \end{equation*}
    Hence there exists $T'>T$ such that $y_2(t) \geq L_2 / 2$, for $t\geq T'$.    
    Returning to the equation governing $y_1$ and dividing by $y_1(t)>0$,
    \[
    \frac{d}{dt}\log y_1(t) = \frac{dy_1}{dt}\frac{1}{y_1(t)} = -\alpha_1 y_2(t) \leq -\alpha_1 \frac{L_2}{2}, \text{ for } t\ge T'.
    \]
    Integrating from $T'$ to $t$ yields
    \[
    0 < y_1(t)\ \le\ y_1(T')\,\exp\!\Big(-\frac{\alpha_1 L_2}{2}\,(t-T')\Big),
    \]
    which entails that $y_1$ has limit $0$ since $L_2 > 0$. This contradicts the initial assumption that $L_1 = \lim_{t \to \infty} y_1(t) > 0$, so that we conclude that $L_1=0$. In view of the system's inherent symmetry, the same reasoning shows $L_2=L_3=0$, proving the claimed statement.
\end{proof}

\begin{proposition}\label{prop:three_cycle_algebraic}
    Let $y_1,y_2,y_3,y_4$ be a solution of \,$\eqref{eq:three_cycle}$ with $\alpha_i>0$ and $y_i(0)>0$. Then there exist constants $c >0$ and $k > 0$ such that 
    \[ \max \{y_1(t), y_2(t), y_3(t)\} \;\geq \; \frac{1}{c \, t + k}, \text{ for } \; t \geq 0.  \]
\end{proposition}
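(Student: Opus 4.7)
The plan is to control $\max\{y_1,y_2,y_3\}$ from below through the sum $S(t):=y_1(t)+y_2(t)+y_3(t)$, exploiting that (i) the maximum of three non-negative numbers is at least a third of their sum, and (ii) $S$ satisfies a differential inequality of the form $\dot S\ge -c\,S^{2}$, which forces an at-worst $1/t$ decay.

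First I would add the three ODEs in \eqref{eq:three_cycle}, obtaining
\[
\dot S(t) \;=\; -\bigl(\alpha_1 y_1 y_2 + \alpha_2 y_2 y_3 + \alpha_3 y_3 y_1\bigr).
\]
Setting $\alpha:=\max\{\alpha_1,\alpha_2,\alpha_3\}>0$ and using $y_i(t)\ge 0$ (Lemma~\ref{lemma:bounded_y}),
\[
\dot S(t) \;\ge\; -\alpha\bigl(y_1 y_2 + y_2 y_3 + y_3 y_1\bigr) \;\ge\; -\frac{\alpha}{3}\,S(t)^{2},
\]
where the final inequality invokes Lemma~\ref{lem:quadr_ineq}. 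Because $y_i(t)>0$ for every $t$ (again Lemma~\ref{lemma:bounded_y}), $S(t)>0$ throughout, so we may rewrite this as $\frac{d}{dt}\bigl(1/S(t)\bigr)\le \alpha/3$.

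Integrating from $0$ to $t$ yields $1/S(t)\le \alpha t/3 + 1/S(0)$, hence
\[
S(t)\;\ge\;\frac{3}{\alpha t + 3/S(0)}.
\]
Since $\max\{y_1(t),y_2(t),y_3(t)\}\ge S(t)/3$ (otherwise the sum of the three values would be strictly less than $S(t)$), we conclude
\[
\max\{y_1(t),y_2(t),y_3(t)\} \;\ge\; \frac{1}{\alpha\, t + 3/S(0)},
\]
so the claim holds with $c:=\alpha=\max_i\alpha_i$ and $k:=3/S(0)=3/(y_1(0)+y_2(0)+y_3(0))>0$.

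There is no real obstacle beyond noting that Lemma~\ref{lem:quadr_ineq} (already provided) supplies the exact quadratic bound needed to extract a Riccati-type inequality $\dot S\ge -(\alpha/3)S^2$, whose solutions decay no faster than $1/t$. Combined with Proposition~\ref{prop:three_cycle_lim0} (which showed $y_1,y_2,y_3\to 0$), the estimate above confirms that at least one of the three vanishing states cannot decay faster than inversely in $t$, matching the informal discussion in Section~\ref{example:3-cycle}.
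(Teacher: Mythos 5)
Your proposal is correct and follows essentially the same route as the paper's own proof: both differentiate $S=y_1+y_2+y_3$, invoke Lemma~\ref{lem:quadr_ineq} to obtain the Riccati-type inequality $\dot S\ge -(\alpha_{\max}/3)S^2$, integrate $1/S$, and conclude via $\max\{y_1,y_2,y_3\}\ge S/3$, arriving at the same constants $c=\alpha_{\max}$ and $k=3/S(0)$.
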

\begin{proof}
    We first derive a lower bound for $S(t):= y_1 + y_2 + y_3$; since $\max \{y_1,y_2,y_3 \} \geq S(t)/3$ this immediately yields the bound up to the factor $1/3$. 
    Differentiating $S(t)$,
    \begin{equation*}
        \frac{dS(t)}{dt} = -\alpha_1 y_1 y_2 -\alpha_2 y_2 y_3 - \alpha_3 y_3 y_1 
         \geq -\alpha_{\max}(y_1 y_2 + y_2 y_3 + y_3 y_1),
    \end{equation*}
    with $\alpha_{\max}:= \max{\{\alpha_1, \alpha_2, \alpha_3\}}$. 
    By Lemma \ref{lem:quadr_ineq},  \[(y_1 y_2 + y_2 y_3 + y_3 y_1) \;\leq\; \frac{(y_1 + y_2 + y_3)^2}{3} \;=\; \frac{S(t)^2}{3},\] 
    hence
    \[\frac{dS(t)}{dt} \geq -\frac{\alpha_{\max}}{3} S(t)^2.\] Since $S(t) > 0$ (by Lemma \ref{lemma:bounded_y} and positive initial values), we may divide by $S(t)^2$ to obtain
    \[-\frac{dS(t)}{dt}\frac{1}{S^2(t)} \leq \frac{\alpha_{\max}}{3}.\] The left-hand side is the derivative of ${1}/{S(t)}$, so integrating both sides from $0$ to $t$ gives the inequality 
    \[\frac{1}{S(t)} - \frac{1}{S(0)} \leq \frac{\alpha_{\max}}{3} t,\] or, equivalently, 
    \[S(t) \geq \frac{1}{{\alpha_{\max}}t/{3}  + {1}/{S(0)}}.\] Finally, noting that $\max{\{y_1,y_2,y_3\}} \geq S(t)/3$, we obtain
    \[\max{\{y_1,y_2,y_3\}} \; \geq \; \frac{1}{\alpha_{\max} \, t + {3}/{S(0)}}, \]
    for all $t \geq 0$. The desired is proven by choosing $c := \alpha_{\max}$ and $k := 3/ S(0)$.    
\end{proof}
\section{Explicit matrix formulations}\label{app:Bmatrix}
In this appendix we detail a number of matrix formulations that have been used in the paper. 

\subsection{Formulation of the model in matrix notation}\label{subsubsect:matrix_notation}

As $N$ (i.e., the number of states) and the complexity of the transition and dependency graphs increase, the system of coupled differential equations defined by (\ref{eq:general_1}) becomes increasingly unwieldy. Fortunately, it can be reformulated compactly in matrix notation, providing a structured representation of the system’s full dynamics. This formulation also facilitates numerical integration of (\ref{eq:general_1}) with explicit methods (e.g. forward Euler or Runge-Kutta), involving only straightforward matrix-vector multiplications.

To express the full dynamics described by (\ref{eq:general_1}) in matrix form, we define the $N \times N$ matrix $B_i$ as follows. Each row $n \ne i$ contains elements $\beta^i_{nk}$, while row $n = i$ contains elements $-\beta^i_{ik}$ (note the minus sign here):
\begin{equation} \label{eq:lambda_matrix}
    B_{i} :=     
    \begin{bmatrix}
        \beta^{i}_{11} & \beta^{i}_{12} & \beta^{i}_{13} & \cdots & \beta^{i}_{1N} \\
        \vdots & \vdots & \vdots & \ddots & \vdots \\
        -\beta^{i}_{i1} & -\beta^{i}_{i2} & -\beta^{i}_{i3} & \cdots & -\beta^{i}_{iN} \\
        
        \vdots & \vdots & \vdots & \ddots & \vdots \\
        \beta^{i}_{N1}  & \beta^{i}_{N2}  & \beta^{i}_{N3}      & \cdots & \beta^{i}_{NN}
    \end{bmatrix}
\end{equation}
with $\beta^i_{nk} \geq 0$. The column vector consisting of all fractions $y_i$ (leaving out the argument $t$ for ease) is denoted by $\bm{y} \equiv (y_1,\ldots,y_n)^{\top}$. We can now write
\begin{equation*}
    \frac{dy_i}{dt} = \bm{y}^{\top} B_{i} \bm{y}.
\end{equation*}
Now, the conservation law as described by (\ref{eq:cons_law}) can be represented by,  
\begin{equation*}\label{eq:closed_B}
    \sum_{1 \leq i \leq N} B_{i} = \mathbf{0}_{N \times N},
\end{equation*}
with $\mathbf{0}_{N \times N}$ denoting the $(N\times N)$ all-zeroes matrix.
Define the $(N^2 \times N^2)$-matrix $B$ as follows: 
\begin{equation*}
    B := 
        \begin{bmatrix}
        B_1 & \mathbf{0}_{N \times N} & \mathbf{0}_{N \times N} & \cdots & \mathbf{0}_{N \times N} \\
        \mathbf{0}_{N \times N}      & B_2 & \mathbf{0}_{N \times N} & \cdots & \mathbf{0}_{N \times N} \\
        \mathbf{0}_{N \times N}      & \mathbf{0}_{N \times N}      & B_3 & \cdots & \mathbf{0}_{N \times N} \\
        \vdots & \vdots & \vdots & \ddots & \vdots \\
        \mathbf{0}_{N \times N}      & \mathbf{0}_{N \times N}      & \mathbf{0}_{N \times N}      & \cdots & B_N,
    \end{bmatrix},
\end{equation*}
the $(N \times N^2)$-matrix $Y$ by
\begin{equation*}
    Y(\bm{y}) := 
    \begin{bmatrix}
        \bm{y}^{\top} & \mathbf{0}_{1 \times N} & \cdots & \mathbf{0}_{1 \times N} \\
        \mathbf{0}_{1 \times N} & \bm{y}^{\top} & \cdots & \mathbf{0}_{1 \times N} \\
        \vdots & \vdots & \ddots & \vdots \\
        \mathbf{0}_{1 \times N} & \mathbf{0}_{1 \times N} & \mathbf{0}_{1 \times N} & \bm{y}^{\top}
    \end{bmatrix},
\end{equation*}
and the $N$ times stacked $\bm{y}$ vector of dimension $N^2 \times 1$ by
\begin{equation*}
    \bm{\hat{y}} := 
    \begin{bmatrix}
        \bm{y}\\
        \vdots\\
        \bm{y}
    \end{bmatrix}.
\end{equation*}
Upon combining the above, we conclude that the system of coupled differential equations of the interaction-driven dispersion model (\ref{eq:general_1}) is then described equivalently by the following system of coupled differential equations:
\begin{equation} \label{eq:model_notnatural}
    \frac{d\bm{y}}{dt} = Y(\bm{y})\, B\, \hat{\bm{y}}.
\end{equation}

\subsection{Matrix formulation of SIR and DK model.}
For the SIR model we have the matrices $B_1$, $B_2$, and $B_3$ given by
\[
B_1 = 
\begin{bmatrix}
0 & -\beta & 0 \\
0 & 0 & 0 \\
0 & 0 & 0
\end{bmatrix}, \quad
B_2 = 
\begin{bmatrix}
0 & \beta & 0 \\
-\gamma & -\gamma & -\gamma \\
0 & 0 & 0
\end{bmatrix}, \quad
B_3 = 
\begin{bmatrix}
0 & 0 & 0 \\
\gamma & \gamma & \gamma \\
0 & 0 & 0
\end{bmatrix},
\]
whereas for the DK, using the notation $B_1^*$, $B_2^*$ and $B_3^*$, 
\[
B_1^* = 
\begin{bmatrix}
0 & -\theta & 0 \\
0 & 0 & 0 \\
0 & 0 & 0
\end{bmatrix}, \quad
B_2^* = 
\begin{bmatrix}
0 & \theta & 0 \\
0 & -\alpha & -\alpha \\
0 & 0 & 0
\end{bmatrix}, \quad
B_3^* = 
\begin{bmatrix}
0 & 0 & 0 \\
0 & \alpha & \alpha \\
0 & 0 & 0
\end{bmatrix}.
\]

\subsection{Matrix formulation of heterogeneous rumor model}

We provide the matrices $B_i$ as defined in (\ref{eq:lambda_matrix}), which completely describe the model dynamics of the heterogeneous rumor model in Section \ref{subsubsect:hetero_rumour}:

\begingroup 
\setlength{\arraycolsep}{3pt} 

\begin{align*}
B_1 &= \begin{bmatrix}
0 & -\frac{1}{2} & 0 & 0 & -\frac{1}{4} & 0 \\
0 & 0 & 0 & 0 & 0 & 0 \\
0 & 0 & 0 & 0 & 0 & 0 \\
0 & 0 & 0 & 0 & 0 & 0 \\
0 & 0 & 0 & 0 & 0 & 0 \\
0 & 0 & 0 & 0 & 0 & 0
\end{bmatrix},
&
B_2 &= \begin{bmatrix}
0 & \frac{1}{2} & 0 & 0 & \frac{1}{4} & 0 \\
0 & -\frac{1}{2} & -\frac{1}{2} & 0 & -\frac{1}{4} & -\frac{1}{4} \\
0 & 0 & 0 & 0 & 0 & 0 \\
0 & 0 & 0 & 0 & 0 & 0 \\
0 & 0 & 0 & 0 & 0 & 0 \\
0 & 0 & 0 & 0 & 0 & 0
\end{bmatrix},
&
B_3 &= \begin{bmatrix}
0 & 0 & 0 & 0 & 0 & 0 \\
0 & \frac{1}{2} & \frac{1}{2} & 0 & \frac{1}{4} & \frac{1}{4} \\
0 & 0 & 0 & 0 & 0 & 0 \\
0 & 0 & 0 & 0 & 0 & 0 \\
0 & 0 & 0 & 0 & 0 & 0 \\
0 & 0 & 0 & 0 & 0 & 0
\end{bmatrix}
\\[1em]
B_4 &= \begin{bmatrix}
0 & 0 & 0 & 0 & 0 & 0 \\
0 & 0 & 0 & 0 & 0 & 0 \\
0 & 0 & 0 & 0 & 0 & 0 \\
0 & -\frac{1}{2} & 0 & 0 & -1 & 0 \\
0 & 0 & 0 & 0 & 0 & 0 \\
0 & 0 & 0 & 0 & 0 & 0
\end{bmatrix},
&
B_5 &= \begin{bmatrix}
0 & 0 & 0 & 0 & 0 & 0 \\
0 & 0 & 0 & 0 & 0 & 0 \\
0 & 0 & 0 & 0 & 0 & 0 \\
0 & \frac{1}{2} & 0 & 0 & 1 & 0 \\
0 & -\frac{1}{2} & -\frac{1}{2} & 0 & -1 & -1 \\
0 & 0 & 0 & 0 & 0 & 0
\end{bmatrix},
&
B_6 &= \begin{bmatrix}
0 & 0 & 0 & 0 & 0 & 0 \\
0 & 0 & 0 & 0 & 0 & 0 \\
0 & 0 & 0 & 0 & 0 & 0 \\
0 & 0 & 0 & 0 & 0 & 0 \\
0 & \frac{1}{2} & \frac{1}{2} & 0 & 1 & 1 \\
0 & 0 & 0 & 0 & 0 & 0
\end{bmatrix}
\end{align*}
\endgroup

\subsection{Matrix formulation of SIR-DK model}

Based on the transmission assumptions in the SIR-DK as described in Section \ref{subsect:example_sir_dk}, we can systematically construct the rate matrices that govern the system dynamics. The complete model specification is captured by the following $B$-matrices defined in \eqref{eq:lambda_matrix}, which encode all possible state transitions:

\tiny 
\setlength{\arraycolsep}{1.5pt}
\setlength{\tabcolsep}{1.5pt}
\renewcommand{\arraystretch}{0.8}

\begin{align*}
B_1 &= \begin{bmatrix}
0 & -q & 0 & -p_1 & \scriptsize\begin{matrix} -q p_1 \\ -(1-q) p_1  \\ -q (1-p_1) \end{matrix} & -p_1 & 0 & -q & 0 \\
0 & 0 & 0 & 0 & 0 & 0 & 0 & 0 & 0 \\
0 & 0 & 0 & 0 & 0 & 0 & 0 & 0 & 0 \\
0 & 0 & 0 & 0 & 0 & 0 & 0 & 0 & 0 \\
0 & 0 & 0 & 0 & 0 & 0 & 0 & 0 & 0 \\
0 & 0 & 0 & 0 & 0 & 0 & 0 & 0 & 0 \\
0 & 0 & 0 & 0 & 0 & 0 & 0 & 0 & 0 \\
0 & 0 & 0 & 0 & 0 & 0 & 0 & 0 & 0 \\ 
0 & 0 & 0 & 0 & 0 & 0 & 0 & 0 & 0 
\end{bmatrix} \\[1ex]
B_2 &= \begin{bmatrix}
0 & q & 0 & 0 & q(1-p_1) & 0 & 0 & q & 0 \\
0 & -r & -r & -p_2 & -\scriptsize\begin{matrix} r p_2 + \\ (1-r) p_2 + \\ r (1-p_2) \end{matrix} & -\scriptsize\begin{matrix} r p_2 + \\ (1-r) p_2 + \\ r (1-p_2) \end{matrix} & 0 & -r & -r \\
0 & 0 & 0 & 0 & 0 & 0 & 0 & 0 & 0 \\
0 & 0 & 0 & 0 & 0 & 0 & 0 & 0 & 0 \\
0 & 0 & 0 & 0 & 0 & 0 & 0 & 0 & 0 \\
0 & 0 & 0 & 0 & 0 & 0 & 0 & 0 & 0 \\
0 & 0 & 0 & 0 & 0 & 0 & 0 & 0 & 0 \\
0 & 0 & 0 & 0 & 0 & 0 & 0 & 0 & 0 \\
0 & 0 & 0 & 0 & 0 & 0 & 0 & 0 & 0 
\end{bmatrix} \\[1ex]
B_3 &= \begin{bmatrix}
0 & 0 & 0 & 0 & 0 & 0 & 0 & 0 & 0 \\
0 & r & r & 0 & r (1-p_2) & r (1-p_2) & 0 & r & r \\
0 & 0 & 0 & -p_3 & -p_3 & -p_3 & 0 & 0 & 0 \\
0 & 0 & 0 & 0 & 0 & 0 & 0 & 0 & 0 \\
0 & 0 & 0 & 0 & 0 & 0 & 0 & 0 & 0 \\
0 & 0 & 0 & 0 & 0 & 0 & 0 & 0 & 0 \\
0 & 0 & 0 & 0 & 0 & 0 & 0 & 0 & 0 \\
0 & 0 & 0 & 0 & 0 & 0 & 0 & 0 & 0 \\
0 & 0 & 0 & 0 & 0 & 0 & 0 & 0 & 0 
\end{bmatrix} \\[1ex]
B_4 &= \begin{bmatrix}
0 & 0 & 0 & p_1 & p_1 (1-q) & p_1 & 0 & 0 & 0 \\
0 & 0 & 0 & 0 & 0 & 0 & 0 & 0 & 0 \\
0 & 0 & 0 & 0 & 0 & 0 & 0 & 0 & 0 \\
-\alpha & -\scriptsize\begin{matrix} \alpha q + \\ (1-\alpha) q + \\ \alpha (1-q) \end{matrix} & -\alpha & -\alpha & -\scriptsize\begin{matrix} \alpha q + \\ (1-\alpha) q + \\ \alpha (1-q) \end{matrix} & -\alpha & -\alpha & -\scriptsize\begin{matrix} \alpha q + \\ (1-\alpha) q + \\ \alpha (1-q) \end{matrix} & -\alpha \\
0 & 0 & 0 & 0 & 0 & 0 & 0 & 0 & 0 \\
0 & 0 & 0 & 0 & 0 & 0 & 0 & 0 & 0 \\
0 & 0 & 0 & 0 & 0 & 0 & 0 & 0 & 0 \\
0 & 0 & 0 & 0 & 0 & 0 & 0 & 0 & 0 \\
0 & 0 & 0 & 0 & 0 & 0 & 0 & 0 & 0 
\end{bmatrix} \\[1ex]
B_5 &= \begin{bmatrix}
0 & 0 & 0 & 0 & q p_1 & 0 & 0 & 0 & 0 \\
0 & 0 & 0 & p_2 & p_2 (1-r) & p_2 (1-r) & 0 & 0 & 0 \\
0 & 0 & 0 & 0 & 0 & 0 & 0 & 0 & 0 \\
0 & q (1-\alpha) & 0 & 0 & q (1-\alpha) & 0 & 0 & q (1-\alpha) & 0 \\
-\alpha & -\scriptsize\begin{matrix} \alpha r + \\ (1-\alpha) r + \\ \alpha (1-r) \end{matrix} & -\scriptsize\begin{matrix} \alpha r + \\ (1-\alpha) r + \\ \alpha (1-r) \end{matrix} & -\alpha & -\scriptsize\begin{matrix} \alpha r + \\ (1-\alpha) r + \\ \alpha (1-r) \end{matrix} & -\scriptsize\begin{matrix} \alpha r + \\ (1-\alpha) r + \\ \alpha (1-r) \end{matrix} & -\alpha & -\scriptsize\begin{matrix} \alpha r + \\ (1-\alpha) r + \\ \alpha (1-r) \end{matrix} & -\scriptsize\begin{matrix} \alpha r + \\ (1-\alpha) r + \\ \alpha (1-r) \end{matrix} \\
0 & 0 & 0 & 0 & 0 & 0 & 0 & 0 & 0 \\
0 & 0 & 0 & 0 & 0 & 0 & 0 & 0 & 0 \\
0 & 0 & 0 & 0 & 0 & 0 & 0 & 0 & 0 \\
0 & 0 & 0 & 0 & 0 & 0 & 0 & 0 & 0 
\end{bmatrix} \\[1ex]
B_6 &= \begin{bmatrix}
0 & 0 & 0 & 0 & 0 & 0 & 0 & 0 & 0 \\
0 & 0 & 0 & 0 & r p_2 & r p_2 & 0 & 0 & 0 \\
0 & 0 & 0 & p_3 & p_3 & p_3 & 0 & 0 & 0 \\
0 & 0 & 0 & 0 & 0 & 0 & 0 & 0 & 0 \\
0 & r (1-\alpha) & r (1-\alpha) & 0 & r (1-\alpha) & r (1-\alpha) & 0 & r (1-\alpha) & r (1-\alpha) \\
-\alpha & -\alpha & -\alpha & -\alpha & -\alpha & -\alpha & -\alpha & -\alpha & -\alpha \\
0 & 0 & 0 & 0 & 0 & 0 & 0 & 0 & 0 \\
0 & 0 & 0 & 0 & 0 & 0 & 0 & 0 & 0 \\
0 & 0 & 0 & 0 & 0 & 0 & 0 & 0 & 0 
\end{bmatrix} \\[1ex]
B_7 &= \begin{bmatrix}
0 & 0 & 0 & 0 & 0 & 0 & 0 & 0 & 0 \\
0 & 0 & 0 & 0 & 0 & 0 & 0 & 0 & 0 \\
0 & 0 & 0 & 0 & 0 & 0 & 0 & 0 & 0 \\
\alpha & \alpha (1-q) & \alpha & \alpha & \alpha (1-q) & \alpha & \alpha & \alpha (1-q) & \alpha \\
0 & 0 & 0 & 0 & 0 & 0 & 0 & 0 & 0 \\
0 & 0 & 0 & 0 & 0 & 0 & 0 & 0 & 0 \\
0 & -q & 0 & 0 & -q & 0 & 0 & -q & 0 \\
0 & 0 & 0 & 0 & 0 & 0 & 0 & 0 & 0 \\
0 & 0 & 0 & 0 & 0 & 0 & 0 & 0 & 0 
\end{bmatrix} \\[1ex]
B_8 &= \begin{bmatrix}
0 & 0 & 0 & 0 & 0 & 0 & 0 & 0 & 0 \\
0 & 0 & 0 & 0 & 0 & 0 & 0 & 0 & 0 \\
0 & 0 & 0 & 0 & 0 & 0 & 0 & 0 & 0 \\
0 & \alpha q & 0 & 0 & \alpha q & 0 & 0 & \alpha q & 0 \\
\alpha & \alpha (1-r) & \alpha (1-r) & \alpha & \alpha (1-r) & \alpha (1-r) & \alpha & \alpha (1-r) & \alpha (1-r) \\
0 & 0 & 0 & 0 & 0 & 0 & 0 & 0 & 0 \\
0 & q & 0 & 0 & q & 0 & 0 & q & 0 \\
0 & -r & -r & 0 & -r & -r & 0 & -r & -r \\
0 & 0 & 0 & 0 & 0 & 0 & 0 & 0 & 0 
\end{bmatrix} \\[1ex]
B_9 &= \begin{bmatrix}
0 & 0 & 0 & 0 & 0 & 0 & 0 & 0 & 0 \\
0 & 0 & 0 & 0 & 0 & 0 & 0 & 0 & 0 \\
0 & 0 & 0 & 0 & 0 & 0 & 0 & 0 & 0 \\
0 & 0 & 0 & 0 & 0 & 0 & 0 & 0 & 0 \\
0 & r \alpha & r \alpha & 0 & r \alpha & r \alpha & 0 & r \alpha & r \alpha \\
\alpha & \alpha & \alpha & \alpha & \alpha & \alpha & \alpha & \alpha & \alpha \\
0 & 0 & 0 & 0 & 0 & 0 & 0 & 0 & 0 \\
0 & r & r & 0 & r & r & 0 & r & r \\
0 & 0 & 0 & 0 & 0 & 0 & 0 & 0 & 0 
\end{bmatrix}
\end{align*}

\end{document}